\documentclass[11pt]{article}
\usepackage[a4paper,margin=30mm]{geometry}
\usepackage{amsmath,amssymb,amsthm,mathtools,bm}
\usepackage{microtype}
\usepackage{graphicx}
\numberwithin{equation}{section}
\usepackage{booktabs,array,tabularx}
\usepackage{xcolor}
\usepackage{hyperref}
\usepackage[nameinlink,capitalise]{cleveref}

\hypersetup{pdftitle={Microscopic Parametric Correlations and Spectral Rigidity in the Non-Hermitian Threefold Way},pdfauthor={Yan V. Fyodorov},pdfsubject={Manuscript v3},colorlinks=true,linkcolor=blue!55!black,citecolor=blue!55!black,urlcolor=blue!55!black}
\newcommand{\Tr}{\mbox{Tr}}
\newcommand{\Pf}{\mbox{Pf}}
\newcommand{\E}{\mathbb E}
\newcommand{\C}{\mathbb C}
\newcommand{\R}{\mathbb R}
\newcommand{\1}{\mathbf 1}
\newcommand{\dd}{\,\mathrm d}
\newcommand{\cK}{\mathcal K}
\newcommand{\cO}{\mathcal O}
\newcommand{\cV}{\mathcal V}
\newcommand{\AiD}{AI^\dagger}
\newcommand{\AiiD}{AII^\dagger}

\newcommand{\NLsM}{nonlinear sigma model}
\newcommand{\diag}{\mbox{diag}}
\newcommand{\Cov}{\mbox{Cov}}
\newcommand{\Var}{\mbox{Var}}
\newcommand{\FP}{\mbox{FP}}
\newcommand{\Jrep}{J_{\mathrm r}}
\newcommand{\cZ}{\mathcal Z}
\newcommand{\cI}{\mathcal I}
\newcommand{\cD}{\mathcal D}
\newcommand{\cM}{\mathcal M}
\newcommand{\cQ}{\mathcal Q}

\newtheorem{proposition}{Proposition}[section]

\newtheorem{conjecture}[proposition]{Conjecture}
\theoremstyle{definition}

\newtheorem{remark}[proposition]{Remark}

\title{Microscopic Parametric Correlations and Spectral Rigidity\\
in the Non-Hermitian Threefold Way}
\author{Yan V. Fyodorov\\[2mm]
Department of Mathematics, King's College London,\\
London WC2R 2LS, United Kingdom}
\date{}

\begin{document}
\maketitle

\begin{abstract}
We study microscopic spectral correlations between two nearby parameter
values in the three non-Hermitian Gaussian bulk classes: $A$
(unconstrained), $\AiD$ (complex symmetric), and $\AiiD$ (complex
self-dual, with each degenerate doublet counted once). Parameter dependence
is modelled by stationary matrix Ornstein--Uhlenbeck evolution, whose
microscopic time scale is $N^{-1}$. For every positive integer replica
number $n$, we derive exact finite-$N$ auxiliary-field integral
representations for joint characteristic-polynomial moments and obtain
their bulk asymptotics as $N\to\infty$ with $n$ fixed. With the spectrum
normalized to the unit disk, correlations near a bulk point $z_0$, between
spectral positions $z_1,z_2$ at times $t_1,t_2$, depend only on
$s=N|z_1-z_2|^2+N(1-|z_0|^2)|t_1-t_2|$. Adopting the static
Hermitian/non-Hermitian replica continuation yields explicit two-time
kernels; for $\AiD$ and $\AiiD$ these are replica conjectures and, to our
knowledge, the first microscopic two-time kernels proposed for these
classes. They determine density correlations, cross-time eigenvalue-count
covariances, and equal-time number variances. For a disk of mean eigenvalue
count $y$, the latter obey
$\mbox{Var}\mathcal N_X(D_y)=\kappa_X\sqrt y+
\beta_X/\sqrt y+o(y^{-1/2})$, with
$\kappa_{\AiD}>\kappa_A>\kappa_{\AiiD}$. Exact perturbation theory further
identifies eigenvector nonorthogonality as the mechanism of short-time
spectral diffusion and yields a basis-independent condition number for a
two-dimensional Kramers eigenspace in class $\AiiD$. Direct simulations
agree quantitatively with its conjectured inverse-gamma law.
\end{abstract}

\section{Introduction}

Random-matrix theory entered quantum chaos primarily as a theory of
spectral statistics at a fixed Hamiltonian, most famously through the
Bohigas--Giannoni--Schmit paradigm \cite{BohigasGiannoniSchmit1984}.
Parametric spectral statistics asks the complementary dynamical question.
Given a family of Hermitian Hamiltonians $H(x)$ depending on an external
control $x$, for example a magnetic flux, a boundary deformation, a gate
voltage, or the disorder strength, one studies how its eigenvalues $E_n(x)$ and
associated eigenvectors respond when $x$ is varied.  The infinitesimal observables
are the level velocities $\dot E_n(x)$ and the level curvatures
$\ddot E_n(x)$, whereas finite parameter differences lead to
two-parameter density correlations, spectral cross-form factors, and the
parametric number variance
\cite{GaspardEtAl1990,GoldbergEtAl1991}.  These quantities contain
information which is absent from fixed-parameter level statistics.
Indeed, by the Hellmann--Feynman relation a level velocity probes a
diagonal matrix element of the perturbation, whereas a curvature contains
in addition off-diagonal matrix elements divided by level spacings and is
therefore especially sensitive to avoided crossings.  Universal curvature
distributions and their algebraic tails were obtained in random-matrix
models in \cite{ZakrzewskiDelande1993,vonOppen1994,FyodorovSommers1995},
and velocity statistics were used to probe eigenfunction fluctuations and
localization in disordered systems
\cite{Fyodorov1994,FyodorovMirlin1995}.

A central achievement of the early 1990s was the discovery that, for a
spatially extended perturbation, the full parametric correlations become
universal once the parameter difference is measured in the natural scale
fixed by the mean-square level velocity
\cite{SzaferAltshuler1993,SimonsAltshulerVelocity1993,
SimonsAltshulerUniversality1993}. The same universal functions emerged from several complementary
descriptions.  The \NLsM\ treated the microscopic regime in which levels
move by approximately one mean spacing.  A particularly relevant
Hermitian symplectic application was given by Kravtsov and Zirnbauer, who
used the zero-mode \NLsM\ for the flux-driven Gaussian
symplectic-to-unitary crossover to extract persistent-current statistics
from parametric two-level correlations
\cite{KravtsovZirnbauer1992}.  Since the current carried by a level is
$-\partial E_\nu/\partial\Phi$, this is a level-velocity problem in which
the flux lifts a Kramers degeneracy.  Perturbative diagrammatics described
the overlapping mesoscopic regime, while semiclassical periodic-orbit
theory provided a parallel large-separation description and related the
parametric correlation scale to the underlying classical dynamics
\cite{BerryKeating1994,OzorioLewenkopfMucciolo1998}.

  In the Brownian-motion
formulation, originating with Dyson \cite{Dyson1962,Beenakker1993}, the
matrix parameter becomes a stochastic time and the eigenvalues form an
autonomous interacting diffusion whose Fokker--Planck generator, after a
similarity transformation, becomes an imaginary-time
Calogero--Sutherland Hamiltonian.  Thus in the self-adjoint setting
parametric spectral correlators can be read as correlation functions of
an interacting one-dimensional quantum gas
\cite{SimonsLeeAltshuler1993,BeenakkerRejaei1994,NagaoForrester1998}.
Predictions of this theory were tested in atomic spectra and in
wave-chaotic experiments \cite{SimonsEtAl1993,BertelsenEtAl1999}.  It was
also understood that local or finite-rank perturbations define a
different response problem and need not share the scaling functions of
global perturbations \cite{MarchettiSmolyarenkoSimons2003}.

The non-Hermitian extension is not obtained merely by allowing the level
coordinates to become complex.  Denoting by $z_n(x)$ a simple eigenvalue
with biorthogonally normalized left and right eigenvectors, one has
$\dot z_n=L_n^*\dot X\,R_n$, so that the velocity depends explicitly on
eigenvector nonorthogonality.  Under non-Hermitian matrix Brownian motion
the eigenvalues therefore no longer form an autonomous diffusion: their
quadratic covariations are governed by the left--right overlap matrix
\cite{ChalkerMehlig1998,ChalkerMehlig2000,BourgadeDubach2020,
BourgadeCipolloniHuang2024}. Parametric spectral motion is in this sense an eigenvector observable in
disguise, as already demonstrated in open chaotic scattering through the
relation between perturbation-induced resonance-width shifts and
resonance-state nonorthogonality \cite{FyodorovSavin2012}. 
The complex Ginibre ensemble provides the basic class-$A$ laboratory for
this phenomenon.  A modern account of its static eigenvalue correlations
and eigenvector-overlap statistics is given by Byun and Forrester
\cite{ByunForrester2025}.

The bulk of non-Hermitian random-matrix theory has a threefold
organization.  Besides the symmetry-free complex class $A$, transposition
symmetry gives the complex-symmetric class $\AiD$ and the complex
self-dual class $\AiiD$, whose Hermitian partners have Dyson indices
$\beta=2,\,1,\,4$, respectively.  This non-Hermitian counterpart of the
Wigner--Dyson threefold way was identified in
\cite{HamazakiKawabataKuraUeda2020}.  Two recent breakthroughs have
supplied complementary static descriptions of it.  The replica duality of
Chen, Xiao, Liu, and Ryu gives closed bulk two-point functions by
analytic continuation from the Hermitian Wigner--Dyson classes
\cite{ChenXiaoLiuRyu2026}, whereas the Calogero construction of Xiao,
Chen, Liu, and Ryu gives exact finite-$N$ joint eigenvalue densities as
scattering states of an integrable many-body Hamiltonian
\cite{XiaoChenLiuRyu2026}.  The latter result is especially important
conceptually: outside class $A$ the joint density is not a Coulomb gas
with pairwise interactions.  This exact static Calogero scattering
representation suggests a possible non-Hermitian counterpart of the
Hermitian dynamical mapping, a formulation which would have to
incorporate the eigenvector-overlap degrees of freedom that make
non-Hermitian eigenvalue motion non-autonomous.  The class-$A$
microscopic parametric kernel derived in our earlier work with
Lacroix-A-Chez-Toine \cite{FyodorovLacroix2026} provided a first step in
this direction.  The present work extends that programme to the full
non-Hermitian threefold way, identifying eigenvector nonorthogonality as
the mechanism which governs short-time spectral diffusion.

The complex-symmetric class $\AiD$ has a direct and long-established
physical origin in open quantum scattering.  For a
time-reversal-invariant chaotic system the poles of the scattering matrix
are eigenvalues of an effective complex-symmetric non-Hermitian
Hamiltonian, a structure used by Sommers, Fyodorov, and Titov to describe
analytically the crossover from isolated to overlapping resonances
\cite{SommersFyodorovTitov1999}.  More recently Kurilov and Ostrovsky \cite{KurilovOstrovsky2026}
developed a \NLsM\ approach to the density and the width distribution of
scattering resonances in disordered systems, applicable to an arbitrary
symmetry class and to arbitrary coupling to the external channels.

On the random-matrix side, recent analytical work on the transposition
classes includes characteristic-polynomial identities and dualities
\cite{AkemannAygunKieburgPassler2025,Forrester2025}, spectral-density and
eigenvector-overlap statistics in the complex-symmetric class $\AiD$
\cite{AkemannFyodorovSavin2026}, and fermionic nonlinear sigma models
\cite{KulkarniKawabataRyu2025}.  A complementary study of local spacing
statistics across the full threefold way derives finite-$N$
complex-spacing-ratio formulas in class $A$ and compares numerically the
spacing ratios and the nearest- and next-nearest-neighbour distributions
of all three classes, both in the bulk and at the spectral edge
\cite{AkemannEtAl2026Edge}. Transposition symmetry also arises naturally in open quantum dynamics: it
was realized explicitly in dissipative quantum circuits and incorporated
into the symmetry classification of many-body Lindbladians
\cite{SaRibeiroProsen2021,SaRibeiroProsen2023}.  Its consequences for
open-system spectral statistics, including non-Hermitian Hamiltonians in
classes $\AiD$ and $\AiiD$, are further discussed in
\cite{KawabataEtAl2023}. Complex spectral form factors were introduced and analyzed in the
weakly non-Hermitian crossover by Fyodorov, Khoruzhenko, and Sommers
\cite{FyodorovKhoruzhenkoSommers1997,FyodorovKhoruzhenkoSommers1998}.
Their later use, alongside complex level-spacing statistics, as diagnostics
of dissipative quantum chaos was developed in
\cite{AkemannKieburgMielkeProsen2019,LiProsenChan2021}.

There is at the same time a rigorous dynamical theory on mesoscopic
scales.  In class $A$, Bourgade, Cipolloni, and Huang
\cite{BourgadeCipolloniHuang2024} consider the non-Hermitian matrix
Ornstein--Uhlenbeck process
\[
 \dd X_t=\frac{1}{\sqrt N}\,\dd B_t-\frac12X_t\,\dd t,
\]
where the entries of $B_t$ are independent complex Brownian motions and
the stationary distribution is the complex Ginibre ensemble.  For
$t_1<t_2$ the transition law can equivalently be written as
\[
 X_{t_2}\stackrel{\mathrm d}{=}
 e^{-(t_2-t_1)/2}X_{t_1}+\sqrt{1-e^{-(t_2-t_1)}}\,G,
\]
with $G$ an independent Ginibre matrix.  The correlated pair used in the
present paper is therefore precisely a pair of observations of the
stationary matrix process at two different times, with
$\alpha=e^{-|t_1-t_2|/2}$.

Writing $\lambda_1(t),\ldots,\lambda_N(t)$ for the eigenvalues of $X_t$,
Bourgade \emph{et al.} study the joint fluctuations of smoothed spectral
observables, also called linear statistics,
$L_N(f,t)=\sum_j f(\lambda_j(t))-\E\sum_j f(\lambda_j(t))$, evaluated at
several times.  Their mesoscopic regime uses spectral windows of radius
$N^{-\eta}$ with $0<\eta<1/2$: such a window shrinks on the macroscopic
scale, yet still contains a diverging number of eigenvalues.  They prove
multitime central limit theorems, including for suitable
non-equilibrium independent identically distributed initial data.  Near a
fixed bulk point $z_0$ the limiting covariance depends on spectral and
temporal separation through $|z-w|^2+\chi_0|t_1-t_2|$, where $z,w$ are
spectral positions and $\chi_0=1-|z_0|^2$, so that the natural parabolic
distance is
\[
 d_{\mathrm{par}}\bigl((z,t_1),(w,t_2)\bigr)
 =\bigl(|z-w|^2+\chi_0|t_1-t_2|\bigr)^{1/2},
\]
a time difference thus carrying the scaling dimension of a squared
spectral distance \cite{BourgadeCipolloniHuang2024}.

The same work proves that the limiting space--time field is not
Markovian: its future fluctuations retain information about the past
which is not contained in the field at a single intermediate time.
Bourgade \emph{et al.} trace this memory to the evolving eigenvectors.
More concretely, let $O_{ii}=\|R_i\|^2\|L_i\|^2$ denote the diagonal
left--right eigenvector overlap, with $L_i^*R_i=1$.  After the
corresponding overlap-weighted spectral fields are smeared against smooth
test functions on mesoscopic windows, their limiting two-time covariance
is obtained by integrating the two probes against
$\chi_0^2/(|z-w|^2+\chi_0|t_1-t_2|)^2$.  Since the denominator is
$d_{\mathrm{par}}^4$, this kernel decays algebraically as
$d_{\mathrm{par}}^{-4}$.  Such quartic decay reflects the persistence of correlations mediated by
the evolving eigenvectors.

Their theory concerns smooth linear statistics on scales larger than the
mean eigenvalue spacing.  We instead resolve the complementary regime
$|z-w|\asymp N^{-1/2}$ and $|t_1-t_2|\asymp N^{-1}$.  At these scales a
spectral window contains only $O(1)$ eigenvalues, so smoothing no longer
suppresses the short-time contribution associated with the motion of an
individual eigenvalue.  Our organizing object is a space--time covariance
potential $\cK_X$, which we call the ``eigenvector-memory kernel'', rather
than the static pair correlation itself.  To introduce its time dependence,
consider the stationary matrix Ornstein--Uhlenbeck process at observation
times $t_1,\ldots,t_m$, with $|t_i-t_j|=O(N^{-1})$.  Writing $\lambda_k(t_i)$ for its eigenvalues,
define the microscopic counting density near a bulk point $z_0$ by
\begin{equation}
 \rho_{i,N}(\xi)
 =
 \sum_{k=1}^N
 \delta^{(2)}
 \!\left(\xi-\sqrt N[\lambda_k(t_i)-z_0]\right).
 \label{eq:microscopicden}
\end{equation}
In class $\AiiD$ each degenerate doublet is counted once, and the
limiting mean density in these coordinates is $1/\pi$.  Although
$\rho_{i,N}$ is written as a density, it is a sum of delta functions:
covariance formulas involving it are understood after integration against
smooth probe functions.

For $t_i\ne t_j$, define the dimensionless connected two-time density
correlation $c_{ij}^X$ by
\[
 \E\rho_i(\xi)\rho_j(\eta)
 -
 \E\rho_i(\xi)\E\rho_j(\eta)
 =
 \frac{1}{\pi^2}c_{ij}^X(u),
 \qquad
 u=|\xi-\eta|^2,
\]
so that, denoting by $z$ and $w$ the corresponding points in the original
spectral plane, one has $u=N|z-w|^2$.  At equal times $c_{ii}^X$ denotes
the distinct-eigenvalue part of the connected correlation, the full
covariance containing in addition the self-correlation
$\pi^{-1}\delta^{(2)}(\xi-\eta)$.

Setting now $\chi_0=1-|z_0|^2$, $\tau_{ij}=N|t_i-t_j|$, and
$a_{ij}=\chi_0\tau_{ij}$, the central conjecture of the present paper is
the unified relation
\[
 c_{ij}^X(u)
 =
 \frac{\dd^2}{\dd u^2}
 \left[u^2\cK_X(u+a_{ij})\right],
\]
with the memory kernel $\cK_X$ obtained explicitly in closed form for all
three symmetry classes in the generic spectral bulk.  All dependence on
the symmetry class is thus carried by $\cK_X$, whereas spectral
separation and time enter universally through the single combination
$u+a_{ij}$.  At equal times the formula gives the static distinct-level
correlation, while at distinct times it describes the progressive loss of
spectral memory.  Because the Ornstein--Uhlenbeck process is stationary,
applying the formula to every pair $(t_i,t_j)$ produces the full
covariance matrix of density observables at any finite collection of
times; integrating the same space--time density covariance over spectral
windows gives in turn the pairwise multitime covariance matrix of
eigenvalue counts.  Differentiation of $\cK_X$ therefore produces
microscopic density correlations, whereas integration produces counting
statistics.  Here ``multitime'' means this complete second-order
covariance structure: the present calculation determines only the two-point intensity, not joint
intensities at three or more spectral positions.

This relation has a nontrivial consequence.  Setting
$F_X(u)=u^2\cK_X(u)$, the static distinct-level correlation obtained in
\cite{ChenXiaoLiuRyu2026} and reproduced below in \eqref{eq:static} fixes
$F_X''(u)$ but not $F_X(u)$ itself, and therefore leaves the two affine
contributions $b_0+b_1u$ undetermined, which in terms of $\cK_X$ are the
invisible modes $b_0/u^2+b_1/u$.  Short-distance regularity, or
equivalently the absence of an additional $u^{-2}$ singularity
incompatible with a finite covariance measure, rules out the first mode.
The second is fixed by the diagonal part of the equal-time covariance:
every eigenvalue is correlated with itself, giving a planar delta mass
whose coefficient equals the local mean density, namely $1/\pi$ in our
bulk normalization.  This self-correlation is called the ``contact term''
in field-theory language and the diagonal part of the second-moment
measure in point-process language.

At a positive time difference the contact mode broadens into a unit-mass
short-distance component of the endpoint cross-covariance.  Under the
shift $u\mapsto u+a$ the mode $1/u$ produces $2a^2/(u+a)^3$, which
converges back to the delta mass as $a\downarrow0$.  In the further
scaling $u=ax$, $a\downarrow0$, eigenvalue perturbation theory identifies
this profile with the displacement of a locally followed eigenvalue in
the classes for which the overlap law is known.  At fixed positive $a$, however, the endpoint spectra alone do not define a
pathwise matching of their eigenvalues.  The complementary sum rule follows
instead from the requirement that the connected correlation vanish at large
spectral separation.

For class $A$ the microscopic parametric density correlation and the
associated disk-count covariance were derived in our earlier joint work
with Lacroix-A-Chez-Toine by the replica $Q$-matrix method
\cite{FyodorovLacroix2026}.  In a companion manuscript in preparation
with Cuppone \cite{CupponeFyodorov2026} we obtain an independent
Kac--Rice derivation of the same class-$A$ spectral formula.  That work
starts from an exact finite-$N$ Kac--Rice representation for two
eigenvalues and their normalized right eigenvectors, deriving the
corresponding right-eigenvector-resolved microscopic bulk density by
controlled saddle asymptotics and recovering the parametric spectral
two-point function of \cite{FyodorovLacroix2026} after integration over
the right-eigenvector overlap.

Accordingly, class $A$ serves here as a calibration rather than as a new
result.  The new contributions are the following:
(i) the exact finite-$N$ correlated auxiliary-field integrals for $\AiD$
and $\AiiD$,
(ii) their large-$N$ saddle analysis at fixed positive integer $n$,
showing that the parameter separation enters through the additive shift
$u\mapsto u+a$,
(iii) the resulting replica conjectures for the two-time correlations,
(iv) a unified counting formalism with explicit subleading rigidity
coefficients, and
(v) the intrinsic Kramers-plane diffusivity.
To our knowledge the proposed
$\AiD$ and $\AiiD$ kernels are the first microscopic two-time kernels in
these classes.  The genuinely parametric input is the covariance
deformation of the exact finite-$N$ integral, whose large-$N$ saddle
contribution produces the combined variable $u+a$.

\subsection{A guide to logical status}

Several logically distinct steps enter the calculation.  Throughout the
paper we call a spectral statement a \emph{conjecture} when it requires
analytically continuing the fixed-positive-integer-$n$ formulas to $n=0$
and interchanging this continuation with the large-$N$ limit.  The exact
finite-$N$ auxiliary-field identities and the saddle-point reduction at
fixed positive integer $n$ are not conjectural.  The following table
summarizes the status of the principal results.

\begin{center}
\small
\begingroup
\renewcommand{\arraystretch}{1.08}
\begin{tabularx}{\textwidth}{
  >{\raggedright\arraybackslash}p{42mm}
  >{\raggedright\arraybackslash}p{31mm}
  X}
\toprule
object & status & meaning\\
\midrule
Covariance tensors, Grassmann averages, auxiliary-field identities, and
Kramers-plane perturbation theory
&
exact at finite $N$
&
no replica continuation is involved\\
\addlinespace[4pt]

Compact-group reduction
&
fixed positive integer $n$
&
a finite-dimensional Laplace asymptotic, uniform in the stated bulk regime\\
\addlinespace[4pt]

Two-time $\AiD$ and $\AiiD$ density kernels
&
replica conjecture
&
requires analytic continuation to $n=0$ and interchange with
$N\to\infty$\\\addlinespace[4pt]

Abel formulas and rigidity coefficients
&
conditional corollaries
&
exact calculations from the displayed kernels; their interpretation as
spectral limits inherits the replica status\\
\addlinespace[4pt]

Eigenvector-overlap statistics: two-eigenvalue correlations and the
$\AiiD$ Kramers-plane overlap distribution
&
exact in class $A$; conjectural for $\AiD$ and $\AiiD$
&
would require an exact finite-$N$ calculation with explicit
eigenvector-overlap insertions, or an independent eigenvector derivation\\
\bottomrule
\end{tabularx}
\endgroup
\end{center}

Thus exact identities, controlled fixed-$n$ asymptotics, and replica
conjectures are not used interchangeably.  The distinction is
mathematical, but it is also physically useful: it identifies precisely
the step in the $\sigma$-model calculation which still lacks a rigorous
mathematical justification.

\subsection{Two dictionaries for the same problem}

It is convenient to pause and translate the principal objects.  In
field-theory language one differentiates a replicated generating function
twice, once for each observation time; equivalently, one asks for the
joint density of finding an eigenvalue near each of two prescribed points
and subtracts the product of the one-point densities.  In point-process
language this is called a two-time cross-intensity, or its connected
covariance measure.  The radial density is $c_a^X$, whereas $\cK_X$ is
its twice-integrated potential, and integration over two windows gives
the covariance of their eigenvalue counts.  Thus
\[
 \begin{array}{c|c}
 \text{point-process language}&\text{field-theory language}\\ \hline
 \text{two-time cross-intensity}&\text{two density insertions}\\
 \text{contact-mode component}&\text{diagonal/self contraction}\\
 \text{covariance potential }\cK_X&\text{soft-mode propagator with sources}\\
 \text{large-window boundary term}&\text{infrared memory contribution}
 \end{array}
\]
The covariance and disk-overlap identities implement this dictionary
literally.  The identification of the spectral potential with an
overlap-weighted two-eigenvalue correlation is independently established
in class $A$, whereas its extension to the other two classes is discussed
separately in \cref{sec:overlaps}.

There is a second dictionary concerning eigenvectors.  In numerical
linear algebra $O_{ii}$ is the square of an eigenvalue condition number,
whereas in quantum-mechanical and wave scattering by open systems it
appears as the ``Petermann nonorthogonality factor'', see e.g. \cite{PatraSchomerusBeenakker2000,SchomerusFrahmPatraBeenakker2000}. 
 Its stochastic meaning is particularly direct: for the class-$A$ Ornstein--Uhlenbeck
process the eigenvalue martingale satisfies
$\dd\langle\lambda_i,\overline{\lambda_i}\rangle_t=O_{ii}(t)\,\dd t/N$, so
that $O_{ii}/N$ is the instantaneous diffusion coefficient of the
eigenvalue in the complex plane
\cite{BourgadeDubach2020,BourgadeCipolloniHuang2024}.  The
complex-symmetric class $\AiD$ carries the corresponding factor
$2O_{ii}/N$, whereas in the self-dual class $\AiiD$ the role of the
scalar overlap is played by the basis-independent Kramers-plane invariant
derived below.  We refer to the appropriately normalized overlap either
as a dimensionless eigenvector overlap or, when emphasizing its dynamical
role, as a normalized eigenvalue diffusivity, the overlap terminology
itself originating in \cite{ChalkerMehlig1998,ChalkerMehlig2000}.  These
names emphasize different uses of the same biorthogonal invariant.

For a normal matrix the left and right eigenvectors may be chosen
identical and orthonormal, so that the diagonal overlap is $O_{ii}=1$.
For the complex Ginibre ensemble the diagonal bulk overlap $O_{ii}$ is
typically of order $N$ \cite{ChalkerMehlig1998,ChalkerMehlig2000}, and
after normalization by its position-dependent bulk scale it converges to
a heavy-tailed inverse-gamma distribution
\cite{BourgadeDubach2020,Fyodorov2018}.  This is why a matrix
perturbation of entry size $N^{-1}$ already moves an eigenvalue by one
microscopic spacing.  The scale $|t_1-t_2|\asymp N^{-1}$ is therefore not
guessed from eigenvalue repulsion: it follows from eigenvector
amplification.

\subsection{What is genuinely parametric here?}

A static two-point correlation compares two distinct eigenvalues of the
same matrix, whereas a parametric correlation compares the spectra of two
correlated matrices.  At equal parameter values the spectral-density
covariance contains a diagonal delta-function contribution,
conventionally removed from $R_2$, together with the distinct-level
correlation.  At a positive parameter separation this contact
contribution broadens into a finite-width component and must be retained
in the full two-time density correlation.  In the short-time limit
perturbation theory identifies its leading peak with the displacement of
a locally followed eigenvalue, and it is in this asymptotic sense that we
call it the ``self branch''.

For a fixed positive parameter separation, however, assigning the
endpoint eigenvalues to individual trajectories requires knowledge of the
matrix evolution at all intermediate times together with a prescription
for continuously tracking its eigenvalues.  The two-matrix intensity
depends only on the endpoint spectra and does not determine such a
pathwise assignment.

This distinction explains why the parametric correlation cannot be
reconstructed uniquely from the known static pair function $R_2^X(u)$
obtained in \cite{ChenXiaoLiuRyu2026} by the formal substitution
$u\mapsto u+a$.  The static answer determines a second derivative, and
contact normalization fixes the coefficient of the singular mode.  The
velocity calculation in \cref{sec:self} independently reproduces this
normalization, identifying $a$ as the natural mean-square scale of
short-time eigenvalue motion.  The singular mode is thus the dynamical
continuation of the equal-time self-correlation, broadened by
nonorthogonality-enhanced eigenvalue diffusion.

\section{The three Gaussian ensembles and microscopic variables}

\subsection{Ensembles}

We normalize all limiting spectra to the unit disk. In class $A$ \cite{ByunForrester2025},
$X$ is an $N\times N$ complex Ginibre matrix with $\E X_{ij}=0$ and
$\E X_{ij}\overline{X_{kl}}=N^{-1}\delta_{ik}\delta_{jl}$.
In class $\AiD$, $X=X^T$ is complex symmetric.  Set
$X=(G+G^T)/\sqrt2$, with $G$ a class-$A$ matrix.  Then
\begin{equation}
 \E X_{ij}\overline{X_{kl}}
 =\frac1N(\delta_{ik}\delta_{jl}+\delta_{il}\delta_{jk}),
 \qquad \E X_{ij}X_{kl}=0.
 \label{eq:AIcov}
\end{equation}
Thus an off-diagonal entry has variance $1/N$ and a diagonal entry has
variance $2/N$.  No further global rescaling is intended.  The second term, $\delta_{il}\delta_{jk}$, is the permuted-index contraction.
For the bilinear governing the first-order motion of a complex-symmetric
eigenvalue, it contributes the same amount as the direct contraction
$\delta_{ik}\delta_{jl}$ and therefore doubles the conditional eigenvalue
velocity variance.

For class $\AiiD$, matrices act on $\C^{2N}$.  Let
\begin{equation}
 \mathbb J=\1_N\otimes\begin{pmatrix}0&1\\-1&0\end{pmatrix},
 \qquad \mathbb J^T=-\mathbb J,\qquad \mathbb J^2=-\1_{2N}.
 \label{eq:physicalJ}
\end{equation}
Let $\{\xi_{ij},\eta_{ij}\}_{i,j=1}^{2N}$ be independent standard real
Gaussian variables and set
$G_{ij}=(\xi_{ij}+i\eta_{ij})/\sqrt{4N}$.  Thus the entries of $G$ are
independent and satisfy
$\E G_{ij}=0$, $\E|G_{ij}|^2=1/(2N)$, and $\E G_{ij}^2=0$. Set
\begin{equation}
 X=\frac{G+\mathbb JG^T\mathbb J^{-1}}{\sqrt2},\qquad
 X=\mathbb JX^T\mathbb J^{-1}.
 \label{eq:selfdual}
\end{equation}
A direct contraction gives the complete tensor
\begin{equation}
 \E X_{ij}\overline{X_{kl}}
 =\frac1{2N}(\delta_{ik}\delta_{jl}+\mathbb J_{il}\mathbb J_{jk}),
 \qquad \E X_{ij}X_{kl}=0.
 \label{eq:AIIcov}
\end{equation}
Equivalently, $Y=X\mathbb J$ is complex antisymmetric and its independent
entries $Y_{ij}$, $i<j$, have variance $1/(2N)$.  This equivalence is useful
for the Pfaffian generating function below.

Almost surely each distinct self-dual eigenvalue has multiplicity two and a
two-dimensional eigenspace.  Indeed, the discriminant of the degree-$N$
Pfaffian polynomial is nonzero for a diagonal matrix with distinct doublet
values, so its vanishing is a Gaussian null event. Let
\[
 V_\lambda
 =
 \bigcup_{k\geq1}\ker\!\left[(X-\lambda\1_{2N})^k\right]
 =
 \ker\!\left[(X-\lambda\1_{2N})^{2N}\right]
\]
be the generalized eigenspace associated with $\lambda$. Self-duality implies that $V_\lambda$ and $V_\mu$ are symplectically
orthogonal whenever $\lambda\ne\mu$, and since the ambient symplectic form
is nondegenerate, its restriction to each $V_\lambda$ is nondegenerate as
well.  On the event that the reduced
Pfaffian polynomial has simple zeros, $\dim V_\lambda=2$.  In a symplectic
basis of this two-dimensional space, the restriction
$A=X|_{V_\lambda}$ satisfies $A^TJ_2=J_2A$, which forces
$A=\lambda\1_2$.  Hence $V_\lambda=\ker(X-\lambda\1)$: every eigenvalue is
a semisimple doublet. Every spectral sum here counts that doublet
once.  This
transposition-self-dual $\AiiD$ ensemble, defined by
$X=\mathbb JX^T\mathbb J^{-1}$, should not be confused with the
quaternion-real, or symplectic, Ginibre ensemble, defined by
$G=\mathbb J\overline G\mathbb J^{-1}$.  In the latter ensemble, generic
non-real eigenvalues occur in complex-conjugate pairs rather than as
degenerate doublets; its eigenvector angles and left--right overlaps were
studied by Dubach \cite{Dubach2021}.

 The limiting support of all three constructions above is the unit disk,
with $N$ distinct points.  For reference their Gaussian weights
are $e^{-N\Tr XX^\dagger}$, $e^{-(N/2)\Tr XX^\dagger}$, and
$e^{-N\Tr XX^\dagger}$ on their respective matrix spaces.

\subsection{Correlated copies}

Fix one of the three transposition-symmetry classes
$A,\AiD,\AiiD$, and let $X_1$ and $X_2$ be independent matrices drawn
from its Gaussian ensemble, with the normalization specified above. Set
\begin{equation}
 X_\alpha=\alpha X_1+\sqrt{1-\alpha^2}\,X_2,
 \qquad
 \alpha=1-\frac{v^2}{2N}+O(N^{-2}).
 \label{eq:interpolation}
\end{equation}
We observe the spectra near a fixed bulk point $z_0$, $|z_0|<1$, and write
\[
 \chi_0=1-|z_0|^2,
 \qquad
 z_1=z_0+\frac{\omega}{2\sqrt N},
 \qquad
 z_2=z_0-\frac{\omega}{2\sqrt N},
\]
\begin{equation}
 u=|\omega|^2=N|z_1-z_2|^2,
 \qquad
 a=v^2\chi_0,
 \qquad
 s=u+a.
 \label{eq:scales}
\end{equation}
For stationary Ornstein--Uhlenbeck evolution, $\alpha=e^{-|t_1-t_2|/2}$.
Consequently $|t_1-t_2|=\tau/N$ gives $a=\chi_0\tau+o(1)$.  Thus $u+a$ is exactly the
microscopic version of the parabolic distance found at mesoscopic scales.

For clarity, we represent the microscopic spectrum by its rescaled empirical
counting density (cf. \eqref{eq:microscopicden})
\begin{equation}
 \rho_{\sigma,N}(\xi)=\sum_{j=1}^N
 \delta^{(2)}\!\left(\xi-\sqrt N[\lambda_j(X_\sigma)-z_0]\right).
 \label{eq:microscopicfield}
\end{equation}
This is a random distribution, equivalently the density of the rescaled
eigenvalue counting measure.  Each distinct eigenvalue carries unit mass, and in class $\AiiD$ each
doubly degenerate eigenvalue is counted once.
Each sum has $N$ distinct points.  The density notation is physically
convenient although it contains delta functions.  Precisely, one first
smears it with a smooth probe $\varphi$,
\[
 \langle\rho_{\sigma,N},\varphi\rangle
 =\sum_j\varphi\!\left(\sqrt N[\lambda_j(X_\sigma)-z_0]\right),
\]
and the limiting covariance formulas below mean equality after smearing in
both variables.  This short test-function formulation is only a precise way
of reading the familiar density--density correlator.  Approximating the disk indicator from inside and outside by smooth
functions gives the corresponding disk-count covariance.

Write $\rho_0$ and $\rho_a$ for the limiting bulk-scaled counting densities
at the two parameter values, each having mean intensity $1/\pi$.  For $a>0$,
define $c_a^X$ by
\begin{equation}
 \E\rho_0(\omega/2)\rho_a(-\omega/2)
 -\E\rho_0(\omega/2)\E\rho_a(-\omega/2)
 =\frac1{\pi^2}c_a^X(u).
 \label{eq:defc}
\end{equation}
At $a=0$ and away from the coincident point $\omega=0$, we denote the
distinct-level connected correlation by $c_0^X(u)$, where
$u=|\omega|^2>0$. At equal parameters, the full static covariance, including the
coincident-point self-correlation, is the distribution
\begin{equation}
 C_0^X(\omega)
 =
 \frac1\pi\delta^{(2)}(\omega)
 +\frac1{\pi^2}c_0^X(|\omega|^2).
 \label{eq:staticmeasure}
\end{equation}
The first term is the contact contribution supported at $\omega=0$.
The distinct-level pair intensity $R_2^X$ satisfies
$\pi^2R_2^X(u)=1+c_0^X(u)\to1$.  This convention prevents a contact mass from
being lost when the positive-time formula is taken to zero time.

\subsection{Notation used throughout}

\begin{center}
\begin{tabularx}{\textwidth}{>{\raggedright\arraybackslash}p{28mm}X}
\toprule
symbol & meaning\\
\midrule
$X\in\{A,\AiD,\AiiD\}$ & one of the three transposition-symmetry bulk classes\\
$\beta=2,1,4$ & Dyson index of the Hermitian partner\\
$\chi_0=1-|z_0|^2$ & local bulk eigenvector-overlap and
eigenvalue-diffusivity factor\\
$u=N|z_1-z_2|^2$ & squared microscopic spatial separation\\
$a=v^2\chi_0$ & intrinsic microscopic parameter or time separation\\
$\cK_X$ & eigenvector-memory/eigenvalue-diffusivity kernel\\
$c_a^X$ & connected two-time microscopic density kernel\\
$K_X(a,y)$ & covariance of eigenvalue counts in a disk with mean count $y$\\
$\Sigma_X^2(y)$ & equal-time number variance\\
$\cV_X(a,y)$ & variance of the difference of the two counts\\
$\cO_K$ & basis-independent analogue of the squared eigenvalue condition number for
a Kramers doublet
\\ \\
\bottomrule
\end{tabularx}
\end{center}

\section{Main results: one kernel, several observables}

\subsection{The three kernels}

Define
\begin{equation}
 \mathfrak A(s)=\sinh^2\!\frac{s}{2}\,\frac{\dd}{\dd s}
 \left[\frac{sE_1(s/2)}{\sinh(s/2)}\right],
 \qquad
 E_1(x)=\int_x^\infty\frac{e^{-t}}t\dd t,
 \label{eq:Adef}
\end{equation}
and
\begin{equation}
 \mathfrak B(s)=\frac12e^{-2s}\frac{\dd}{\dd s}
 \left[se^s\operatorname{Shi}(s)\right],
 \qquad
 \operatorname{Shi}(s)=\int_0^s\frac{\sinh t}{t}\dd t.
 \label{eq:Bdef}
\end{equation}
The central objects of the present paper are the three memory kernels,
defined for $s>0$ by
\begin{equation}
\boxed{
\begin{aligned}
 \cK_A(s)&=\frac{1-e^{-s}}{s^2},\\
 \cK_{\AiD}(s)&=-\frac{2\mathfrak A(s)}{s^2},\\
 \cK_{\AiiD}(s)&=\frac{\mathfrak B(s)}{s^2}
 =\frac{e^{-s}}{2s^2}\bigl[(1+s)\operatorname{Shi}(s)+\sinh s\bigr].
\end{aligned}}
\label{eq:threeK}
\end{equation}
For quick comparison, their behaviour at the two limits is listed below,
where $\gamma$ denotes Euler's constant.
\begin{center}
\begin{tabular}{c c c}
\toprule
class $X$ & $s\downarrow0$ & $s\to\infty$\\
\midrule
$A$ & $s^{-1}-\tfrac12+\tfrac16s+O(s^2)$
& $s^{-2}+O(e^{-s}s^{-2})$\\
$\AiD$ & $s^{-1}-\tfrac12-\tfrac{s}{12}(\log s+\gamma-\log2-2)+\cdots$
& $2s^{-2}-4s^{-3}+12s^{-4}+\cdots$\\
$\AiiD$ & $s^{-1}-\tfrac12+\tfrac19s-\tfrac{2}{225}s^3+\cdots$
& $\tfrac12s^{-2}+\tfrac12s^{-3}+\tfrac34s^{-4}+\cdots$\\
\bottomrule
\end{tabular}
\end{center}

The common short-distance singularity is dynamical and local, whereas the
long-range coefficient remembers the Hermitian Dyson index:
\begin{equation}
 \cK_\beta(s)\sim\frac{2}{\beta s^2},
 \qquad \beta=1,2,4.
 \label{eq:infrared}
\end{equation}

\subsection{Density and counting transforms}

\begin{conjecture}[Microscopic two-time density correlation]
Use the analytic continuation of the compact-group integrals specified in
\cref{sec:replica}, with the same branch as in the static threefold-way
calculation.  For a positive parameter separation $a>0$, the predicted
dimensionless connected two-time density-correlation function, normalized
as in \eqref{eq:defc}, is
\begin{equation}
 \boxed{c_a^X(u)=\frac{\dd^2}{\dd u^2}
 \left[u^2\cK_X(u+a)\right].}
 \label{eq:masterdensity}
\end{equation}
\end{conjecture}
At equal parameters the same continuation reproduces the independently known
static result
\begin{equation}
 \pi^2R_2^X(u)=1+\frac{\dd^2}{\dd u^2}
 \left[u^2\cK_X(u)\right].
 \label{eq:static}
\end{equation}
Substituting the three explicit kernels in \eqref{eq:threeK} into the
static relation \eqref{eq:static} reproduces the closed bulk
pair-correlation functions obtained in \cite{ChenXiaoLiuRyu2026}. 
 The genuinely parametric input is derived in
\cref{sec:sigmamodel}: the exact correlated finite-$N$ auxiliary-field
identities are stated in \cref{prop:HS}.  The change in the auxiliary-field action generated by the matrix
correlation coefficient $\alpha$, defined by
$X_\alpha=\alpha X_1+\sqrt{1-\alpha^2}\,X_2$ in
\eqref{eq:interpolation}, is computed in \eqref{eq:masscalculation}.
  In the large-$N$ saddle-point analysis at fixed positive integer $n$,
this additional term combines with the microscopic spectral separation
$u$ to produce the single variable $u+a$; see
\cref{prop:fixedreplica}. These results establish the shift $u\mapsto u+a$ before replica
continuation: it cannot be inferred from agreement with the static formula
at $a=0$ alone.

For a disk $D$ in the bulk-scaled spectral plane with mean eigenvalue count
$y$, let $\mathcal N_0(D)$ and $\mathcal N_a(D)$ denote the numbers of eigenvalues of the two
matrices lying in $D$.  In class $\AiiD$, each degenerate doublet is counted
once.  Assuming the replica-predicted density formula
\eqref{eq:masterdensity}, the geometric disk-overlap identity
\eqref{eq:lenscovariance}, derived below, gives
\begin{equation}
 K_X(a,y):=\operatorname{Cov}(\mathcal N_0(D),\mathcal N_a(D))
 =\frac{y}{\pi}\int_0^{4y}
 \frac{\sqrt u\,\cK_X(u+a)}{\sqrt{4y-u}}\dd u.
 \label{eq:countcov}
\end{equation}
In particular,
\begin{equation}
 \Sigma_X^2(y):=\operatorname{Var}\mathcal N(D)
 =\frac{y}{\pi}\int_0^{4y}
 \frac{\sqrt u\,\cK_X(u)}{\sqrt{4y-u}}\dd u.
 \label{eq:numbervariance}
\end{equation}
The number-difference variance is
\begin{equation}
 \cV_X(a,y)=\operatorname{Var}[\mathcal N_a(D)-\mathcal N_0(D)]
 =2[\Sigma_X^2(y)-K_X(a,y)].
 \label{eq:diffvariance}
\end{equation}
Define $\cK_X^{\rm reg}(s)=\cK_X(s)-1/s$; it has the finite limit $-1/2$
at the origin, although its derivatives need not remain bounded there.
The contribution of $1/s$ can be integrated explicitly:
\begin{equation}
 \cV_X(a,y)=2y\sqrt{\frac{a}{a+4y}}
 +\frac{2y}{\pi}\int_0^{4y}
 \frac{\sqrt u[\cK_X^{\rm reg}(u)-\cK_X^{\rm reg}(u+a)]}
 {\sqrt{4y-u}}\dd u.
 \label{eq:squareroot}
\end{equation}
For fixed $y>0$, the leading response as $a\downarrow0$ is
$\cV_X(a,y)=\sqrt{ay}+O(a)$.  It is the counting-statistics image of the
small-parameter self peak. Equation \eqref{eq:squareroot} is an exact identity
between the proposed kernel transforms for every $a>0$ and separates the universal square-root response generated by the common
$1/s$ singularity from the symmetry-class-dependent contribution of the
regular part of the kernel. In particular, the first term produces the nonanalytic behavior
$\sqrt{ay}$ as $a\downarrow0$, while the second contributes only at order
$O(a)$.

\subsection{Rigidity coefficients}

All spectral-limit statements in this subsection are conditional on the
replica conjecture \eqref{eq:masterdensity}, whereas the integrations and
asymptotic evaluations of the displayed kernels are exact.
As $y\to\infty$, \eqref{eq:numbervariance} yields the perimeter law
$\Sigma_X^2(y)=\kappa_X\sqrt y+o(\sqrt y)$,
where
\begin{equation}
 \kappa_X=\frac1{2\pi}\int_0^\infty
 \sqrt u\,\cK_X(u)\dd u.
 \label{eq:kappa}
\end{equation}
The three values are
\begin{equation}
\begin{aligned}
 \kappa_{\AiD}
 &=\frac1{\sqrt\pi}\left[3-
 \frac{\pi+2\log(1+\sqrt2)}{2\sqrt2}\right]
 =0.7142944933\ldots,\\
 \kappa_A&=\frac1{\sqrt\pi}=0.5641895835\ldots,\\
 \kappa_{\AiiD}
 &=\frac{3\sqrt2-\log(1+\sqrt2)}{4\sqrt\pi}
 =0.4740979713\ldots.
\end{aligned}
\label{eq:kappas}
\end{equation}
This gives a precise sense in which static spectral rigidity is an integrated
record of eigenvector memory.

\section{\texorpdfstring{Class $A$}{Class A} as a normalization laboratory}
\label{sec:classA}

Class $A$ provides the best-calibrated benchmark.  Its equal-parameter
spectral correlations are determinantal \cite{ByunForrester2025}, its
diagonal eigenvector-overlap law is known independently
\cite{BourgadeDubach2020,Fyodorov2018}, and its parametric spectral
correlation has both a replica derivation and an independent
eigenvector-resolved Kac--Rice derivation in a companion manuscript
\cite{FyodorovLacroix2026,CupponeFyodorov2026}.  We therefore use class $A$
to fix and test all normalization conventions before turning to the
complex-symmetric class $\AiD$ and the complex self-dual class $\AiiD$.  Substitution of
$\cK_A$ into \eqref{eq:masterdensity} gives
\begin{equation}
 c_a^A(u)=\frac{\dd^2}{\dd u^2}
 \left[\frac{u^2}{(u+a)^2}(1-e^{-(u+a)})\right].
 \label{eq:classAcompact}
\end{equation}
At $a=0$ this reduces to $-e^{-u}$, and hence
$\pi^2R_2^A(u)=1-e^{-u}$,
the Ginibre bulk kernel.  This check simultaneously fixes the convention
$u=N|z_1-z_2|^2$ and the density normalization $1/\pi$.

The same formula isolates the contact-mode component,
\[
 \cK_A(s)=\frac1s-\frac{s-1+e^{-s}}{s^2},
\]
the first term of which gives $2a^2/(u+a)^3$, whereas the remainder has
total radial weight $-1$, cancelling its unit mass for $a>0$.  This is an
exact algebraic decomposition of the proposed connected kernel.  In the
additional limit $a\downarrow0$, $u=ax$, the first term is the leading
self contribution, whereas at finite $a$ the decomposition does not by
itself resolve individual eigenvalue trajectories.

There is also an eigenvector check.  Conditional on $X$, the real and
imaginary parts of the first-order eigenvalue velocity
$\dot\lambda_i=\ell_i^\dagger Yr_i$ are independent centered real Gaussian
variables with equal variance.  The corresponding displacement satisfies
$\delta\lambda_i=\varepsilon\dot\lambda_i+O(\varepsilon^2)$ with a Gaussian
leading term.  At a bulk point $z_0$, where
$\chi_0=1-|z_0|^2$, the complex Ginibre overlap $O_{ii}/(N\chi_0)$
converges to the inverse-gamma law with density
$p(m)=m^{-3}e^{-1/m}\1_{m>0}$, as obtained in
\cite{BourgadeDubach2020,Fyodorov2018}.
Integrating the conditional Gaussian density over $m$ gives
\[
 \int_0^\infty \frac{e^{-|p|^2/(am)}}{\pi am}\,
 m^{-3}e^{-1/m}\dd m
 =\frac{2a^2}{\pi(a+|p|^2)^3}.
\]
After removing the planar factor $\pi^{-1}$ and writing $u=|p|^2$, one
obtains the same rational expression as in \eqref{eq:selfpeak}.  Here $p$ is the
first-order displacement: comparison with the full spectral response concerns
the scaling $a\downarrow0$, $|p|^2/a$ fixed.  In that regime the independently
known overlap law checks the normalization of $a$.

For counts, \eqref{eq:numbervariance} becomes
\[
 \Sigma_A^2(y)=\frac{y}{\pi}\int_0^{4y}
 \frac{1-e^{-u}}{u^{3/2}\sqrt{4y-u}}\dd u.
\]
Taking the large-window limit of this integral gives the class-$A$
rigidity coefficient $\kappa_A$, namely the coefficient of the leading
$\sqrt y$ growth of the disk number variance:
\[
 \kappa_A
 =
 \frac1{2\pi}\int_0^\infty\frac{1-e^{-u}}{u^{3/2}}\dd u
 =
 \frac1{\sqrt\pi}.
\]
Every formula in the later classes has been calibrated against these three
tests: the static pair function, the same-level velocity mixture, and the
large-window count variance.

\section{Exact correlated integrals and the parameter mass}
\label{sec:sigmamodel}

\subsection{The generating function before any \texorpdfstring{large-$N$}{large-N} limit}

Write $X_+=X_1$ and $X_-=X_\alpha$.  Define the polynomial that counts each
distinct eigenvalue once by
\begin{equation}
 P_X(z;X)=\begin{cases}
 \det(z-X),&X=A,\AiD,\\
 \Pf[(z-X)\mathbb J]/\Pf(\mathbb J),&X=\AiiD.
 \end{cases}
 \label{eq:countingpolynomial}
\end{equation}
It is monic of degree $N$ in every class.  For an integer $n\ge1$ set
\begin{equation}
 \cZ_{N,n}^X(z_+,z_-;\alpha)
 =\E\prod_{\sigma=\pm}|P_X(z_\sigma;X_\sigma)|^{2n}.
 \label{eq:finitegeneration}
\end{equation}
Here $n$ is a positive integer counting algebraically independent copies,
or flavours, of the Grassmann variables used in the standard
Berezin-integral representations of determinants and Pfaffians
\cite{Berezin1987,Efetov1997}.  No analytic continuation in $n$ is used
at this stage.  Put $m=2n$, attach a sign $\sigma_a$ to each replica index,
and define
\begin{equation}
 Z=\diag(z_+\1_n,z_-\1_n),\qquad
 T_{ab}=\begin{cases}1,&\sigma_a=\sigma_b,\\
 \alpha,&\sigma_a\ne\sigma_b.
 \end{cases}
 \label{eq:replicacovariance}
\end{equation}
The array $T=(T_{ab})$ distinguishes auxiliary-field entries joining
replicas at the same parameter value from those joining the two different
parameter sectors.  It enters the auxiliary Gaussian measure entrywise,
through the variances of the individual field components.  We take $0<\alpha\le1$, the independent case $\alpha=0$ following by a
degenerate-Gaussian limit.

For an auxiliary $m\times m$ matrix $Q$, define
\begin{equation}
 \cD_Z(Q)=\begin{pmatrix}Z&iQ\\iQ^\dagger&\bar Z\end{pmatrix}.
 \label{eq:sourceblock}
\end{equation}
For two complex $m\times m$ matrices $P,Q$ also define the antisymmetric matrix
\begin{equation}
 \cM_Z(P,Q)=
 \begin{pmatrix}
 0&iQ&-Z&iP\\
 -iQ^T&0&iP^\dagger&-\bar Z\\
 Z&-i\bar P&0&i\bar Q\\
 -iP^T&\bar Z&-i\bar Q^T&0
 \end{pmatrix},\qquad
 p_Z(P,Q)=(-1)^m\Pf\cM_Z(P,Q).
 \label{eq:AIpfblock}
\end{equation}
The fixed sign makes $p_Z(0,0)=|\det Z|^2$.

For $0<\alpha\leq1$, define the normalized complex and real
auxiliary-field measures
\[
 \begin{aligned}
 d\mu_T^{\C}(Q)
 &=
 \prod_{a,b=1}^{m}
 \frac{N}{\pi T_{ab}}
 \exp\left\{-\frac{N|Q_{ab}|^2}{T_{ab}}\right\}d^2Q_{ab},\\
 d\mu_T^{\R}(Q)
 &=
 \prod_{a,b=1}^{m}
 \sqrt{\frac{N}{\pi T_{ab}}}
 \exp\left\{-\frac{NQ_{ab}^2}{T_{ab}}\right\}dQ_{ab}.
 \end{aligned}
\]
Here $d^2Q_{ab}=d\Re Q_{ab}\,d\Im Q_{ab}$.  Under
$d\mu_T^{\C}$ the entries are independent and satisfy
$\E Q_{ab}=\E Q_{ab}^2=0$ and $\E|Q_{ab}|^2=T_{ab}/N$, whereas under
$d\mu_T^{\R}$ they are independent real Gaussians satisfying
$\E Q_{ab}^2=T_{ab}/(2N)$.

\begin{proposition}[Exact finite-$N$ auxiliary-field integral representations]
\label{prop:HS}
With the normalized measures defined above, the exact finite-dimensional
integral representations are
\begin{align}
 \cZ_{N,n}^{A}
 &=\int d\mu_T^{\C}(Q)\,\det\cD_Z(Q)^N,
 \label{eq:HS_A}\\
 \cZ_{N,n}^{\AiD}
 &=\int d\mu_T^{\C}(P)\,d\mu_T^{\C}(Q)\,p_Z(P,Q)^N,
 \label{eq:HS_AI}\\
 \cZ_{N,n}^{\AiiD}
 &=\int d\mu_T^{\R}(Q)\,\det\cD_Z(Q)^N.
 \label{eq:HS_AII}
\end{align}
  These are polynomial identities in the
spectral variables $z_\pm$ and $\bar z_\pm$. In the last line $Q^\dagger=Q^T$.
\end{proposition}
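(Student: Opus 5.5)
The plan is the standard fermionic (Grassmann) route: represent each characteristic-polynomial factor as a Berezin integral, average the Gaussian matrices exactly, and decouple the resulting quartic Grassmann terms by Hubbard--Stratonovich transforms whose covariance is the array $T$ of \eqref{eq:replicacovariance}.

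\textbf{Step 1: Grassmann representation.} For classes $A$ and $\AiD$, write $|P_X(z_\sigma;X_\sigma)|^{2n}=\prod_{a:\sigma_a=\sigma}\det(z_\sigma-X_\sigma)\det(\bar z_\sigma-X_\sigma^\dagger)$. Represent each determinant as a Berezin integral over vectors $\psi_a,\chi_a\in\Lambda^N$, with $a=1,\dots,m$. The matrix enters the exponent linearly, through $\sum_a\bar\psi_a X_{\sigma_a}\psi_a+\sum_a\bar\chi_a X^\dagger_{\sigma_a}\chi_a$. For class $\AiiD$, use the Pfaffian form $\Pf[(z-X)\mathbb J]=\int d\theta\,\exp(\tfrac12\theta^T(z-X)\mathbb J\theta)$. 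Dividing by $\Pf\mathbb J$ makes the polynomial monic, and the fixed sign conventions are checked at $X=0$.

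\textbf{Step 2: exact Gaussian average.} Write $X_+=X_1$ and $X_-=\alpha X_1+\sqrt{1-\alpha^2}X_2$. Then the pair is jointly Gaussian with $\E X_{\sigma,ij}\overline{X_{\sigma',kl}}=T_{\sigma\sigma'}\times$ (the single-matrix tensor), and all holomorphic second moments vanish. Since the exponent is linear in $X$, its average is $\exp(\tfrac12\mathrm{Var})$, which gives a quartic form. For class $A$ this form is $-\tfrac1N\sum_{ab}T_{ab}(\bar\psi_a\cdot\chi_b)(\bar\chi_b\cdot\psi_a)$ up to sign conventions, so only $\psi$--$\chi$ bilinears appear. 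For class $\AiD$, the permuted contraction $\delta_{il}\delta_{jk}$ in \eqref{eq:AIcov} adds a second quartic term, built from transposed bilinears $\psi_a^T\chi_b$ and their conjugates. For class $\AiiD$, \eqref{eq:AIIcov}, equivalently the antisymmetric $Y=X\mathbb J$, yields the quartic in $\theta_a^T\theta_b$-type bilinears. The relevant bilinear matrix is real-symmetric in the replica sense, which is why the real measure $d\mu_T^{\R}$ appears with $Q^\dagger=Q^T$.

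\textbf{Step 3: Hubbard--Stratonovich decoupling.} Each quartic term has the form $\frac{T_{ab}}{N}B_{ab}\tilde B_{ab}$, where $B$ and $\tilde B$ are even (commuting) Grassmann bilinears. This is inverted exactly by the Gaussian identity
\[
e^{\frac{T}{N}B\tilde B}=\int\frac{N}{\pi T}e^{-N|q|^2/T}e^{i(qB+\bar q\tilde B)}\,d^2q.
\]
The identity holds exactly because both sides are polynomials in the nilpotent products, and the entrywise variance $T_{ab}/N$ matches $d\mu_T^{\C}$ by construction. Because $0<\alpha\le1$, every $T_{ab}>0$, so the measures are well defined. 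Class $\AiD$ requires two fields $P$ and $Q$, one for each contraction channel. Class $\AiiD$ requires one real field, since the bilinear is self-paired; its variance of $T_{ab}/(2N)$ reflects the $1/(2N)$ normalization of the tensor.

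\textbf{Step 4: Grassmann integration.} After the decoupling, the action is diagonal in the site index $i=1,\dots,N$. The Grassmann integral therefore factorizes into $N$ identical $2m$-dimensional (or $4m$-dimensional) integrals. These give $\det\cD_Z(Q)$ for classes $A$ and $\AiiD$, and $\Pf\cM_Z(P,Q)$, up to the sign $(-1)^m$, for class $\AiD$; raising to the $N$-th power gives the stated formulas. The polynomial-identity claim then follows because both sides are polynomials in $z_\pm,\bar z_\pm$, with Gaussian moments finite.

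\textbf{Main obstacle.} I expect the bookkeeping of signs, factors of $i$, and the block arrangement in $\cM_Z$ for class $\AiD$ to be the hardest part, along with the correct placement of $\mathbb J$ in class $\AiiD$. Ordering the Grassmann generators so that $\cM_Z$ is antisymmetric is delicate. I would fix all signs by checking the identities at $Q=P=0$, where $p_Z(0,0)=|\det Z|^2$, and then checking the $O(1/N)$ term against a direct second-moment computation for $N=1$.
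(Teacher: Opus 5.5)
Your proposal follows the paper's proof essentially step for step. The steps are the Berezin representation, then the exact Gaussian average giving the $T_{ab}$-weighted quartic terms ($A_{ab}B_{ba}$; the extra exchange term $C_{ab}D_{ab}$ in $\AiD$; the self-paired square $F_{ab}^2$ with $F_{ab}=\sum_i\psi_i^a\chi_i^b$ in $\AiiD$), then complex or real Hubbard--Stratonovich decoupling, and finally factorization over sites. In $\AiiD$ the factorization is over the $N$ Kramers $2\times2$ blocks of $\mathbb J$, where regrouping the local Grassmann variables into $(\psi_q,\chi_p)$ and $(\psi_p,-\chi_q)$ turns each block integral into $\det\cD_Z(Q)$.

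Correct two small slips. First, with the factors of $i$ as you wrote them, the scalar identity produces $e^{-TB\tilde B/N}$, which is exactly the sign your Step 2 generates. Second, in $\AiiD$ the real field is a general, not symmetric, real $m\times m$ matrix; it is real because each quartic term is the square of a single bilinear, not because of any replica symmetry.
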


\begin{proof}
Set $z_a=z_{\sigma_a}$, where $\sigma_a=+$ for $1\leq a\leq n$
and $\sigma_a=-$ for $n<a\leq m=2n$.  Let $\eta$, $\bar\eta$, and
$\theta$ denote Grassmann-valued vectors: all their components
anticommute, and $\bar\eta$ is an independent set of variables rather
than the complex conjugate of $\eta$.  We fix the ordering of the
Berezin measures by the standard identities
\[
 \int \mathrm D(\bar\eta,\eta)\,
 e^{-\bar\eta^{T}H\eta}=\det H,
 \qquad
 \int \mathrm D\theta\,
 e^{\frac12\theta^{T}K\theta}=\Pf K,
\]
where $H$ is arbitrary and $K$ is antisymmetric, with dimensions
matching those of the corresponding Grassmann vectors
\cite{Berezin1987,Efetov1997}.

For classes $A$ and $\AiD$, introduce Grassmann variables
$\bar\psi_i^a,\psi_i^a,\bar\chi_i^a,\chi_i^a$.  Before averaging over
the matrices, the product of determinants is
\begin{align}
 &\prod_{a=1}^{m}
 \det\!\bigl(z_a-X_{\sigma_a}\bigr)
 \det\!\bigl(\bar z_a-X_{\sigma_a}^{\dagger}\bigr)
 \notag\\
 &\quad =
 \int \mathrm D(\bar\psi,\psi,\bar\chi,\chi)\,
 \exp\Biggl\{
 -\sum_{i,a}\bigl(
   \bar\psi_i^a z_a\psi_i^a+
   \bar\chi_i^a\bar z_a\chi_i^a
 \bigr)
 \notag\\
 &\hspace{42mm}
 +\sum_{i,j,a}\left[
   \bar\psi_i^a(X_{\sigma_a})_{ij}\psi_j^a+
   \bar\chi_j^a\overline{(X_{\sigma_a})_{ij}}\chi_i^a
 \right]\Biggr\}.
 \label{eq:GrassmannA_AI}
\end{align}
Define the even Grassmann bilinears
\[
 A_{ab}=\sum_i\bar\psi_i^a\chi_i^b,\qquad
 B_{ba}=\sum_i\bar\chi_i^b\psi_i^a,\qquad
 C_{ab}=\sum_i\bar\psi_i^a\bar\chi_i^b,\qquad
 D_{ab}=\sum_i\psi_i^a\chi_i^b.
\]
Using the covariance of the correlated matrices and then reordering the
Grassmann variables gives
\[
 \E\exp\{\text{matrix-dependent part of \eqref{eq:GrassmannA_AI}}\}
 =
 \exp\left\{
 -\frac1N\sum_{a,b}T_{ab}A_{ab}B_{ba}
 \right\}
\]
in class $A$, and
\[
 \E\exp\{\text{matrix-dependent part of \eqref{eq:GrassmannA_AI}}\}
 =
 \exp\left\{
 -\frac1N\sum_{a,b}T_{ab}
 \bigl(A_{ab}B_{ba}+C_{ab}D_{ab}\bigr)
 \right\}
\]
in class $\AiD$.  The second quartic invariant is precisely the
contribution of the exchange term in \eqref{eq:AIcov}.

For commuting nilpotent variables $A,B$, the required complex
Hubbard--Stratonovich transformation is
\begin{equation}
 \frac{N}{\pi t}\int_{\C}
 \exp\left\{-\frac{N|q|^2}{t}+iqA+i\bar qB\right\}\,d^2q
 =
 \exp\left\{-\frac{t}{N}AB\right\},
 \qquad t>0.
 \label{eq:complexHSscalar}
\end{equation}
Applying \eqref{eq:complexHSscalar} with $t=T_{ab}$ to every pair
$(a,b)$ produces precisely the normalized measure $d\mu_T^{\C}$
defined above.

In class $A$, application of \eqref{eq:complexHSscalar} leaves, at each
physical index $i$, the Grassmann integral
\begin{align}
 &\int \mathrm D(\bar\psi_i,\psi_i,\bar\chi_i,\chi_i)
 \exp\Bigl\{
 -\bar\psi_i^TZ\psi_i-\bar\chi_i^T\bar Z\chi_i
 +i\bar\psi_i^TQ\chi_i+i\bar\chi_i^TQ^\dagger\psi_i
 \Bigr\}
 \notag\\
 &\qquad =\det\cD_Z(-Q).
 \label{eq:localAintegral}
\end{align}
The Gaussian law of $Q$ is invariant under $Q\mapsto-Q$, so the sign in
\eqref{eq:localAintegral} is immaterial.  Since the integral factorizes
over $i=1,\ldots,N$, it gives
$\cZ_{N,n}^{A}(z_+,z_-;\alpha)=\E_Q\det\cD_Z(Q)^N$, which is
\eqref{eq:HS_A}.

For class $\AiD$, use a field $P$ to decouple $A_{ab}B_{ba}$ and an
independent field $Q$ to decouple $C_{ab}D_{ab}$.  Both have measure
$d\mu_T^{\C}$.  With
\[
 \Theta_i=
 \begin{pmatrix}
  \bar\psi_i\\ \bar\chi_i\\ \psi_i\\ \chi_i
 \end{pmatrix},
\]
the complete one-site exponent after the two Hubbard--Stratonovich
transformations is
\begin{align}
 &-\bar\psi_i^TZ\psi_i-\bar\chi_i^T\bar Z\chi_i
 +i\bar\psi_i^TQ\bar\chi_i+i\psi_i^T\bar Q\chi_i
 \notag\\
 &\hspace{31mm}
 +i\bar\psi_i^TP\chi_i+i\bar\chi_i^TP^\dagger\psi_i
 =
 \frac12\Theta_i^T\cM_Z(P,Q)\Theta_i .
 \label{eq:localAIexponent}
\end{align}
Consequently,
\[
 \int \mathrm D\Theta_i\,
 e^{\frac12\Theta_i^T\cM_Z(P,Q)\Theta_i}
 =
 (-1)^m\Pf\cM_Z(P,Q)
 =
 p_Z(P,Q).
\]
Here the factor $(-1)^m$ comes from permuting the Berezin variables from
the ordering inherited from the two determinant representations to
$(\bar\psi,\bar\chi,\psi,\chi)$.  It is also the factor that ensures
$p_Z(0,0)=|\det Z|^2$. Factorization over the $N$ physical indices now yields
$\cZ_{N,n}^{\AiD}(z_+,z_-;\alpha)
=\E_{P,Q}\!\left[p_Z(P,Q)^N\right]$, proving
\eqref{eq:HS_AI}.

It remains to treat class $\AiiD$.  Set $Y=X\mathbb J$, so that
$Y^T=-Y$.  For $i<j$, the entries $Y_{ij}$ are independent complex
Gaussians of variance $1/(2N)$, and their correlations between the two
parameters are encoded by $T_{ab}$.  The monic Pfaffian polynomial has
the Berezin representation
\[
 P_{\AiiD}(z;X)
 =
 \frac{1}{\Pf\mathbb J}
 \int\mathrm D\psi\,
 \exp\left\{
 \frac12\psi^T(z\mathbb J-Y)\psi
 \right\}.
\]
A second set $\chi$ represents its complex conjugate.  Hence the part
of the replicated exponent containing $Y$ is
\[
 -\sum_{a=1}^{m}\sum_{i<j}
 \left[
  (Y_{\sigma_a})_{ij}\psi_i^a\psi_j^a+
  \overline{(Y_{\sigma_a})_{ij}}\chi_i^a\chi_j^a
 \right].
\]
Gaussian averaging gives
\begin{align}
 &\E\exp\left\{
 -\sum_{a,i<j}
 \left[
  (Y_{\sigma_a})_{ij}\psi_i^a\psi_j^a+
  \overline{(Y_{\sigma_a})_{ij}}\chi_i^a\chi_j^a
 \right]\right\}
  =
 \exp\left\{
 \frac1{2N}\sum_{a,b}T_{ab}
 \sum_{i<j}\psi_i^a\psi_j^a\chi_i^b\chi_j^b
 \right\}.
 \label{eq:AIIaverageexplicit}
\end{align}
Introducing $F_{ab}=\sum_{i=1}^{2N}\psi_i^a\chi_i^b$, one has
$F_{ab}^{\,2}=-2\sum_{i<j}\psi_i^a\psi_j^a\chi_i^b\chi_j^b$.
Thus \eqref{eq:AIIaverageexplicit} is
\[
 \exp\left\{
 -\frac1{4N}\sum_{a,b}T_{ab}F_{ab}^{\,2}
 \right\}.
\]

The required real Hubbard--Stratonovich transformation is
\begin{equation}
 \sqrt{\frac{N}{\pi t}}\int_{\R}
 \exp\left\{-\frac{Nq^2}{t}+iqF\right\}\,dq
 =
 \exp\left\{-\frac{t}{4N}F^2\right\},
 \qquad t>0.
 \label{eq:realHSscalar}
\end{equation}
Applying \eqref{eq:realHSscalar} to every pair $(a,b)$ produces the
normalized measure $d\mu_T^{\R}$ defined above and replaces the quartic interaction by
$i\sum_{a,b}Q_{ab}F_{ab}$.

Finally, write $p=2j-1$ and $q=2j$ for the two coordinates in the
$j$th physical block of $\mathbb J$.  Its local Grassmann integral is
\begin{align}
 I_j(Q)
 &=
 \int \mathrm D(\psi_p,\psi_q,\chi_p,\chi_q)
 \exp\Biggl\{
 \sum_a\left(
  z_a\psi_p^a\psi_q^a+
  \bar z_a\chi_p^a\chi_q^a
 \right)
 \notag\\
 &\hspace{37mm}
 +i\sum_{a,b}Q_{ab}
 \left(\psi_p^a\chi_p^b+\psi_q^a\chi_q^b\right)
 \Biggr\}.
 \label{eq:AIIlocalintegral}
\end{align}
Indeed, upon defining the two independent $2m$-component Grassmann
vectors
\[
 \eta_j=
 \begin{pmatrix}\psi_q\\ \chi_p\end{pmatrix},
 \qquad
 \widehat\eta_j^{\,T}=
 \begin{pmatrix}\psi_p^T&-\chi_q^T\end{pmatrix},
\]
the exponent in \eqref{eq:AIIlocalintegral} becomes
$\widehat\eta_j^{\,T}\cD_Z(Q)\eta_j$.  The determinant Berezin identity
therefore gives $I_j(Q)=\det\cD_Z(Q)$.
The normalization by $\Pf\mathbb J$ fixes the accompanying Pfaffian
signs and makes the polynomial monic.  Since there are $N$ independent
physical $2\times2$ blocks,
$\prod_{j=1}^{N}I_j(Q)=\det\cD_Z(Q)^N$.
Averaging this expression with respect to $d\mu_T^{\R}(Q)$ proves
\eqref{eq:HS_AII}, including the real auxiliary-field variance
$T_{ab}/(2N)$ and the power $N$.
\end{proof}

At $\alpha=1$ these identities specialize to static
characteristic-polynomial averages in the family studied in
\cite{AkemannAygunKieburgPassler2025,Forrester2025}.  
The new finite-$N$ ingredient is the entrywise covariance table $T_{ab}$
joining the two parameter sectors.  Under saddle-point reduction, this
covariance produces the $a$-dependent term in the compact-group action
displayed in \eqref{eq:masscalculation}.

\subsection{A common radial action and the correct replica dimensions}

For class $\AiD$ combine its two fields into
\begin{equation}
 \cQ=\begin{pmatrix}P&Q\\-\bar Q&\bar P\end{pmatrix},\qquad
 \Jrep=\begin{pmatrix}0&\1_m\\-\1_m&0\end{pmatrix},\qquad
 \cQ=\Jrep\bar{\cQ}\Jrep^{-1}.
 \label{eq:quaternionfield}
\end{equation}
Thus $\cQ$ is an $m\times m$ quaternion matrix in its $2m\times2m$ complex
representation. 
\begin{equation}
 p_Z(P,Q)^2=\det\cD_{\widetilde Z}(\cQ),
 \qquad \widetilde Z=\diag(Z,Z).
 \label{eq:AIsquareddeterminant}
\end{equation}
This squared identity is useful for the subsequent saddle-point
evaluation, whereas the unsquared finite-$N$ object remains the polynomial
defined in \eqref{eq:AIpfblock}.

Let $r_X$ denote the row dimension, in the ordinary complex
representation, of the matrix appearing in the common saddle-point action
\eqref{eq:radialaction} below: this matrix is $Q$ in classes $A$ and $\AiiD$,
and $\cQ$ in class $\AiD$.  Set
$\gamma_A=\gamma_{\AiiD}=1$ and $\gamma_{\AiD}=1/2$.  Throughout,
$\Tr$ denotes the ordinary complex matrix trace, including when $\cQ$
is the complex representation of a quaternion matrix.  Thus the
prefactor $\gamma_XN$ in \eqref{eq:radialaction} equals $N$ in classes
$A$ and $\AiiD$, and $N/2$ in class $\AiD$.
\begin{center}
\begin{tabular}{c c c c c}
\toprule
class & field & $r_X$ & $\gamma_X$ & compact group $G_X$\\
\midrule
$A$ & complex & $2n$ & $1$ & $U(2n)$\\
$\AiD$ & quaternion & $4n$ & $1/2$ & $Sp(2n)\subset U(4n)$\\
$\AiiD$ & real & $2n$ & $1$ & $O(2n)$\\
\bottomrule
\end{tabular}
\end{center}
In particular, $\gamma_Xr_X=2n$ in all three cases. The two spectral variables occupy the diagonal blocks of
$Z=\diag(z_+\1_n,z_-\1_n)$ and thereby determine the decomposition
$\C^{2n}=\C_+^n\oplus\C_-^n$, represented by the block-sign matrix
$\Lambda=\diag(\1_n,-\1_n)$.  Let $\Lambda_X$ denote the corresponding
matrix in the auxiliary-field space.  Then $\Lambda_X=\Lambda$ in classes
$A$ and $\AiiD$, whereas the doubled complex representation used for
class $\AiD$ requires
$\Lambda_{\AiD}=\diag(\Lambda,\Lambda)$.  This doubling is necessary for every trace in that case to have the
correct dimension.

At $\alpha=1$ and for coincident sources $z_+=z_-=z_0$, the exact
auxiliary-field representations take the common form
\begin{equation}
 \begin{aligned}
 \cZ_{N,n}^X(z_0,z_0;1)
 &=
 C_{N,n}^X
 \int
 \exp\left\{-\gamma_XN\Tr\left[
 QQ^\dagger-\log\bigl(|z_0|^2\1_{r_X}+QQ^\dagger\bigr)
 \right]\right\}\dd Q\\
 &=C_{N,n}^X\int e^{-S_0(Q)}\dd Q ,
 \end{aligned}
 \label{eq:radialaction}
\end{equation}
where $C_{N,n}^X$ is independent of $z_0$, and the integral is over the
independent components of the auxiliary field listed in the table.  In
class $\AiD$, $Q$ in this common notation denotes the doubled complex
matrix $\cQ$ defined in \eqref{eq:quaternionfield}.
Writing $h^2$ for an eigenvalue of $QQ^\dagger$, stationarity in the
radial direction gives $1-(|z_0|^2+h^2)^{-1}=0$.  Thus every singular
value of $Q$ equals $\sqrt{\chi_0}$, and the saddle-point manifold is
parametrized by
\begin{equation}
 QQ^\dagger=\chi_0\1_{r_X},\qquad
 Q=\sqrt{\chi_0}\,U,\qquad U\in G_X,\qquad
 \chi_0=1-|z_0|^2>0.
 \label{eq:radialsaddle}
\end{equation}
The resulting angular orbit gives the familiar compact Grassmannian after
quotienting by the stabilizer of $\Lambda_X$.  We retain the full group with
normalized Haar measure, which avoids separate quotient-volume constants.

\subsection{The covariance deformation on the saddle}

In this subsection $Q$ denotes the auxiliary matrix in classes $A$ and
$\AiiD$, and the complex representation $\cQ$ in class $\AiD$.  Write
$\lambda_\mu=(\Lambda_X)_{\mu\mu}\in\{+1,-1\}$ for
$1\leq\mu\leq r_X$.
For class $\AiD$, each original replica label occurs twice in
$\Lambda_X=\diag(\Lambda,\Lambda)$.  The covariance table
\eqref{eq:replicacovariance}, lifted to these complex indices, is therefore
\[
 T_{\mu\nu}^{(X)}
 =
 \begin{cases}
  1, & \lambda_\mu\lambda_\nu=1,\\
  \alpha, & \lambda_\mu\lambda_\nu=-1.
 \end{cases}
\]
By the Gaussian auxiliary-field measures in \cref{prop:HS}, with $P,Q$
combined into $\cQ$ according to \eqref{eq:quaternionfield} in class
$\AiD$, the Gaussian part of the action is
\begin{equation}
 \begin{aligned}
 S_{\mathrm G}^{(\alpha)}(Q)
 &=
 \gamma_XN\sum_{\mu,\nu=1}^{r_X}
 \frac{|Q_{\mu\nu}|^2}{T_{\mu\nu}^{(X)}}\\
 &=
 \gamma_XN\Tr(QQ^\dagger)
 +\gamma_XN(\alpha^{-1}-1)
 \sum_{\lambda_\mu\lambda_\nu=-1}|Q_{\mu\nu}|^2 .
 \end{aligned}
 \label{eq:timeaction}
\end{equation}
At $\alpha=1$, the second term vanishes and the first is precisely the
Gaussian term in the radial action \eqref{eq:radialaction}.  The
determinant or Pfaffian factor is independent of $\alpha$, so that at
coincident sources the full action is
$S^{(\alpha)}(Q)=S_0(Q)+\delta S_\alpha(Q)$, where
$\delta S_\alpha(Q)$ is the second term in \eqref{eq:timeaction}.

Here $|Q_{\mu\nu}|^2=Q_{\mu\nu}^2$ for the real field in class
$\AiiD$.  For the quaternion field, $\gamma_{\AiD}=1/2$ compensates for
the doubling in its ordinary complex representation.  The
$Q$-independent change in the normalized Gaussian prefactor is
$O(n^2/N)=o(1)$ at fixed $n$ and is suppressed.
The sum over entries joining the two replica sectors can be expressed as a
trace: $\Tr(QQ^\dagger)=\sum_{\mu,\nu}|Q_{\mu\nu}|^2$, whereas
$\Tr(\Lambda_XQ\Lambda_XQ^\dagger)=
\sum_{\mu,\nu}\lambda_\mu\lambda_\nu|Q_{\mu\nu}|^2$.
It follows that
\begin{align}
 \sum_{\lambda_\mu\lambda_\nu=-1}|Q_{\mu\nu}|^2
 &=
 \frac12\sum_{\mu,\nu}
 (1-\lambda_\mu\lambda_\nu)|Q_{\mu\nu}|^2
=\frac12\left[\Tr(QQ^\dagger)-\Tr(\Lambda_XQ\Lambda_XQ^\dagger)
 \right].
 \label{eq:crossblocktrace}
\end{align}

On the saddle-point manifold \eqref{eq:radialsaddle}, $Q=\sqrt{\chi_0}\,U$, so
$\Tr(QQ^\dagger)=\chi_0r_X$ and, by cyclicity,
$\Tr(\Lambda_XQ\Lambda_XQ^\dagger)=
\chi_0\Tr(U\Lambda_XU^\dagger\Lambda_X)$.
Substitution into \eqref{eq:timeaction} therefore gives
\[
 \delta S_\alpha
 =
 \frac{\gamma_XN\chi_0}{2}(\alpha^{-1}-1)
 \left[
  r_X-\Tr(U\Lambda_XU^\dagger\Lambda_X)
 \right].
\]
Since $\alpha=1-v^2/(2N)+O(N^{-2})$,
$\alpha^{-1}-1=v^2/(2N)+O(N^{-2})$, and $a=v^2\chi_0$,
we obtain
\begin{equation}
 \delta S_\alpha
 =
 \frac{\gamma_Xa}{4}
 \left[
  r_X-\Tr(U\Lambda_XU^\dagger\Lambda_X)
 \right]+o(1).
 \label{eq:masscalculation}
\end{equation}
Using $\gamma_Xr_X=2n$, this may equivalently be written as
\[
 \delta S_\alpha
 =
 \frac{na}{2}
 -
 \frac{\gamma_Xa}{4}
 \Tr(U\Lambda_XU^\dagger\Lambda_X)
 +o(1).
\]
Thus the covariance deformation contributes to the auxiliary-field
integrand the factor
\[
 e^{-\delta S_\alpha}
 =
 e^{-na/2}
 \exp\left\{
 \frac{\gamma_Xa}{4}
 \Tr(U\Lambda_XU^\dagger\Lambda_X)
 \right\}
 [1+o(1)].
\]
The radial-fluctuation calculation in the next subsection gives the spatial
factor
\[
 \exp\left\{
 \frac{\gamma_Xu}{4}
 \Tr(U\Lambda_XU^\dagger\Lambda_X)
 \right\};
\]
see \eqref{eq:spatialmass}.  The covariance and spatial deformations
therefore combine into
\[
 e^{-na/2}
 \exp\left\{
 \frac{\gamma_X(u+a)}{4}
 \Tr(U\Lambda_XU^\dagger\Lambda_X)
 \right\},
\]
which establishes the appearance of the single microscopic parameter
$s=u+a$.

\subsection{Why the spatial coefficient is independent of \texorpdfstring{$\chi_0$}{chi0}}

For completeness we also perform the radial-fluctuation calculation that
fixes the spatial coefficient. Circular symmetry allows us to choose $z_0=b\geq0$, so that
$b^2+\chi_0=1$.  In a neighborhood of the radial saddle point manifold
\eqref{eq:radialsaddle}, we use the convenient parametrization
\begin{equation}
 Q=\bigl(\sqrt{\chi_0}\,\1+N^{-1/2}H\bigr)U,
 \qquad
 V=U\Lambda_XU^\dagger,
 \qquad
 \epsilon=\frac{\omega}{2\sqrt N},
 \label{param}
 \end{equation}
where $U\in G_X$ parametrizes the angular saddle directions and $H$ is
Hermitian over the real, complex, or quaternion field, as appropriate,
and parametrizes the radial fluctuations. 

To identify the matrix being expanded, define
$Z_X=b\1_{r_X}+\epsilon\Lambda_X$ with
$\epsilon=\omega/(2\sqrt N)$.
Here $Z_X=Z$ in classes $A$ and $\AiiD$, whereas
$Z_X=\widetilde Z$ in class $\AiD$.  The matrix whose determinant
produces the logarithmic term in \eqref{eq:radialaction} is therefore the
block $\cD_{Z_X}(Q)$ defined in \eqref{eq:sourceblock}, and in class
$\AiD$ this follows from \eqref{eq:AIsquareddeterminant}.

Substituting the local saddle parametrization \eqref{param}
into \eqref{eq:sourceblock} gives
\[
 \cD_{Z_X}(Q)
 =
 \begin{pmatrix}
  b\1+\epsilon\Lambda_X&
  i\bigl(\sqrt{\chi_0}\,\1+N^{-1/2}H\bigr)U\\
  iU^\dagger\bigl(\sqrt{\chi_0}\,\1+N^{-1/2}H\bigr)&
  b\1+\bar\epsilon\Lambda_X
 \end{pmatrix}.
\]
Introduce the block-diagonal matrix
\[
 \mathcal U=
 \begin{pmatrix}
  \1&0\\
  0&U^\dagger
 \end{pmatrix},
 \qquad
 V=U\Lambda_XU^\dagger.
\]
Conjugation by $\mathcal U$ leaves the determinant unchanged and removes
$U$ from the off-diagonal blocks:
\begin{align}
 \mathcal U^\dagger\cD_{Z_X}(Q)\mathcal U
 &=
 \begin{pmatrix}
  b\1+\epsilon\Lambda_X&
  i\bigl(\sqrt{\chi_0}\,\1+N^{-1/2}H\bigr)\\
  i\bigl(\sqrt{\chi_0}\,\1+N^{-1/2}H\bigr)&
  b\1+\bar\epsilon V
 \end{pmatrix}
 =M_0+\delta M.
 \label{eq:sourceblockexpansion}
\end{align}
On the unperturbed saddle-point manifold, $\epsilon=0$ and $H=0$, so
\[
 M_0=
 \begin{pmatrix}
  b&i\sqrt{\chi_0}\\
  i\sqrt{\chi_0}&b
 \end{pmatrix}\otimes\1_{r_X}.
\]
Since $b^2+\chi_0=1$, its inverse is
\[
 M_0^{-1}=
 \begin{pmatrix}
  b&-i\sqrt{\chi_0}\\
  -i\sqrt{\chi_0}&b
 \end{pmatrix}\otimes\1_{r_X}.
\]
Expand $\Tr\log(M_0+\delta M)$ through second order, using
$\Tr\Lambda_X=\Tr V=0$.  The linear radial terms cancel between the Gaussian
weight and the logarithm.  The part of the action that can depend on $U$ is
\begin{align}
 S-S_0={}&2\gamma_X\chi_0\Tr H^2
 +2\gamma_Xb\sqrt{\chi_0N}\Tr[H(\epsilon\Lambda_X+\bar\epsilon V)]
 \nonumber\\
 &+\frac{\gamma_XNb^2r_X}{2}(\epsilon^2+\bar\epsilon^2)
 -\gamma_XN\chi_0|\epsilon|^2\Tr(\Lambda_XV)+o(1).
 \label{eq:radialcompletion}
\end{align}
Completing the square in the radial fluctuation $H$ and then performing
the Gaussian integral over $H$ produces, in the exponent of the remaining
integral over $U$, the additional term
\[
 \frac{\gamma_XNb^2}{2}
 \Tr\left[(\epsilon\Lambda_X+\bar\epsilon V)^2\right].
\]
Since $\Tr\Lambda_X^2=\Tr V^2=r_X$, the parts proportional to
$\epsilon^2+\bar\epsilon^2$ precisely cancel the corresponding terms from
\eqref{eq:radialcompletion} when $e^{-(S-S_0)}$ is formed.  The mixed part is $\gamma_XNb^2|\epsilon|^2\Tr(\Lambda_XV)$, whereas the
last term in \eqref{eq:radialcompletion}, upon passing from the action to
the exponent, contributes $\gamma_XN\chi_0|\epsilon|^2\Tr(\Lambda_XV)$,
their sum having the same form with $\chi_0$ replaced by $\chi_0+b^2=1$. Consequently the spatial contribution to
the angular exponent is
\begin{equation}
 \frac{\gamma_Xu}{4}\Tr(U\Lambda_XU^\dagger\Lambda_X),
 \qquad u=N|z_+-z_-|^2.
 \label{eq:spatialmass}
\end{equation}
\begin{remark}[Complexified spectral variables]
Here $z_\pm=b\pm\epsilon$, where
$\epsilon=\omega/(2\sqrt N)$ is the complex microscopic displacement from
the bulk point $z_0=b$.  Density correlations are obtained by
differentiating with respect to $z_\pm$ and $\bar z_\pm$.  For this purpose
we temporarily regard $\epsilon$ and $\bar\epsilon$ as independent complex
variables. The exact
finite-dimensional auxiliary-field integrals, as well as the Gaussian
integral over $H$ appearing in their saddle-point expansion, are entire
in these variables.  Moreover, the radial Hessian is
$4\gamma_X\chi_0$ in the trace metric and is bounded away from zero on
compact subsets of the bulk, where $\chi_0=1-|z_0|^2>0$.  Together with
the uniform saddle-point estimates below, these facts justify taking
source derivatives after the expansion.
\end{remark}

\begin{remark}
For class $A$, the same cancellation was obtained in singular-value
coordinates in our earlier work with Lacroix-A-Chez-Toine, see
\cite[Supplemental Material, Eqs.~(S13)--(S21)]{FyodorovLacroix2026}.
The calculation above recasts that argument as a radial Gaussian completion
and exhibits it in a form that applies uniformly to classes $A$, $\AiD$,
and $\AiiD$.
\end{remark}

\begin{proposition}[Fixed-integer-replica bulk reduction]
\label{prop:fixedreplica}
Fix $n\ge1$.  Uniformly for $z_0$ in a compact subset of the open unit disk
and $\omega,v$ in compact sets, let $\alpha_N=1-v^2/(2N)+O(N^{-2})$ with a
uniform remainder.  Then
\begin{equation}
 \frac{\cZ_{N,n}^X(z_0+\omega/(2\sqrt N),z_0-\omega/(2\sqrt N);\alpha_N)}
 {\cZ_{N,n}^X(z_0,z_0;1)}
 =e^{-na/2}\cI_n^X(u+a)[1+o(1)],
 \label{eq:fixedreplicareduction}
\end{equation}
where
\begin{equation}
 \cI_n^X(s)=\int_{G_X}
 \exp\left\{\frac{\gamma_Xs}{4}
 \Tr(U\Lambda_XU^\dagger\Lambda_X)\right\}dU,\qquad \int_{G_X}dU=1.
 \label{eq:allcompact}
\end{equation}
The convergence is locally uniform when $z_\pm$ and $\bar z_\pm$ are
regarded as independent complex variables.  It therefore remains valid
after any fixed number of derivatives with respect to these variables. in particular after the four
derivatives used for the density correlation.
\end{proposition}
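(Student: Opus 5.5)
The proof is a finite-dimensional Laplace analysis of the exact integrals of \cref{prop:HS}, organized around the saddle manifold \eqref{eq:radialsaddle}. In class $\AiD$ we work with the squared identity \eqref{eq:AIsquareddeterminant}, so that the integrand is $\det\cD_{\widetilde Z}(\cQ)^{N/2}$ up to a sign. This sign is constant on a neighbourhood of the saddle, because $p_Z$ is nonvanishing there.

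\textbf{Step 1: localization to the saddle manifold.} Write the integrand of numerator and denominator as $e^{-S^{(\alpha)}(Q)}$ with $S^{(\alpha)}=S_0+\delta S_\alpha$ plus source terms. We show that $\Re S_0$, restricted to real $z_\pm$ near $z_0$ and real $\alpha$, attains its minimum exactly on $QQ^\dagger=\chi_0\1$. This follows from the scalar function $h^2-\log(|z_0|^2+h^2)$ applied eigenvalue by eigenvalue, which is strictly convex with minimum at $h^2=\chi_0$. For $|z_\pm-z_0|=O(N^{-1/2})$ and $\alpha^{-1}-1=O(N^{-1})$, the perturbations of the action are $O(1)$ on $\|Q\|\le C$. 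The Gaussian tails of $d\mu_T$ dominate for large $\|Q\|$, since $\log\det$ grows only logarithmically. Hence the region $\operatorname{dist}(Q,\text{saddle})>N^{-1/2+\delta}$ contributes $O(e^{-cN^{2\delta}})$ relative to the denominator, uniformly on the stated compacts. For complexified $\epsilon,\bar\epsilon$ in a compact set we still have $|e^{-S}|\le e^{-\Re S}$, with $\Re S$ perturbed by $O(1)$ in the same window, so the bound survives.

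\textbf{Step 2: local expansion.} On the tubular neighbourhood we use the coordinates \eqref{param}, $Q=(\sqrt{\chi_0}+N^{-1/2}H)U$. The Jacobian is smooth and equal at leading order to a constant times Haar measure on $G_X$ times Lebesgue measure on $H$. It cancels in the ratio, since the denominator is the same integral at $\epsilon=0$ and $\alpha=1$. We Taylor-expand $\Tr\log\cD$ to second order in $(H,\epsilon)$ as in \eqref{eq:radialcompletion}, with a cubic remainder $O(N^{-1/2}(|H|+|\omega|)^3)$. We add \eqref{eq:masscalculation}, noting that $\delta S_\alpha$ evaluated at $Q$ rather than on the saddle differs by $O(N^{-1/2}|H|)$. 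Completing the square in $H$ is legitimate because the Hessian $4\gamma_X\chi_0$ is bounded below. Integrating out $H$ yields \eqref{eq:spatialmass} together with the $e^{-na/2}$ factor. What remains is $e^{-na/2}\int_{G_X}\exp\{\gamma_X(u+a)\Tr(U\Lambda_XU^\dagger\Lambda_X)/4\}\,dU$, divided by the same expression at $u=a=0$, which equals $1$ by the Haar normalization. This gives \eqref{eq:fixedreplicareduction}, with $u$ interpreted as $\epsilon\bar\epsilon\cdot 4N$ in the complexified setting.

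\textbf{Step 3: uniformity and derivatives.} We bound the remainders by dominated convergence against the Gaussian weight in $H$, uniformly in the compact parameter sets and in $U$, since $G_X$ is compact. The ratio consists of functions that are entire in $(z_\pm,\bar z_\pm)$, and it converges locally uniformly on a complex polydisc. By the Cauchy estimates, any fixed number of derivatives therefore converges as well.

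\textbf{Main obstacle.} We expect the main obstacle to be class $\AiD$. There one must check that the integration domain of $\cQ$ (the quaternion structure \eqref{eq:quaternionfield}) makes the saddle orbit exactly $\sqrt{\chi_0}\,Sp(2n)$. One must also check that the radial fluctuations are quaternion-Hermitian with the nondegenerate Hessian stated, with no additional flat directions. The same question arises for $O(2n)$ in class $\AiiD$, where the real field has two connected components that both contribute, which is why the full group is used. We would verify this first by a direct polar decomposition $\cQ=\sqrt{\cQ\cQ^\dagger}\,U$ within the quaternion algebra.
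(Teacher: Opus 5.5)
Your proposal is correct and follows essentially the same route as the paper: Laplace localization via the scalar function $h^2-\log(b^2+h^2)$, the polar chart $Q=(\sqrt{\chi_0}\,\1+N^{-1/2}H)U$ with $H$ kept as a matrix and the Jacobian cancelling in the ratio, Gaussian completion in $H$ giving \eqref{eq:spatialmass}, the $o(1)$ replacement of $Q$ by its saddle value in $\delta S_\alpha$, inclusion of both components of $O(2n)$, and Cauchy estimates for the source derivatives. Your local sign argument in class $\AiD$ replaces the paper's global one (connectedness of the field space together with $p_Z(0,0)>0$), and either suffices.

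One correction to Step 1: away from the saddle manifold the source perturbation of the action is of order $\sqrt N\,\operatorname{dist}(Q,\text{saddle})$, not $O(1)$. The reason is that the term linear in the sources, proportional to $Nb[\epsilon\Tr((b^2\1+QQ^\dagger)^{-1}\Lambda_X)+\bar\epsilon\Tr((b^2\1+Q^\dagger Q)^{-1}\Lambda_X)]$, vanishes only where $QQ^\dagger=\chi_0\1$. Your localization conclusion nevertheless survives, because the gap grows like $N\operatorname{dist}^2$ and this dominates once $\operatorname{dist}\ge N^{-1/2+\delta}$.
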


\begin{proof}
We now justify the large-$N$ reduction at fixed positive integer $n$ by
a finite-dimensional saddle-point (Laplace) analysis, whereas the
analytic continuation in $n$ enters only later.  By rotational invariance in the
spectral plane, take $z_0=b\geq0$ and set $\chi_0=1-b^2$.  At coincident spectral
sources and $\alpha=1$, the determinant factors in classes $A$ and
$\AiiD$ reduce to powers of
$\det(b^2\1_{r_X}+QQ^\dagger)$ and are therefore nonnegative.  In class
$\AiD$, \eqref{eq:AIsquareddeterminant} gives
$p_Z(P,Q)^2=\det(b^2\1_{r_{\AiD}}+\cQ\cQ^\dagger)$.  For $b>0$ this
determinant is strictly positive: since the auxiliary-field space is
connected and $p_Z(0,0)=|\det Z|^2>0$, the polynomial $p_Z(P,Q)$ is its
positive square root.  The case $b=0$ follows by continuity.  Thus all
three classes have the common radial action \eqref{eq:radialaction}.

On the full-rank set use the polar decomposition $Q=HU$, where $H$ is
positive self-adjoint over the relevant real, complex, or quaternion field
and $U\in G_X$.  This set has full measure.  The map $(H,U)\mapsto HU$ is
smooth near $H=\sqrt{\chi_0}\,\1$, and Lebesgue measure takes the form
$J_X(H)\,dH\,dU$, with $J_X$ smooth and strictly positive there.  Keeping
$H$ as a matrix, rather than diagonalizing it, avoids a spurious
Vandermonde singularity at the scalar saddle.  For $z_0$ in a fixed compact
subset of the disk, $\chi_0$ is bounded below.  The scalar function
$h^2-\log(b^2+h^2)$ has its unique minimum at
$h=\sqrt{\chi_0}$, a uniformly positive Hessian, and a positive gap outside
every fixed neighborhood of that point.  Its quadratic growth controls the
tails.  Hence the complement of a fixed tube around the compact saddle
manifold is exponentially small relative to the coincident-source integral,
uniformly in the stated bulk regime.

Inside the tube write
$H=\sqrt{\chi_0}\,\1+N^{-1/2}K$.  For
$\|K\|\le N^{1/10}$, Taylor expansion of the action and the polar Jacobian
has a uniform remainder bounded by
$C N^{-1/2}(1+\|K\|^3)$.  Outside this growing set, but inside the tube, the
positive radial Hessian supplies an integrable Gaussian majorant.  The
source calculation \eqref{eq:radialcompletion} is therefore uniform under
the radial integral.  Completing its Gaussian square gives precisely the
angular factor \eqref{eq:spatialmass}.  Its remaining determinant and the
value $J_X(\sqrt{\chi_0}\,\1)$ do not depend on $U$ and cancel against the
denominator in \eqref{eq:fixedreplicareduction}.  Since
$\alpha_N^{-1}-1=O(N^{-1})$, replacing $H$ by its saddle value in
\eqref{eq:timeaction} makes an $o(1)$ error under the same Gaussian
majorant.  Equation \eqref{eq:masscalculation} then produces the factor
$e^{-na/2}$ and adds
$(\gamma_Xa/4)\Tr(U\Lambda_XU^\dagger\Lambda_X)$ to the exponent of the
compact-group integral. All
components of $O(2n)$ are included by the real polar decomposition.

The estimates remain valid when the spectral sources $z_\pm$ and
$\bar z_\pm$ are regarded as independent complex variables and varied in
a fixed neighborhood of their physical values.  The finite-dimensional
integrands are entire in these variables and, after slightly shrinking the
neighborhood, the preceding majorant applies uniformly.  Local
uniform convergence and Cauchy's formula then give convergence of every
fixed source derivative.  Finally, normalized Haar integration is over a
compact group, so the estimates are uniform in $U$.  This proves the stated
asymptotic and its differentiated form.
\end{proof}

The coefficients in the two new classes are therefore
\begin{align}
 \cI_n^{\AiD}(u+a)
 &=\int_{Sp(2n)}e^{(u+a)\Tr(U\Lambda_{\AiD}U^\dagger\Lambda_{\AiD})/8}
 \,dU,\label{eq:AIsigma}\\
 \cI_n^{\AiiD}(u+a)
 &=\int_{O(2n)}e^{(u+a)\Tr(U\Lambda U^T\Lambda)/4}
 \,dU.\label{eq:AIIsigma}
\end{align}
At $a=0$ these agree with the static integrals of
\cite{ChenXiaoLiuRyu2026}.  The calculation above identifies the coefficient
of $a$ independently of the short-time overlap argument.

\subsection{From the group integral to a density: the replica step}
\label{sec:replica}

At finite $N$, put $L_\sigma=\log|P_X(z_\sigma;X_\sigma)|^2, \, \sigma=\pm$.  For real
$n$ near zero, the original expectation in \eqref{eq:finitegeneration} has
the cumulant expansion (the Taylor expansion in connected moments)
\[
 \log\cZ_{N,n}
 =n\E(L_++L_-)+\frac{n^2}{2}\Var(L_++L_-)+O(n^3).
\]
Expanding the variance gives
$\Var(L_++L_-)=\Var(L_+)+\Var(L_-)+2\Cov(L_+,L_-)$.
Now $L_+$ depends only on $(z_+,\bar z_+)$ and $L_-$ only on
$(z_-,\bar z_-)$.  Therefore
\[
 \partial_{z_+}\partial_{\bar z_+}
 \partial_{z_-}\partial_{\bar z_-}\Var(L_+)=0,
 \qquad
 \partial_{z_+}\partial_{\bar z_+}
 \partial_{z_-}\partial_{\bar z_-}\Var(L_-)=0,
\]
whereas
\[
 \frac12
 \partial_{z_+}\partial_{\bar z_+}
 \partial_{z_-}\partial_{\bar z_-}\Var(L_++L_-)
 =
 \partial_{z_+}\partial_{\bar z_+}
 \partial_{z_-}\partial_{\bar z_-}\Cov(L_+,L_-).
\]
Consequently, after applying these derivatives the coefficient
of $n^2$ in $\log\cZ_{N,n}$ is precisely the differentiated covariance
$\Cov(L_+,L_-)$. The one-dimensional Poincar\'e--Lelong identity
$\partial_z\partial_{\bar z}\log|z-\lambda|^2
=\pi\delta^{(2)}(z-\lambda)$, understood in the distributional sense,
converts the covariance of logarithmic spectral polynomials into the
density covariance. It is the same identity whether a determinant or a
monic Pfaffian counts the points.

The exact finite-$N$ cumulant expansion does not justify taking $n\to0$ in
\cref{prop:fixedreplica}: that proposition only holds at each fixed positive
integer.  We now adopt, for the $n\to0$ replica limit, the analytic-continuation
prescription proposed in the static duality calculation
\cite{ChenXiaoLiuRyu2026}, namely the continuation obtained by matching
to the corresponding Hermitian correlation function in the upper half-plane.
  Apply the static replica continuation to $\cI_n^X(s)$ in
\eqref{eq:allcompact} and expand its logarithm near $n=0$:
\begin{equation}
 \log\cI_n^X(s)
 =
 n\,\phi_{X,1}(s)+n^2\Phi_X(s)+O(n^3).
 \label{eq:replicaexpansion}
\end{equation}
Equivalently, the separation-dependent connected coefficient is
$\Phi_X(s)=\left.\tfrac12\partial_n^2\log\cI_n^X(s)\right|_{n=0}$,
where the derivative with respect to $n$ is understood in the analytic
continuation selected by the static duality calculation
\cite{ChenXiaoLiuRyu2026}.  The relation between $\Phi_X$ and the memory kernel introduced in
\eqref{eq:threeK} follows from the density insertions. Each density insertion at $z_\sigma$, $\sigma\in\{+,-\}$, is generated by
$\partial_{z_\sigma}\partial_{\bar z_\sigma}$, where $z_\sigma$ and
$\bar z_\sigma$ are treated as independent Wirtinger variables.  The two
density insertions therefore produce the mixed operator
$\partial_{z_+}\partial_{\bar z_+}\partial_{z_-}\partial_{\bar z_-}$.
To pass to microscopic coordinates, write
$z_0=(z_++z_-)/2$ and $\omega=\sqrt N\,(z_+-z_-)$,
so that
\[
 \partial_{z_+}
 =
 \frac12\partial_{z_0}+\sqrt N\,\partial_\omega,
 \qquad
 \partial_{z_-}
 =
 \frac12\partial_{z_0}-\sqrt N\,\partial_\omega,
\]
with the analogous identities
\[
 \partial_{\bar z_+}
 =
 \frac12\partial_{\bar z_0}+\sqrt N\,\partial_{\bar\omega},
 \qquad
 \partial_{\bar z_-}
 =
 \frac12\partial_{\bar z_0}-\sqrt N\,\partial_{\bar\omega}.
\]
Consequently,
\[
 \partial_{z_+}\partial_{z_-}
 =
 -N\partial_\omega^2+\frac14\partial_{z_0}^2,
 \qquad
 \partial_{\bar z_+}\partial_{\bar z_-}
 =
 -N\partial_{\bar\omega}^2+\frac14\partial_{\bar z_0}^2.
\]
The microscopic density is $N^{-1}$ times the density in the original
spectral coordinates, so a two-density covariance carries the prefactor
$N^{-2}$.  Hence
\begin{align}
 &N^{-2}
 \partial_{z_+}\partial_{\bar z_+}
 \partial_{z_-}\partial_{\bar z_-}
 \notag\\
 &\qquad =
 \partial_\omega^2\partial_{\bar\omega}^2
 -\frac{1}{4N}\left(
  \partial_\omega^2\partial_{\bar z_0}^2+
  \partial_{z_0}^2\partial_{\bar\omega}^2
 \right)
 +\frac{1}{16N^2}
  \partial_{z_0}^2\partial_{\bar z_0}^2.
 \label{eq:microscopicderivatives}
\end{align}
For a bulk scaling function whose dependence on $z_0$ remains smooth, the
last two terms vanish as $N\to\infty$.  Thus the two microscopic density
insertions act on the separation-dependent part as
$\partial_\omega^2\partial_{\bar\omega}^2$.  Since
$s=|\omega|^2+a=u+a$, any twice differentiable radial function satisfies
\begin{equation}
 \partial_\omega^2\partial_{\bar\omega}^2
 f(|\omega|^2+a)
 =
 \frac{d^2}{du^2}\left[u^2f''(u+a)\right].
 \label{eq:radialderivative}
\end{equation}
Applying this identity to the $n^2$ coefficient
$f=\Phi_X$ gives
$c_a^X(u)=d^2[u^2\Phi_X''(u+a)]/du^2$.  Comparison with
$c_a^X(u)=d^2[u^2\cK_X(u+a)]/du^2$ from
\eqref{eq:masterdensity}
therefore identifies
\begin{equation}
 \cK_X(s)
 =
 \frac{d^2\Phi_X(s)}{ds^2}
 =
 \left.
 \frac12\frac{\partial^4}{\partial s^2\partial n^2}
 \log\cI_n^X(s)\right|_{n=0}.
 \label{eq:replicapotential}
\end{equation}
Only the second derivative of $\Phi_X$ enters the density covariance, so
$\Phi_X$ is determined only up to $\Phi_X(s)\mapsto\Phi_X(s)+A_X+B_Xs$.

The Hermitian input can be written explicitly as follows.  Define
$S(x)=\sin(\pi x)/(\pi x)$.
In the unit-mean-spacing normalization, the GUE, GOE, and GSE bulk pair
functions used in \cite{ChenXiaoLiuRyu2026} are
\begin{align}
 R_{2,A}^{\mathrm H}(x)
 &=1-S(x)^2,
 \label{eq:HermitianA}\\
 R_{2,AI}^{\mathrm H}(x)
 &=1-S(x)^2-S'(|x|)
   \int_{|x|}^{\infty}S(y)\dd y,
 \label{eq:HermitianAI}\\
 R_{2,AII}^{\mathrm H}(x)
 &=1-S(2x)^2+S'(2x)
   \int_0^{2x}S(y)\dd y.
 \label{eq:HermitianAII}
\end{align}
These are respectively the Hermitian partners of the non-Hermitian
classes $A$, $\AiD$, and $\AiiD$.

Let $\mathcal R_X(\zeta)$ denote the upper-half-plane analytic completion
whose boundary real part is the corresponding Hermitian pair function:
\[
 \Re\mathcal R_A(x+i0)=R_{2,A}^{\mathrm H}(x),\qquad
 \Re\mathcal R_{\AiD}(x+i0)=R_{2,AI}^{\mathrm H}(x),\qquad
 \Re\mathcal R_{\AiiD}(x+i0)=R_{2,AII}^{\mathrm H}(x),
\]
with $\mathcal R_X(\zeta)\to1$ as $|\zeta|\to\infty$ in
$\Im\zeta>0$.  Put $q=-2\pi i\zeta$, so that $\Re q>0$.  The three analytic completions
are
\begin{align}
 \mathcal R_A(\zeta)
 &=
 1+\frac{2(1-e^{-q})}{q^2},
 \label{eq:analyticRA}\\
 \mathcal R_{\AiD}(\zeta)
 &=
 1-\frac{4}{q^2}\,
 \sinh^2\!\frac q2\,\frac{\dd}{\dd q}
 \left[\frac{qE_1(q/2)}{\sinh(q/2)}\right],
 \label{eq:analyticRAI}\\
 \mathcal R_{\AiiD}(\zeta)
 &=
 1+\frac{e^{-q}}{q^2}
 \left[(1+q)\operatorname{Shi}(q)+\sinh q\right].
 \label{eq:analyticRAII}
\end{align}
Here $E_1$ is taken on its principal branch: because $\Re q>0$, no branch
cut is crossed.  The functions $E_1$ and $\operatorname{Shi}$ are defined
in \eqref{eq:Adef} and \eqref{eq:Bdef}.  Its connected part is
$\mathcal R_X^{\mathrm c}(\zeta):=\mathcal R_X(\zeta)-1$.

To derive the continuation formula, let
$\cI_{n,\mathrm H}^{X_{\mathrm H}}(x)$ denote the Hermitian compact-group
integral for the partner class: $X_{\mathrm H}=A,AI,AII$ when
$X=A,\AiD,\AiiD$, respectively.  The fixed-integer-replica identities of
\cite[Eqs.~(40), (45), and (48)]{ChenXiaoLiuRyu2026} read
\begin{equation}
 \cI_n^X(s)
 =
 \left.
 \cI_{n,\mathrm H}^{X_{\mathrm H}}(x)
 \right|_{x=is/(2\pi)}.
 \label{eq:fixednHermitianNHduality}
\end{equation}
The same work defines the analytic completion of the Hermitian pair
function by
\begin{equation}
 \mathcal R_X(\zeta)
 =
 \frac12-\frac{1}{2\pi^2}
 \lim_{n\to0}\frac{1}{n^2}
 \frac{\partial^2}{\partial\zeta^2}
 \cI_{n,\mathrm H}^{X_{\mathrm H}}(\zeta).
 \label{eq:HermitianreplicaR}
\end{equation}
Every $n\to0$ limit in this paragraph refers to the analytic continuation
selected by the Hermitian replica solution.

Differentiating \eqref{eq:fixednHermitianNHduality} twice with respect to
$s$ and using \eqref{eq:HermitianreplicaR} gives
\begin{align}
 \lim_{n\to0}\frac{1}{n^2}
 \frac{\dd^2}{\dd s^2}\cI_n^X(s)
 &=
 -\frac{1}{4\pi^2}
 \left.
 \lim_{n\to0}\frac{1}{n^2}
 \frac{\partial^2}{\partial\zeta^2}
 \cI_{n,\mathrm H}^{X_{\mathrm H}}(\zeta)
 \right|_{\zeta=is/(2\pi)}
 \notag\\
 &=
 \frac12\mathcal R_X\left(\frac{is}{2\pi}\right)-\frac14.
 \label{eq:rawreplicacoefficient}
\end{align}

It remains to pass from the replica integral to its connected logarithm.
The term linear in $n$ is fixed by the one-point bulk potential.  If
$L_0=\log|P_X(z_0;X)|^2$, then the bulk intensity $1/\pi$ implies
$\E(L_++L_-)-2\E L_0=u/2+o(1)$.  On the other hand, taking the logarithm of
\eqref{eq:fixedreplicareduction} and differentiating with respect to $n$
at $n=0$ gives
$-a/2+\left.\partial_n\log\cI_n^X(u+a)\right|_{n=0}=u/2$. Since $s=u+a$, this is equivalent to
$\left.\partial_n\log\cI_n^X(s)\right|_{n=0}=s/2$.
We may therefore write
\begin{equation}
 \cI_n^X(s)
 =
 1+\frac{ns}{2}+n^2F_X(s)+O(n^3),
 \label{eq:rawreplicaexpansion}
\end{equation}
and hence
\begin{align}
 \log\cI_n^X(s)
 &=
 \frac{ns}{2}
 +n^2\left[F_X(s)-\frac{s^2}{8}\right]
 +O(n^3)
 \notag\\
 &=
 \frac{ns}{2}+n^2\Phi_X(s)+O(n^3),
 \label{eq:connectedreplicaexpansion}
\end{align}
where
\begin{equation}
 \Phi_X(s)=F_X(s)-\frac{s^2}{8}.
 \label{eq:defreplicapotential}
\end{equation}
Equation \eqref{eq:rawreplicacoefficient} states that
$F_X''(s)=\tfrac12\mathcal R_X(is/2\pi)-\tfrac14$.
Consequently,
\begin{equation}
 \cK_X(s)=\Phi_X''(s)
 =
 \frac12\left[
 \mathcal R_X\left(\frac{is}{2\pi}\right)-1
 \right].
 \label{eq:Hermitiancontinuation}
\end{equation}
Thus the subtraction of $s^2/8$ generated by taking
$\log\cI_n^X$ changes the raw term $-1/4$ in
\eqref{eq:rawreplicacoefficient} into $-1/2$.  This is precisely the
subtraction of the disconnected background $1$ from the pair function.

Equivalently, for any fixed $s_*>0$,
\begin{equation}
 \Phi_X(s)
 =
 A_X+B_Xs+
 \frac12\int_{s_*}^{s}(s-t)
 \left[
  \mathcal R_X\left(\frac{it}{2\pi}\right)-1
 \right]\dd t.
 \label{eq:replicapotentialintegrated}
\end{equation}
The constants $A_X$ and $B_X$ do not affect
$\Phi_X''=\cK_X$.  Substitution of
\eqref{eq:analyticRA}--\eqref{eq:analyticRAII} into
\eqref{eq:Hermitiancontinuation} gives the three kernels in
\eqref{eq:threeK}.

Combining the radial source identity \eqref{eq:radialderivative} with
$\Phi_X''=\cK_X$ from \eqref{eq:Hermitiancontinuation} gives the predicted
density \eqref{eq:masterdensity}.  The factor $e^{-na/2}$ in
\eqref{eq:fixedreplicareduction} contributes only to the coefficient
linear in $n$ and hence does not enter the connected density covariance.

The continuation used here is not an arbitrary interpolation of the
positive-integer replica values.  Its Hermitian boundary data are the exact
GUE, GOE, and GSE pair functions.  In replica language these functions can
be recovered from integrable hierarchies: the Toda/Painlev\'e structure in
class $A$ \cite{Kanzieper2002,SplittorffVerbaarschot2003}, the Pfaff--KP
hierarchy for class $AI$ \cite{VidalKanzieper2013}, and its dual counterpart
for class $AII$ as summarized in \cite{ChenXiaoLiuRyu2026}.  This integrable structure selects
and independently checks the continuation employed above.  It does not,
however, supply the uniform control near $n=0$ needed to interchange the
replica limit, the large-$N$ limit, and the source derivatives in
\cref{prop:fixedreplica}.

Finally, the factor $\chi_0=1-|z_0|^2$ enters the parameter scale through
$a=v^2\chi_0$, whereas the microscopic mean spectral intensity remains
$1/\pi$.  At the spectral edge, $\chi_0\downarrow0$ and
$\operatorname{Hess}_{\mathrm{rad}}S_0=4\gamma_X\chi_0\downarrow0$,
so the bulk radial expansion ceases to be uniform and must be replaced by
an edge scaling analysis.

\section{Contact normalization and small-parameter eigenvalue motion}
\label{sec:self}

\subsection{What the static derivative forgets}

Knowing $F''(u)$ determines $F$ only up to $b_0+b_1u$, so that for
$F(u)=u^2\cK_X(u)$ these ambiguities become $b_0/u^2+b_1/u$ in the kernel,
disappearing from a static distinct-level correlation but becoming visible
after the shift $u\mapsto u+a$.

It is necessary to specify boundary conditions before interpreting this
observation dynamically.  The kernels \eqref{eq:threeK} have
\begin{equation}
 F(0)=0,\qquad F'(0)=1,\qquad F'(\infty)=0.
 \label{eq:contactconditions}
\end{equation}
The condition $F(0)=0$ ensures regularity at the origin, whereas
$F'(0)=1$ fixes the unit weight of the equal-eigenvalue contact term in
\eqref{eq:staticmeasure}.  They
give the explicit static reconstruction
\begin{equation}
 F(u)=u+\int_0^u(u-t)c_0^X(t)\,dt.
 \label{eq:staticreconstruction}
\end{equation}
Thus static data supplemented by intensity and contact conventions can also
fix the integration constants; dynamics is not their only possible source.
The final condition gives $\int_0^\infty c_0^X(t)dt=-1$.

For positive $a$, the singular mode produces
\begin{equation}
 q_a(u):=\frac{d^2}{du^2}\frac{u^2}{u+a}
 =\frac{2a^2}{(u+a)^3},\qquad \int_0^\infty q_a(u)\,du=1.
 \label{eq:selfpeak}
\end{equation}
Since $\int_{\C}f(|w|^2)d^2w=\pi\int_0^\infty f(u)du$,
$\pi^{-2}q_a(|w|^2)d^2w$ converges to
$\pi^{-1}\delta^{(2)}(w)$, precisely the contact mass in
\eqref{eq:staticmeasure}.  The regular remainder carries total radial weight
$-1$.  This explains how a zero-total-mass positive-time covariance tends to
a static contact term plus a distinct-level correlation hole.

The normalization $\int_0^\infty c_a^X(u)\,du=0$ does not by itself
exclude an additional term $b_0/s^2$ in $\cK_X(s)$, because the
corresponding contribution after $s=u+a$ also integrates to zero.  This
ambiguity is removed by the regularity condition $F(0)=0$ in
\eqref{eq:contactconditions}, which excludes an $s^{-2}$ singularity of
$\cK_X(s)$ as $s\downarrow0$.  More quantitatively, the additional radial
term
\[
 b_0\left[-\frac{4a}{(u+a)^3}+\frac{6a^2}{(u+a)^4}\right]
\]
has zero integral over $u\geq0$, whereas the integral of its absolute value
is $16|b_0|/(27a)$.  Its positive and negative parts therefore each have
weight $8|b_0|/(27a)$, which diverges as $a\downarrow0$.  Hence this term
cannot converge as a finite signed measure to the contact-plus-distinct-level
limit described above.

\subsection{An exact velocity calculation in \texorpdfstring{classes $A$ and $\AiD$}{classes A and AI dagger}}

Let $Xr_i=\lambda_i r_i$ and $\ell_i^\dagger X=\lambda_i\ell_i^\dagger$,
with $\ell_i^\dagger r_i=1$.  For a simple eigenvalue and an independent
unit-strength matrix $Y$ in the same ensemble,
\[
 \left.\frac{d}{d\varepsilon}\lambda_i(X+\varepsilon Y)
 \right|_{\varepsilon=0}=\ell_i^\dagger Yr_i,
 \qquad O_{ii}=(\ell_i^\dagger\ell_i)(r_i^\dagger r_i).
\]
Conditional on $X$, the real and imaginary parts of $\dot\lambda_i$ are
independent centered real Gaussian variables with equal variance.  Thus
$\E(\dot\lambda_i\mid X)=\E(\dot\lambda_i^2\mid X)=0$, while its
conditional absolute second moment is
\begin{equation}
 \E(|\dot\lambda_i|^2\mid X)=
 \begin{cases}
 O_{ii}/N,&A,\\
 2O_{ii}/N,&\AiD.
 \end{cases}
 \label{eq:velocitytensors}
\end{equation}  
Indeed, in the symmetric case
$\ell_i^\dagger=r_i^T/(r_i^Tr_i)$, and the two contractions in
\eqref{eq:AIcov} each give
$(r_i^\dagger r_i)^2/(N|r_i^Tr_i|^2)$.

At a bulk point $z_0$, independent results
\cite{BourgadeDubach2020,Fyodorov2018,AkemannFyodorovSavin2026}
give the limiting distribution of the rescaled diagonal overlap observed
at an eigenvalue near $z_0$:
\begin{equation}
 m_A=\frac{O_{ii}}{N\chi_0},\qquad
 m_{\AiD}=\frac{2O_{ii}}{N\chi_0},\qquad
 p(m)=m^{-3}e^{-1/m}\1_{m>0}.
 \label{eq:knownoverlaps}
\end{equation}
More explicitly, if $m_i$ denotes the corresponding rescaled overlap,
then for any bounded test function $g$,
\[
 \frac{
 \E\sum_i\delta^{(2)}(z_0-\lambda_i)\,g(m_i)}
 {\E\sum_i\delta^{(2)}(z_0-\lambda_i)}
 \longrightarrow
 \int_0^\infty g(m)p(m)\,dm .
\]
The ratio on the left is the eigenvalue-resolved average of $g(m_i)$ at
$z_0$, normalized by the mean local eigenvalue density.  Equivalently,
one selects an eigenvalue near $z_0$ and records its rescaled overlap.
In probability terminology, this eigenvalue-conditioned distribution is
called the one-point Palm distribution.  In the complex-symmetric reference the spectral disk
has radius $\sqrt2$; putting its $r=\sqrt2|z_0|$ gives the scale $N\chi_0/2$ in
\eqref{eq:knownoverlaps}.  A scalar rescaling of a matrix does not change its
eigenvector overlaps.

For the correlated matrix pair $(X_1,X_\alpha)$ defined in
\eqref{eq:interpolation}, the leading microscopic eigenvalue displacement is
$p=\sqrt N\,\delta\lambda=v\dot\lambda$. Its conditional
variance is $am$ in either class.  Therefore
\begin{equation}
 \int_0^\infty\frac{e^{-|p|^2/(am)}}{\pi am}
 \frac{e^{-1/m}}{m^3}\,dm
 =\frac{2a^2}{\pi(a+|p|^2)^3}.
 \label{eq:velocitymixture}
\end{equation}
This calculation is exact for the linearized displacement and uses the
limiting bulk law of the rescaled overlap.  Its application to the full
displacement requires
the additional small-parameter regime.  The conjectured connected two-time density correlation obtained by replica
continuation has the matching short-time scaling limit
\begin{equation}
 a\,c_a^X(ax)\longrightarrow\frac{2}{(1+x)^3},
 \qquad a\downarrow0,
 \label{eq:smalltimescaling}
\end{equation}
for every fixed $x\geq0$.  This follows by substituting the explicit
small-$s$ expansions listed above into \eqref{eq:masterdensity}: the common
leading term $\cK_X(s)=s^{-1}$ produces the displayed limit, while the
remaining terms vanish after multiplication by $a$. In class $\AiD$,
the equality of the conditional displacement variances is the explicit
cancellation $2v^2O_{ii}/N=v^2\chi_0[2O_{ii}/(N\chi_0)]=am_{\AiD}$:
the factor $2$ in the matrix-noise contraction
\eqref{eq:velocitytensors} is compensated by the factor $1/2$ in the
natural bulk scale of $O_{ii}$ in \eqref{eq:knownoverlaps}.

The comparison involves two different iterated limits.  Finite-$N$
perturbation theory first takes $v\downarrow0$ at fixed $N$, in the order
$\lim_{N\to\infty}\lim_{v\downarrow0}$,
thereby replacing the microscopic displacement by its linearization
$p_{i,N}(v)=v\dot\lambda_{i,N}+o(v)$, and only afterwards uses the
large-$N$ eigenvalue-conditioned limit of the rescaled overlap.  In contrast,
the replica
calculation first takes the microscopic bulk limit at fixed
$a=v^2\chi_0>0$ and only then lets $a\downarrow0$:
\[
 \lim_{a\downarrow0}\lim_{N\to\infty}
 a\,c_{a,N}^X(ax)
 =
 \frac{2}{(1+x)^3}.
\]
Identifying these two limits rigorously would require
$p_{i,N}(v)=v\dot\lambda_{i,N}+o_{\mathbb P}(v)$
to hold uniformly in the large-$N$ Palm regime relevant to
$|p_{i,N}(v)|^2=O(a)$, including configurations with small eigenvalue gaps
and large eigenvector overlaps.  The agreement with
\eqref{eq:smalltimescaling} therefore provides an independent check of the
normalization of $a$ and of the self peak.

Here ``finite $a$'' means that the microscopic parameter
$a=v^2\chi_0>0$ is held fixed as $N\to\infty$; for the stationary
Ornstein--Uhlenbeck process this corresponds to a physical time difference
of order $N^{-1}$.  In this regime,
\[
 X_\alpha-X_1
 =
 \frac{v}{\sqrt N}X_2-\frac{v^2}{2N}X_1+O(N^{-3/2}),
\]
so the matrix increment and the typical bulk eigenvalue spacing are both
of order $N^{-1/2}$.  Consequently, mixing with neighboring eigenvalues
cannot in general be neglected.  For example, the perturbation series for
a simple eigenvalue begins as
\[
 \lambda_i(X+\varepsilon Y)-\lambda_i(X)
 =
 \varepsilon\,\ell_i^\dagger Yr_i
 +\varepsilon^2
 \sum_{j\ne i}
 \frac{(\ell_i^\dagger Yr_j)(\ell_j^\dagger Yr_i)}
      {\lambda_i-\lambda_j}
 +O(\varepsilon^3),
\]
and the small denominators $\lambda_i-\lambda_j=O(N^{-1/2})$ prevent this
expansion from being uniformly reducible to its first term at fixed
microscopic $a$.

There is also a distinction between endpoint and trajectory observables.
The two-time point-process intensity calculated in this paper is
\[
 \mathcal G_a(z,w)
 =
 \E\sum_{i,j}
 \delta^{(2)}(z-\lambda_i(0))
 \delta^{(2)}(w-\lambda_j(a))
 =
 \E[\rho_0(z)\rho_a(w)].
\]
It depends only on the joint distribution of the two endpoint spectra and
is invariant under independent permutations of their eigenvalue labels.
By contrast, once the full matrix evolution $X(t)$ has been specified at
all intermediate times and its eigenvalues have been followed continuously one may define the pathwise self intensity
\[
 \mathcal S_a(z,w)
 =
 \E\sum_i
 \delta^{(2)}(z-\lambda_i(0))
 \delta^{(2)}(w-\lambda_i(a)).
\]
The endpoint quantity $\mathcal G_a$ does not determine
$\mathcal S_a$: the latter additionally requires matching the endpoint eigenvalues by
continuous tracking along the chosen matrix evolution.

The decomposition used here follows instead from the singular part
$\cK_X(s)=s^{-1}+\cK_X^{\rm reg}(s)$.
It gives the exact algebraic identity
\begin{equation}
 c_a^X(u)
 =
 q_a(u)+r_a^X(u),
 \qquad
 q_a(u)=\frac{2a^2}{(u+a)^3},
 \qquad
 r_a^X(u)=
 \frac{\dd^2}{\dd u^2}
 \left[u^2\cK_X^{\rm reg}(u+a)\right].
 \label{eq:selfregdecomposition}
\end{equation}
The first term is positive, has unit radial mass
$\int_0^\infty q_a(u)\dd u=1$, and concentrates on $u=O(a)$.  Moreover,
\[
 a\,q_a(ax)=\frac{2}{(1+x)^3},
 \qquad
 a\,r_a^X(ax)\longrightarrow0
 \quad (a\downarrow0),
\]
so $q_a$ is precisely the leading small-parameter self peak identified by
the velocity calculation.  At fixed positive $a$,
\eqref{eq:selfregdecomposition} remains an exact decomposition of the
endpoint covariance, whereas the pathwise quantity $\mathcal S_a$ is a
different, trajectory-resolved observable.  Determining
$\mathcal S_a$ for finite $a$ would require a multitime calculation that
retains both the eigenvalue matching along $X(t)$ and the corresponding
eigenvector overlaps.

\section{The intrinsic Kramers-plane diffusivity}
\label{sec:kramers}

\subsection{A plane invariant and a scalar doublet velocity}

Let $R=(r_1,r_2)$ span a semisimple right eigenspace of a self-dual
matrix, so that the symplectic form restricted to this plane is
$S=R^T\mathbb JR=\mathfrak sJ_2$, where
$J_2=\left(\begin{smallmatrix}0&1\\-1&0\end{smallmatrix}\right)$ and
$\mathfrak s\ne0$, its nondegeneracy following because distinct
eigenspaces are symplectically orthogonal and the total form is
nondegenerate.  Define
\begin{equation}
 \boxed{\cO_K(R)=\frac{\det(R^\dagger R)}{|\Pf(R^T\mathbb JR)|^2}
 =\frac{(r_1^\dagger r_1)(r_2^\dagger r_2)-|r_1^\dagger r_2|^2}
 {|r_2^T\mathbb Jr_1|^2}.}
 \label{eq:OK}
\end{equation}
Under $R\mapsto RB$, with $B\in GL(2,\C)$, numerator and denominator are
both multiplied by $|\det B|^2$.  Thus the overlap belongs to the plane,
not to a selected eigenvector inside it.  Moreover, $\cO_K\ge1$.  Indeed,
$\mathbb J\bar r_1$ is orthogonal to $r_1$ and has norm $\|r_1\|$; applying
Cauchy--Schwarz to the component of $r_2$ orthogonal to $r_1$ gives
$|r_2^T\mathbb Jr_1|^2\le\det(R^\dagger R)$.  Equality holds when the plane
admits an orthonormal Kramers pair, in particular for a normal self-dual
matrix.  Thus $\cO_K$ is literally a condition number: it measures the
departure of the eigenspace from a normal Kramers plane.

\begin{proposition}[Kramers-plane perturbation and covariance]
\label{prop:kramers}
For a self-dual perturbation $W$, satisfying
$W=\mathbb JW^T\mathbb J^{-1}$, both members of the doublet have the same
first-order shift
\begin{equation}
 \dot\lambda
 =\frac{r_2^T\mathbb JWr_1}{r_2^T\mathbb Jr_1}.
 \label{eq:Kvelocity}
\end{equation}
If $W$ is independent of $X$ and is drawn from the unit-disc-normalized
Gaussian class-$\AiiD$ ensemble defined by
\eqref{eq:selfdual}--\eqref{eq:AIIcov}, then
\[
 \E W_{ij}\overline{W_{kl}}
 =
 \frac1{2N}
 \left(\delta_{ik}\delta_{jl}+\mathbb J_{il}\mathbb J_{jk}\right),
 \qquad
 \E W_{ij}W_{kl}=0,
\]
and
\begin{equation}
 \E\bigl(|\dot\lambda|^2\mid R\bigr)
 =\frac{\cO_K(R)}{2N},
 \qquad
 \E\bigl(\dot\lambda^2\mid R\bigr)=0.
 \label{eq:Kvariance}
\end{equation}
The spectral projector $\Pi=RS^{-1}R^T\mathbb J$ satisfies
\begin{equation}
 \cO_K=\frac12\Tr(\Pi\Pi^\dagger),\qquad
 \dot\lambda=\frac12\Tr(\Pi W).
 \label{eq:projectoroverlap}
\end{equation}
\end{proposition}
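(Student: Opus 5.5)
We organize the argument around the spectral projector $\Pi$, since every claim then reduces to a short computation with $\Pi$.

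\textbf{Step 1: the projector and its properties.} Since $V_\lambda$ is semisimple and symplectically orthogonal to the other generalized eigenspaces, the left eigenspace is $\mathbb J^T\bar{\ }$-related to the right one. Concretely, row vectors $x^T\mathbb J$ with $x\in V_\lambda$ satisfy
\[
 x^T\mathbb JX=x^T X^T\mathbb J=\lambda x^T\mathbb J,
\]
where we used $X^T\mathbb J=\mathbb J X$, which follows from $X=\mathbb JX^T\mathbb J^{-1}$. Hence $L^\dagger=S^{-1}R^T\mathbb J$ is the biorthogonal left block, because $L^\dagger R=S^{-1}S=\1_2$. It follows that $\Pi=RL^\dagger$ is the rank-two spectral projector: it is idempotent, it commutes with $X$, and its kernel is the sum of the other eigenspaces, by symplectic orthogonality.

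\textbf{Step 2: the velocity.} Degenerate perturbation theory for a semisimple eigenvalue says that the first-order shifts are the eigenvalues of the $2\times2$ matrix $L^\dagger WR=S^{-1}R^TJ W R$. Write $S=\mathfrak sJ_2$ and $M=R^T\mathbb JWR$. Self-duality of $W$ gives $\mathbb JW=W^T\mathbb J$ up to the sign conventions $\mathbb J^T=-\mathbb J$, so
\[
 M^T=R^TW^T\mathbb J^TR=-R^TW^T\mathbb JR=-R^T\mathbb JWR=-M .
\]
Thus $M$ is antisymmetric, $M=\mu J_2$, and $S^{-1}M=(\mu/\mathfrak s)\1_2$ is scalar. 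Both doublet members therefore shift by the same amount $\mu/\mathfrak s$. Reading off the $(2,1)$ entries gives $\mu/\mathfrak s=r_2^T\mathbb JWr_1/r_2^T\mathbb Jr_1$, with the sign ratio consistent, which is \eqref{eq:Kvelocity}. The same computation gives $\Tr(\Pi W)=\Tr(L^\dagger WR)=2\dot\lambda$.

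\textbf{Step 3: the overlap identity.} We compute
\[
 \Tr(\Pi\Pi^\dagger)=\Tr\bigl(R^\dagger R\,L^\dagger L\bigr),\qquad L^\dagger L=S^{-1}R^T\bar R\,\bar S^{-1\dagger},
\]
where the middle factor uses $\mathbb J\mathbb J^\dagger=\1$. With $G=R^\dagger R$ we have $R^T\bar R=\bar G=G^T$. Since $S^{-1}=-\mathfrak s^{-1}J_2$, this gives $L^\dagger L=|\mathfrak s|^{-2}J_2G^TJ_2^T$. For a $2\times2$ matrix, $J_2G^TJ_2^T=\operatorname{adj}(G)$, so
\[
 \Tr\bigl(G\operatorname{adj}G\bigr)=2\det G .
\]
Also $|\Pf S|=|\mathfrak s|$. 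Hence $\tfrac12\Tr(\Pi\Pi^\dagger)=\cO_K$.

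\textbf{Step 4: the covariance.} Write $\dot\lambda=\tfrac12\sum_{ij}\Pi_{ji}W_{ij}$ and contract using the stated covariance tensor. Since $\E WW=0$, we get $\E(\dot\lambda^2\mid R)=0$ immediately. For the absolute second moment, the direct contraction gives $\frac1{4}\cdot\frac1{2N}\Tr(\Pi\Pi^\dagger)$. The exchange contraction gives
\[
 \frac1{4}\cdot\frac1{2N}\sum\Pi_{ji}\overline{\Pi_{lk}}\,\mathbb J_{il}\mathbb J_{jk}
 =\frac1{8N}\Tr\bigl(\Pi^T\mathbb J\,\bar\Pi\,\mathbb J^T\bigr),
\]
up to index bookkeeping. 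The key point, and the step needing the most care with signs, is that $\Pi$ is itself self-dual: $\mathbb J\Pi^T\mathbb J^{-1}=\Pi$. This follows from
\[
 \Pi^T=\mathbb J^TR\,S^{-T}R^T=-\mathbb J R\,S^{-1}\cdot(-1)R^T
\]
together with $S^T=-S$. Using this self-duality, the exchange term equals the direct term, so
\[
 \E(|\dot\lambda|^2\mid R)=2\cdot\frac{1}{8N}\cdot 2\cO_K=\frac{\cO_K}{2N}.
\]
The main obstacle is tracking the signs of $\mathbb J$, $J_2$ and $\mathbb J^{-1}=-\mathbb J$ so that the exchange term enters with a plus sign rather than cancelling. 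We would check this first on the normal case $R$ an orthonormal Kramers pair, where $\cO_K=1$.
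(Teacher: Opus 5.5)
Your proof is correct. Steps 1--3 essentially reproduce the paper's argument: the left dual basis $S^{-1}R^T\mathbb J$, antisymmetry of $R^T\mathbb JWR$ forcing a scalar shift, and your identity $J_2G^TJ_2^T=\operatorname{adj}G$, which is the paper's $GJ_2G^T=(\det G)J_2$ in another form.

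The genuine difference is Step 4. The paper computes the variance directly from the bilinear form $a^TWb$ with $a=-\mathbb Jr_2$ and $b=r_1$. There the exchange contraction contributes $-|r_1^\dagger r_2|^2$, which turns $\|r_1\|^2\|r_2\|^2$ into the Gram determinant $\det(R^\dagger R)$, and the projector identities are checked separately afterwards. You instead prove the projector identities first and then contract $\frac12\Tr(\Pi W)$, so that self-duality of $\Pi$ makes the exchange term equal to the direct one.

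The two splittings into direct and exchange terms differ because $\frac12\Tr(\Pi W)=-\frac1{2\mathfrak s}\bigl(r_2^T\mathbb JWr_1-r_1^T\mathbb JWr_2\bigr)$ agrees with $-r_2^T\mathbb JWr_1/\mathfrak s$ only on self-dual $W$. Since the covariance tensor only sees self-dual $W$, the totals coincide.

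Your route is basis-free and is exactly the doubling mechanism the paper invokes after the proof for the It\^o bracket. It also gives the cross-bracket $\frac1{4N}\Tr(\Pi_i\Pi_j^\dagger)$ in \eqref{eq:Kito} at no extra cost. The paper's route needs no self-duality lemma for $\Pi$ and displays $\det(R^\dagger R)$ explicitly.

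One correction is needed in the step you flagged. The exchange contraction is $\frac1{8N}\Tr(\Pi\mathbb J\bar\Pi\mathbb J^T)$, not $\frac1{8N}\Tr(\Pi^T\mathbb J\bar\Pi\mathbb J^T)$. Under self-duality the latter reduces to $\frac1{8N}\Tr(\Pi\bar\Pi)$, which is not $\frac1{8N}\Tr(\Pi\Pi^\dagger)$. For instance, it vanishes for the orthonormal Kramers plane in $\C^4$ spanned by $(1,0,i,0)^T/\sqrt2$ and $(0,-1,0,i)^T/\sqrt2$, so your proposed normal-case check would have caught it.

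With the correct expression, use $\bar\Pi=\mathbb J\Pi^\dagger\mathbb J^{-1}$ together with $\mathbb J^2=\mathbb J^{-1}\mathbb J^T=-\1$. This gives $\Pi\mathbb J\bar\Pi\mathbb J^T=\Pi\Pi^\dagger$, as required.

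Two further small points. The identity $W^T\mathbb J=\mathbb JW$ holds exactly, with no sign caveat. The last factor of $L^\dagger L$ should read $(S^{-1})^\dagger$.
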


\begin{proof}
Self-duality gives $X^T\mathbb J=\mathbb JX$, so $S^{-1}R^T\mathbb J$
is a left dual basis.  The matrix $R^T\mathbb JWR$ is antisymmetric and hence equals $tJ_2$ for a
scalar $t$, its product with $S^{-1}$ being $(t/\mathfrak s)\1_2$, which
proves the scalar shift together with its independence of the chosen
basis.

For any vectors $a,b$, \eqref{eq:AIIcov} gives
\[
 \E|a^TWb|^2=\frac1{2N}
 [\|a\|^2\|b\|^2+(a^T\mathbb J\bar b)(b^T\mathbb J\bar a)].
\]
Set $a=-\mathbb Jr_2$ and $b=r_1$.  Since
$\mathbb J^\dagger\mathbb J=\1$, one has
$\|a\|^2=r_2^\dagger r_2$, $\|b\|^2=r_1^\dagger r_1$,
$a^T\mathbb J\bar b=-r_1^\dagger r_2$, and
$b^T\mathbb J\bar a=r_2^\dagger r_1$.
Therefore
\begin{align}
 \E\left(
 \left|r_2^T\mathbb JWr_1\right|^2\mid R
 \right)
 &=
 \frac1{2N}
 \left[
  (r_1^\dagger r_1)(r_2^\dagger r_2)
  -|r_2^\dagger r_1|^2
 \right]=
 \frac{\det(R^\dagger R)}{2N}.
 \label{eq:Knumeratorvariance}
\end{align}
Moreover, $\E W_{ij}W_{kl}=0$ in \eqref{eq:AIIcov}, so
$\E[(r_2^T\mathbb JWr_1)^2\mid R]=0$.  Dividing by
$|r_2^T\mathbb Jr_1|^2=|\Pf(R^T\mathbb JR)|^2=|\mathfrak s|^2$ gives
$\E(|\dot\lambda|^2\mid R)=\cO_K(R)/(2N)$ and
$\E(\dot\lambda^2\mid R)=0$,
which proves \eqref{eq:Kvariance}.

It remains to verify the projector identities.  Put $G=R^\dagger R$ and
$S=R^T\mathbb JR=\mathfrak sJ_2$.
From $\Pi=RS^{-1}R^T\mathbb J$ and
$\mathbb J^\dagger\mathbb J=\1$,
\begin{align}
 \Tr(\Pi\Pi^\dagger)
 &=
 \Tr\left[
  RS^{-1}R^T\bar R(S^{-1})^\dagger R^\dagger
 \right]=
 \Tr\left[
  GS^{-1}G^T(S^{-1})^\dagger
 \right].
 \label{eq:projectorGramtrace}
\end{align}
For every $2\times2$ matrix $G$, $GJ_2G^T=(\det G)J_2$.  Since
$S^{-1}=-J_2/\mathfrak s$, $(S^{-1})^\dagger=J_2/\bar{\mathfrak s}$, and
$J_2^2=-\1_2$, equation \eqref{eq:projectorGramtrace} becomes
\begin{align}
 \Tr(\Pi\Pi^\dagger)
 &=
 -\frac1{|\mathfrak s|^2}\Tr\left(GJ_2G^TJ_2\right)
 \notag\\
 &=
 -\frac{\det G}{|\mathfrak s|^2}\Tr(J_2^2)
 =
 \frac{2\det G}{|\mathfrak s|^2}
 =
 2\cO_K(R).
\end{align}
This proves the first identity in \eqref{eq:projectoroverlap}.

Finally, write $R^T\mathbb JWR=tJ_2$ and use cyclicity of the trace, with
$S=\mathfrak sJ_2$:
\begin{align}
 \Tr(\Pi W)
 &=
 \Tr\left(S^{-1}R^T\mathbb JWR\right)
 =
 \Tr\left[(\mathfrak sJ_2)^{-1}(tJ_2)\right]
 \notag\\
 &=
 \Tr\left(\frac t{\mathfrak s}\1_2\right)
 =
 \frac{2t}{\mathfrak s}
 =
 2\dot\lambda.
\end{align}
Hence $\dot\lambda=\frac12\Tr(\Pi W)$, proving the second identity in
\eqref{eq:projectoroverlap}.
\end{proof}

These formulas also give a direct stochastic interpretation.  In stationary
Ornstein--Uhlenbeck dynamics, with the Brownian covariance normalized as in
\eqref{eq:AIIcov}, a locally followed doublet satisfies
\begin{equation}
 d\lambda_i=-\tfrac12\lambda_i\,dt+dM_i,\qquad
 d\langle M_i,\bar M_j\rangle_t
 =\frac1{4N}\Tr(\Pi_i\Pi_j^\dagger)\,dt.
 \label{eq:Kito}
\end{equation}
Here $M_i$ is the zero-mean stochastic part of the eigenvalue motion, and
$\langle\cdot,\cdot\rangle_t$ denotes quadratic covariation.  In
physicists' notation, conditional on the current matrix $X_t$, the bracket
may be read as
$d\langle M_i,\bar M_j\rangle_t
=\E[dM_i\,\overline{dM_j}\mid X_t]$.
For $i=j$, the coefficient of $dt$ is the instantaneous mean-square
displacement rate; for $i\ne j$, it gives the instantaneous covariance
between the motions of the two eigenvalue doublets.

Writing $d\mathcal B_t$ for the Brownian part of the matrix increment, its
unconjugated entrywise quadratic covariations vanish,
$d\langle(\mathcal B_t)_{ab},(\mathcal B_t)_{cd}\rangle=0$, while
\[
 d\langle(\mathcal B_t)_{ab},
 \overline{(\mathcal B_t)_{cd}}\rangle
 =
 \frac1{2N}
 \left(\delta_{ac}\delta_{bd}
 +\mathbb J_{ad}\mathbb J_{bc}\right)\dd t.
\]
Away from eigenvalue collisions, $\lambda_i$ is locally a holomorphic
function of the independent complex matrix entries.  Its second-order
holomorphic It\^o term is therefore
\[
 \frac12\sum_{a,b,c,d}
 \frac{\partial^2\lambda_i}
 {\partial X_{ab}\partial X_{cd}}\,
 d\langle(\mathcal B_t)_{ab},(\mathcal B_t)_{cd}\rangle
 =0.
\]
Thus the martingale increment is $dM_i=\tfrac12\Tr(\Pi_i\,d\mathcal B_t)$.
Its unconjugated quadratic covariation vanishes, so
$d\langle M_i,M_j\rangle_t=0$.
Taking $j=i$ in \eqref{eq:Kito} and using
\eqref{eq:projectoroverlap} gives
\begin{equation}
 d\langle M_i,\bar M_i\rangle_t
 =
 \frac1{4N}\Tr(\Pi_i\Pi_i^\dagger)\dd t
 =
 \frac{\cO_{K,i}}{2N}\dd t.
 \label{eq:Kdiagonalbracket}
\end{equation}
This is the infinitesimal form of the conditional variance
\eqref{eq:Kvariance}.
Both contractions in \eqref{eq:AIIcov} contribute equally here.  The direct
term $\delta_{ac}\delta_{bd}$ alone would give
$\Tr(\Pi_i\Pi_i^\dagger)/(8N)$, and the exchange term
$\mathbb J_{ad}\mathbb J_{bc}$ doubles it, because the spectral projector
of a self-dual matrix is itself self-dual.  This is the same doubling that
produces the factor $2$ in the complex-symmetric velocity variance
\eqref{eq:velocitytensors}.

\subsection{What follows for the overlap distribution}

Define the dimensionless Kramers-plane overlap
$m_K=\cO_K/(2N\chi_0)$.  Then the linearized microscopic displacement has
conditional variance
\begin{equation}
 \E(|p|^2\mid m_K)=am_K.
 \label{eq:universalvelocity}
\end{equation}
This is a consequence of the normalization chosen for $m_K$ and
\cref{prop:kramers}.  The independent derivation of the coefficient of $a$
is \eqref{eq:masscalculation}; \eqref{eq:universalvelocity} alone cannot
check the mean of $m_K$.

This conditional argument leads to the following concrete conjecture.
\begin{conjecture}[Kramers-plane overlap law]
Suppose the rescaled Kramers-plane overlap $m_K$ has limiting law $\mu$,
and its limiting linearized velocity mixture agrees with the
replica-derived short-time limit of the connected density correlation in
\eqref{eq:smalltimescaling}.  Together with \eqref{eq:Kvariance}, this
requires  
\begin{equation}
 \int_0^\infty m^{-1}e^{-x/m}\,\mu(dm)=\frac{2}{(1+x)^3},\qquad x\ge0.
 \label{eq:Klaplace}
\end{equation}
With $t=1/m$, this is a Laplace transform.  Its unique inverse is
\begin{equation}
 \mu(dm)=m^{-3}e^{-1/m}\,dm.
 \label{eq:Kprediction}
\end{equation}
\end{conjecture}
Thus the conjectured overlap law follows if the short-time self
contribution to the spectral correlation is identified with the
distribution generated by first-order eigenvalue motion.  An independent
finite-$N$ derivation of the joint eigenvalue--overlap density remains
open.  The same argument predicts the conditional mean
$\E(\cO_K\mid z_0)\sim2N\chi_0$.  Convergence of the rescaled overlap
distribution to \eqref{eq:Kprediction} would not by itself prove this
mean asymptotic, because rare exceptionally large overlaps could still
make a non-negligible contribution; indeed, the predicted density decays
only as $p(m)\sim m^{-3}$.  The simulations in \cref{sec:numerics}
therefore test the distribution through inverse moments, which are finite
and much less sensitive to such rare large-overlap events.

\section{Static checks and the Calogero representation}

The exact finite-$N$ joint density of \cite{XiaoChenLiuRyu2026} provides an
independent description of the static ensembles.  To state it without
mixing normalizations, use that paper's coordinates
$w_j=\sqrt{kN}\,z_j$, where $k=\beta/2$.  Its Gaussian weight is
$e^{-\Tr H^\dagger H/\kappa}$, with $\kappa=1$ for $A,\AiD$ and $\kappa=2$
for $\AiiD$, and its bulk density in the $w$ plane is $1/(\pi k)$.
In those variables,
\begin{equation}
 \rho_{N,X}(w_1,\ldots,w_N)
 =Z_{N,\beta}^{-1}|\Delta(w)|^\beta
 \Psi_k(w,i\bar w),\qquad
 \Delta(w)=\prod_{i<j}(w_j-w_i).
 \label{eq:calogerodensity}
\end{equation}
Here $\Psi_k$ is the asymptotically free scattering solution of the rational
Calogero Hamiltonian
\[
 H_k=-\sum_i\partial_{w_i}^2
 +\sum_{i\ne j}\frac{k(k-1)}{(w_i-w_j)^2},
\]
analytically evaluated at the indicated coordinates and momenta.  For $k=2$
it is a Baker--Akhiezer function; for $k=1/2$ the cited construction gives a
noncompact orbital-integral representation.  The Gaussian confinement is
contained in the scattering solution at $p=i\bar w$; an additional Gaussian
factor should not be inserted into \eqref{eq:calogerodensity}.

The exact finite-$N$ identity and the large-distance asymptotics derived in
\cite{XiaoChenLiuRyu2026} have different logical status.  The finite-$N$
joint-density representation is exact, whereas Xiao {\it et al.}
obtain the large-distance tail using a Wentzel--Kramers--Brillouin (WKB)
expansion together with a random-phase approximation for the structure
factor.  Their normalized connected correlation function
$h_X=\rho_{2,X}/\rho_{1,X}^2-1$ then has the asymptotic form
\[
 h_X(w_1,w_2)
 \sim\frac{4k(k-1)}{|w_1-w_2|^6}.
\]
 Dividing the Calogero coordinates $w_j$ by $\sqrt k$ changes their bulk
density from $1/(\pi k)$ to our normalization $1/\pi$.  The corresponding
spectral separation is therefore
$r=|w_1-w_2|/\sqrt k=\sqrt u$.
Recall from \eqref{eq:staticmeasure} that $c_0^X(u)$ denotes the
dimensionless distinct-level part of the equal-parameter connected
covariance, with the diagonal self-correlation removed.  Equivalently,
\begin{equation}
 c_0^X(u)=\pi^2R_2^X(u)-1,
 \label{eq:c0pairrelation}
\end{equation}
where $R_2^X$ is the distinct-level pair intensity at bulk density
$1/\pi$.  In this normalization, the large-distance predictions of
\cite{XiaoChenLiuRyu2026} become
\begin{equation}
 c_0^{\AiD}(r^2)\sim-\frac8{r^6},
 \qquad
 c_0^{\AiiD}(r^2)\sim\frac1{r^6}.
 \label{eq:Calogerotails}
\end{equation}
If one instead denotes the unnormalized connected pair intensity by
$\rho_2^{(c)}$, then $c_0=\pi^2\rho_2^{(c)}$ in our microscopic coordinates.

The kernel check is especially transparent.  If
$\cK_X(u)=d_Xu^{-2}+e_Xu^{-3}+O(u^{-4})$, then
\[
 \frac{d^2}{du^2}[u^2\cK_X(u)]=2e_Xu^{-3}+O(u^{-4}).
\]
The values $e_{\AiD}=-4$ and $e_{\AiiD}=1/2$ give exactly
\eqref{eq:Calogerotails}, so that the test probes the first subleading
coefficient including its sign, whereas in class $A$ one has $e_A=0$ and
the static correlation is instead $-e^{-u}$.  Thus the kernel asymptotics reproduce the independently known static tails
in all three symmetry classes, including their signs and normalizations.

There is also a direct static formula check, independent of a tail
expansion.  Since $u^2\cK_{\AiD}(u)=-2\mathfrak A(u)$ and
$u^2\cK_{\AiiD}(u)=\mathfrak B(u)$, \eqref{eq:static} reads
\[
 \pi^2R_2^{\AiD}(u)=1-2\mathfrak A''(u),\qquad
 \pi^2R_2^{\AiiD}(u)=1+\mathfrak B''(u).
\]
With \eqref{eq:Adef}--\eqref{eq:Bdef}, these are the two explicit functions
in Eq.~(16) of \cite{ChenXiaoLiuRyu2026}, in the same unfolded coordinates.

\section{Counting statistics and the geometry of a disk}

\subsection{The disk-overlap Abel transform and its contact term}

Let $D_R=\{z\in\C:|z|\leq R\}$ and
$\mathcal N_\sigma(D_R)=\int_{D_R}\rho_\sigma(z)\dd^2z$.
For a translation-invariant microscopic covariance measure $C_a$, Fubini's
formula gives
\begin{align}
 \Cov\bigl(\mathcal N_0(D_R),\mathcal N_a(D_R)\bigr)
 &=
 \int_{\C}
 \left[
  \int_{\C}
  \1_{D_R}(z)\1_{D_R}(z-w)\dd^2z
 \right]C_a(dw)
 \notag\\
 &=
 \int_{\C}
 \bigl|D_R\cap(D_R-w)\bigr|\,C_a(dw),
 \label{eq:lenscovariance}
\end{align}
where $|\cdot|$ denotes planar area.

At microscopic intensity $1/\pi$,
$\E\mathcal N_\sigma(D_R)=|D_R|/\pi=R^2$.  We therefore set $y=R^2$ and
$u=|w|^2$,
and define the disk-overlap area
\[
 G_y(u)
 :=
 \bigl|D_{\sqrt y}\cap(D_{\sqrt y}-w)\bigr|,
 \qquad |w|=\sqrt u.
\]

Set $R=\sqrt y$ and $d=\sqrt u$.  When $d\leq2R$, the intersection of
the two disks consists of two congruent circular segments.  Each segment
is a sector of half-angle $\arccos(d/(2R))$ with the corresponding
isosceles triangle removed.  Summing the two sector areas and subtracting
the two triangle areas gives
\begin{equation}
 G_y(u)
 =
 2y\arccos\left(\frac{\sqrt u}{2\sqrt y}\right)
 -\frac{\sqrt u}{2}\sqrt{4y-u},
 \qquad 0\leq u\leq4y.
 \label{eq:lensarea}
\end{equation}
For $u\geq4y$, the disks do not overlap and $G_y(u)=0$. In particular,
$G_y(0)=\pi y$ and $G_y(4y)=0$; for $0<u<4y$,
\begin{equation}
 G_y'(u)=-\frac{\sqrt{4y-u}}{2\sqrt u},
 \qquad
 G_y''(u)=\frac{y}{u^{3/2}\sqrt{4y-u}}.
 \label{eq:lensderivatives}
\end{equation}

For a function $f$ for which the integral converges, define the
\emph{disk Abel transform}
\begin{equation}
 (\mathcal A_{\mathrm{disk}}f)(y)
 :=
 \frac{y}{\pi}\int_0^{4y}
 \frac{\sqrt u\,f(u)}{\sqrt{4y-u}}\dd u.
 \label{eq:diskAbeltransform}
\end{equation}
It is a weighted form of the classical Abel integral
$g\mapsto\int_0^L g(u)(L-u)^{-1/2}\dd u$; the square-root kernel in
\eqref{eq:diskAbeltransform} is the reason for the terminology.

\begin{proposition}[Disk covariance as an Abel transform]
\label{prop:diskAbel}
Let $a>0$ and suppose that the connected microscopic covariance measure is
\[
 C_a(dw)=\frac1{\pi^2}F_a''(|w|^2)\dd^2w,
 \qquad
 F_a(u)=u^2f_a(u),
 \qquad
 f_a(u)=\cK_X(u+a),
\]
where $\cK_X$ is one of the kernels in \eqref{eq:threeK}.  Then
\begin{equation}
 \Cov\bigl(\mathcal N_0(D_R),\mathcal N_a(D_R)\bigr)
 =
 (\mathcal A_{\mathrm{disk}}f_a)(y)
 =
 \frac{y}{\pi}\int_0^{4y}
 \frac{\sqrt u\,\cK_X(u+a)}{\sqrt{4y-u}}\dd u,
 \label{eq:diskAbelcovariance}
\end{equation}
which is \eqref{eq:countcov}.

At equal parameters, let the full covariance measure be
\[
 C_0^X(dw)
 =
 \frac1\pi\delta^{(2)}(w)\dd^2w
 +\frac1{\pi^2}F_0''(|w|^2)\dd^2w,
 \qquad
 F_0(u)=u^2\cK_X(u).
\]
Then
\begin{equation}
 \Var N(D_R)
 =
 (\mathcal A_{\mathrm{disk}}\cK_X)(y),
 \label{eq:staticdiskAbel}
\end{equation}
which is \eqref{eq:numbervariance}.
\end{proposition}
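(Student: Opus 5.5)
Starting from the lens formula \eqref{eq:lenscovariance}, pass to polar coordinates in $w$. Since $G_y$ depends only on $u=|w|^2$ and $d^2w=\pi^{-1}\cdot\pi\,du\,d\theta/2$, so that $\int_\C g(|w|^2)d^2w=\pi\int_0^\infty g(u)du$, the positive-time covariance becomes
\[
 \Cov=\frac1\pi\int_0^{4y}G_y(u)F_a''(u)\dd u ,
\]
with the upper limit $4y$ because $G_y$ vanishes for $u\ge4y$. Then integrate by parts twice, using \eqref{eq:lensderivatives}.

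The first integration gives the boundary term $[G_yF_a']_0^{4y}$. It vanishes at $4y$ since $G_y(4y)=0$. At $0$ it equals $-\pi yF_a'(0)$, and for $a>0$ we have $F_a'(0)=0$ because $F_a(u)=u^2\cK_X(u+a)$ is smooth at $u=0$. What remains is $-\pi^{-1}\int_0^{4y}G_y'F_a'\,du$.

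The second integration gives the boundary term $[G_y'F_a]$. At $4y$ it vanishes because $G_y'(4y)=0$. At $0$ we have $G_y'(u)\sim-\sqrt y/\sqrt u$ while $F_a(u)=O(u^2)$, so the product tends to zero. This leaves $\pi^{-1}\int_0^{4y}G_y''F_a\,du$. Substituting $G_y''=y\,u^{-3/2}(4y-u)^{-1/2}$ and $F_a=u^2\cK_X(u+a)$ gives exactly $(\mathcal A_{\mathrm{disk}}f_a)(y)$. Every integrand here is integrable: the singularity at $u=4y$ is only of inverse-square-root type, and $\sqrt u\,\cK_X(u+a)$ is bounded.

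For the static identity, the contact term contributes $\pi^{-1}G_y(0)=y$. The distinct-level part is handled as above, but now $F_0(u)=u^2\cK_X(u)=u-\tfrac12u^2+\dots$ by the table of small-$s$ expansions. Hence $F_0(0)=0$ and $F_0'(0)=1$, as in \eqref{eq:contactconditions}. The first boundary term is now $-\pi^{-1}(-G_y(0)F_0'(0))=-y$ when written with the correct sign, and it exactly cancels the contact contribution $+y$. In the second integration, $G_y'F_0\sim-\sqrt y\,\sqrt u\to0$, so there is again no boundary term. The integrand $G_y''F_0\sim y^{1/2}u^{-1/2}$ is integrable at $0$, and the same substitution yields $(\mathcal A_{\mathrm{disk}}\cK_X)(y)$.

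The main obstacle is the bookkeeping near $u=0$ in the static case. We must justify integrating by parts with $F_0''$, which may be only mildly singular at $0$: for $\AiD$ it has a $\log u$ term, though it remains integrable. We must also verify that the contact mass is cancelled exactly and not doubled. To handle this, I would integrate by parts on $[\delta,4y-\delta]$ and let $\delta\downarrow0$, using the explicit expansions to control each boundary term.

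Alternatively, one can deduce the static formula from the positive-time one by letting $a\downarrow0$. This works because $q_a$ converges to the contact mass by \eqref{eq:selfpeak}, and $\cK_X(u+a)\to\cK_X(u)$ with dominated convergence under the weight $\sqrt u/\sqrt{4y-u}$, since $\sqrt u\,\cK_X(u+a)\le C(1+\sqrt u)^{-1}$ near $0$. This alternative serves as a cross-check.
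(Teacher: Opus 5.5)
Your proof is correct and is essentially the paper's argument: reduction to $\pi^{-1}\int_0^{4y}G_yF_a''\,du$, two integrations by parts on $[\varepsilon,4y-\varepsilon]$ with every boundary term vanishing for $a>0$, and at $a=0$ the lower boundary term $\pi^{-1}[G_yF_0']_0^{4y}=-\pi^{-1}G_y(0)F_0'(0)=-y$ cancelling the contact contribution $+y$ (your intermediate expression for this term has a stray extra minus sign, but the value $-y$ you state is right). In your optional $a\downarrow0$ cross-check, $\sqrt u\,\cK_X(u+a)$ is not uniformly bounded near $u=0$, since it is about $1/(2\sqrt a)$ at $u=a$; the majorant $Cu^{-1/2}$, which follows from $|\cK_X(s)|\le C/s$, is the one that makes dominated convergence work.
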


\begin{proof}
For $a>0$, substitution of the absolutely continuous covariance measure
into \eqref{eq:lenscovariance}, followed by the polar change of variables
$u=|w|^2$, gives
\begin{equation}
 \Cov\bigl(\mathcal N_0(D_R),\mathcal N_a(D_R)\bigr)
 =
 \frac1\pi\int_0^{4y}G_y(u)F_a''(u)\dd u.
 \label{eq:radialdiskcovariance}
\end{equation}
Integrating twice by parts on
$[\varepsilon,4y-\varepsilon]$ yields
\begin{align}
 \frac1\pi\int_0^{4y}G_yF_a''\dd u
 &=
 \frac1\pi
 \left[G_yF_a'-G_y'F_a\right]_{0}^{4y}
 +\frac1\pi\int_0^{4y}G_y''F_a\dd u.
 \label{eq:diskIBP}
\end{align}
At $u=4y$, both $G_y(u)$ and $G_y'(u)$ vanish.  Since $a>0$,
the kernel is regular at $u+a=a$, and hence
$F_a(u)=u^2\cK_X(a)+O(u^3)$ and
$F_a'(u)=2u\cK_X(a)+O(u^2)$ as $u\downarrow0$.
Together with $G_y(0)=\pi y$ and
$G_y'(u)=O(u^{-1/2})$, both lower-end products
$G_yF_a'$ and $G_y'F_a$ tend to zero.
Thus the boundary term in \eqref{eq:diskIBP} vanishes.  Using
\eqref{eq:lensderivatives} and $F_a(u)=u^2f_a(u)$ in the remaining integral,
we obtain
\begin{align}
 \frac1\pi\int_0^{4y}G_y''(u)F_a(u)\dd u
 &=
 \frac{y}{\pi}\int_0^{4y}
 \frac{u^2f_a(u)}
 {u^{3/2}\sqrt{4y-u}}\dd u
 \notag\\
 &=
 \frac{y}{\pi}\int_0^{4y}
 \frac{\sqrt u\,f_a(u)}{\sqrt{4y-u}}\dd u
 =
 (\mathcal A_{\mathrm{disk}}f_a)(y).
\end{align}
This proves \eqref{eq:diskAbelcovariance}.

At equal parameters, the common expansion
$\cK_X(u)=u^{-1}-1/2+o(1)$ gives
$F_0(u)=u-u^2/2+o(u^2)$, $F_0(0)=0$, and $F_0'(0)=1$.
The lower-end boundary term in \eqref{eq:diskIBP} is therefore
$-G_y(0)/\pi=-y$.
On the other hand, the diagonal self-correlation supported at the
coincident spectral coordinates $w=0$ contributes $G_y(0)/\pi=y$.
The two contributions cancel:
\begin{equation}
 \underbrace{-y}_{\text{lower boundary of the distinct-level part}}
 +
 \underbrace{y}_{\text{diagonal self-correlation at }w=0}
 =0.
 \label{eq:contactcancellation}
\end{equation}
The remaining integral in \eqref{eq:diskIBP} is
$(\mathcal A_{\mathrm{disk}}\cK_X)(y)$, proving
\eqref{eq:staticdiskAbel}.

Finally, the endpoint limits used above are integrable.  For $a>0$,
$G_y''(u)F_a(u)=O(u^{1/2})$ as $u\downarrow0$, while for $a=0$ it is
$O(u^{-1/2})$.  At the upper endpoint it is
$O((4y-u)^{-1/2})$.  Hence the integrations by parts follow by taking
$\varepsilon\downarrow0$.
\end{proof}

Equation \eqref{eq:contactcancellation} displays the role of the contact
term explicitly: the coincident-point self-correlation cancels the lower
boundary contribution of the distinct-level covariance, leaving precisely
the disk Abel transform \eqref{eq:staticdiskAbel}.
\subsection{Small windows}

For $q>-3/2$ the elementary substitution $u=4y\sin^2\theta$ gives
\begin{equation}
 \int_0^{4y}\frac{u^{q+1/2}}{\sqrt{4y-u}}\,du
 =(4y)^{q+1}B(q+\tfrac32,\tfrac12).
 \label{eq:Abelmoments}
\end{equation}
Here $B(x,y)=\Gamma(x)\Gamma(y)/\Gamma(x+y)$ is Euler's beta function.
Thus a term $k_qu^q$ in the kernel contributes
$yk_q(4y)^{q+1}B(q+3/2,1/2)/\pi$.  In particular, the contributions for
$q=-1,0,1,2$ are respectively $y$, $2k_0y^2$, $6k_1y^3$, and
$20k_2y^4$.  Differentiating \eqref{eq:Abelmoments} with respect to $q$
treats logarithmic terms.  We obtain
\begin{align}
 \Sigma_A^2(y)&=y-y^2+y^3-\frac56y^4+O(y^5),\nonumber\\
 \Sigma_{\AiD}^2(y)&=y-y^2-\frac{y^3}{2}
 [\log y+\gamma-\log2-\tfrac56]+o(y^3),\nonumber\\
 \Sigma_{\AiiD}^2(y)&=y-y^2+\frac23y^3-\frac{28}{45}y^5+O(y^6).
 \label{eq:smallwindows}
\end{align}
The symbol $\gamma$ in \eqref{eq:smallwindows} is Euler's constant and
should not be confused with the trace multiplier $\gamma_X$.

The linear and quadratic terms, $y-y^2$, are common to all three
symmetry classes; the first symmetry-dependent contribution appears at
cubic order, with a logarithmic modification in class $\AiD$.  This
structure can be anticipated directly.  The equal-eigenvalue contact
term contributes $y$, while $R_2^X(0)=0$ eliminates an order-$y^2$
contribution from distinct eigenvalue pairs, leaving $-y^2$ from the
square of the mean count.  Within a disk containing $y$ eigenvalues on
average, typical squared separations satisfy $u=O(y)$.  The behaviors
$R_2^X(u)=O(u)$ in classes $A$ and $\AiiD$, and
$R_2^{\AiD}(u)=O(u|\log u|)$, therefore produce corrections of order
$y^3$ and $y^3\log y$, respectively.  The precise coefficients are determined by the small-$u$ expansions of
the symmetry-class-dependent kernels in \eqref{eq:threeK}, through the
moment formula \eqref{eq:Abelmoments}.

The logarithmic term is a distinctive feature of class $\AiD$.  Indeed,
the small-$u$ expansion in \eqref{eq:threeK} gives
\[
 \cK_{\AiD}(u)
 =
 \frac1u-\frac12
 -\frac{u}{12}
 \bigl(\log u+\gamma-\log2-2\bigr)+o(u).
\]
Consequently, the static relation \eqref{eq:static} yields
\begin{align}
 \pi^2R_2^{\AiD}(u)
 &=1+\frac{\dd^2}{\dd u^2}
 \left[u^2\cK_{\AiD}(u)\right]
 \notag\\
 &=-\frac{u}{2}\log u+O(u),
 \qquad u\downarrow0.
 \label{eq:AIlogrepulsion}
\end{align}
Equivalently, in terms of the spectral distance $r=\sqrt u$,
$\pi^2R_2^{\AiD}(r^2)=-r^2\log r+O(r^2)$.
This logarithmically corrected quadratic pair repulsion was derived from the
Hermitian/non-Hermitian duality in \cite{ChenXiaoLiuRyu2026}; its origin in
the critical $k=1/2$ Calogero collision problem is explained by the exact
joint-density construction of \cite{XiaoChenLiuRyu2026}.
The two-dimensional radial Jacobian contributes one additional power of
$r$, yielding the associated logarithmically corrected cubic
nearest-neighbour behavior
$p_{\mathrm{NN}}(r)\propto-r^3\log r$.
The bulk and edge data of \cite{AkemannEtAl2026Edge} are compatible with
this law, although over the numerically accessible range they do not clearly
distinguish it from a pure cubic.

The kernel contains $-(u/12)\log u$.  Under the disk-overlap transform
\eqref{eq:numbervariance}, it contributes $-\tfrac12y^3\log y+O(y^3)$,
precisely the logarithmic term in
$\Sigma_{\AiD}^2(y)$.  Thus the unusual short-distance repulsion of
class $\AiD$ remains visible after integrating the point process over a
small disk. In class $\AiiD$, by contrast, the coefficient of $u^2$ in the kernel
vanishes, which accounts for the absence of a $y^4$ term in
\eqref{eq:smallwindows}.  

\subsection{Perimeter coefficients and the first correction}

Let $d_X=2/\beta$, where $\beta=2,1,4$ for $A,\AiD,\AiiD$.  The kernels
satisfy $\cK_X(u)=d_Xu^{-2}+O(u^{-3})$ at infinity.  Besides
\eqref{eq:kappa}, define the convergent subtracted moment
\begin{equation}
 \mathcal M_X=\int_0^\infty
 u^{3/2}\left[\cK_X(u)-\frac{d_X}{u^2}\right]du.
 \label{eq:subtractedmoment}
\end{equation}
The integral in \eqref{eq:subtractedmoment} is absolutely convergent.
Indeed, using the small-$u$ expansion
$\cK_X(u)=u^{-1}-1/2+o(1)$, the integrand is
$-d_Xu^{-1/2}+u^{1/2}+O(u^{3/2})$ as $u\downarrow0$.  Since
$\cK_X(u)=d_Xu^{-2}+O(u^{-3})$, it is $O(u^{-3/2})$ as $u\to\infty$.
Both endpoint behaviors are integrable.

\begin{proposition}[Large-disk expansion]
For the three specified kernels,
\begin{equation}
 \Sigma_X^2(y)=\kappa_X\sqrt y+\frac{\beta_X}{\sqrt y}
 +o(y^{-1/2}),\qquad \beta_X=\frac{\mathcal M_X}{16\pi}.
 \label{eq:perimeterexpansion}
\end{equation}
The leading coefficients are \eqref{eq:kappas}, and the corrections are
\begin{equation}
 \begin{aligned}
 \beta_A&=-\frac1{16\sqrt\pi},\\
 \beta_{\AiD}&=-\frac1{8\sqrt\pi}
 \left[1+\frac{3\pi}{2\sqrt2}-\frac{3\log(1+\sqrt2)}{\sqrt2}\right],\\
 \beta_{\AiiD}&=\frac{3\log(1+\sqrt2)-\sqrt2}{64\sqrt\pi}.
 \end{aligned}
 \label{eq:betas}
\end{equation}
In particular, there is no order-one term in any of the three classes.
\end{proposition}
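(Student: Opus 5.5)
Starting from \eqref{eq:numbervariance}, substitute $u=4yt$ to get
\[
 \Sigma_X^2(y)=\frac{4y^2}{\pi}\int_0^1\sqrt{\frac{t}{1-t}}\,\cK_X(4yt)\,\dd t ,
\]
and split the kernel as $\cK_X(u)=d_Xu^{-2}\1_{u>1}+\bigl[\cK_X(u)-d_Xu^{-2}\1_{u>1}\bigr]$. The cutoff keeps each piece integrable at the origin. A cleaner alternative, which I will actually use, is to write $(4y-u)^{-1/2}=(4y)^{-1/2}\bigl[1+\tfrac{u}{8y}+r(u/4y)\bigr]$, where $r(t)=O(t^2)$ for $t\le1/2$. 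I then treat the region $u\le 2y$ by this Taylor expansion and the region $2y\le u\le4y$ using only the tail asymptotics of $\cK_X$.

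\textbf{Region $u\le2y$.} The leading term gives $\frac{\sqrt y}{2\pi}\int_0^{2y}\sqrt u\,\cK_X\dd u$, which equals $\kappa_X\sqrt y-\frac{\sqrt y}{2\pi}\int_{2y}^\infty\sqrt u\,\cK_X\dd u$. The tail integral is $\frac{\sqrt y}{2\pi}\bigl[2d_X(2y)^{-1/2}+O(y^{-3/2})\bigr]$, so it produces an order-one constant $-\frac{d_X}{\sqrt2\pi}$ plus $O(y^{-1})$.

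The next term, $\frac{1}{16\pi\sqrt y}\int_0^{2y}u^{3/2}\cK_X\dd u$, I rewrite using \eqref{eq:subtractedmoment}:
\[
 \int_0^{2y}u^{3/2}\cK_X\dd u=2d_X\sqrt{2y}+\mathcal M_X-\int_{2y}^\infty u^{3/2}\bigl(\cK_X-d_Xu^{-2}\bigr)\dd u .
\]
The last integral is $O(y^{-1/2})$ because $\cK_X-d_Xu^{-2}=O(u^{-3})$. This region therefore contributes a constant $\frac{d_X\sqrt2}{8\pi}$ together with the advertised $\mathcal M_X/(16\pi\sqrt y)$. The remainder $r$ contributes $\frac{y}{\pi\sqrt{4y}}\int_0^{2y}\sqrt u\,\cK_X\,O(u^2/y^2)\dd u$. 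Since $\sqrt u\,u^2\cK_X=O(\sqrt u)$ at infinity, this is $O(y^{-3/2}\cdot y^{3/2})=O(1)$ in general. To get $o(y^{-1/2})$ I must subtract the tail term $d_Xu^{-2}$ here too: the subtracted part is $O(y^{-3/2}\log y)$, and the $d_Xu^{-2}$ part gives another explicit constant.

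\textbf{Region $2y\le u\le4y$.} Here $\cK_X=d_Xu^{-2}+e_Xu^{-3}+O(u^{-4})$, and the integral $\frac y\pi\int_{2y}^{4y}\sqrt u\,d_Xu^{-2}(4y-u)^{-1/2}\dd u$ is exactly a constant times $y^0$ by scaling. The $e_X$ term is $O(y^{-1})$.

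\textbf{Cancellation of constants.} The order-one constants collected from all pieces are the whole contribution of the pure $d_Xu^{-2}$ tail. A cleaner bookkeeping is to note that they equal the scale-invariant value
\[
 \frac{y}{\pi}\int_0^{4y}\sqrt u\,\frac{d_X}{u^2}\,\frac{\dd u}{\sqrt{4y-u}}
\]
after Hadamard regularization consistent with the subtractions above. I will reorganize the whole proof this way: apply \eqref{eq:Abelmoments} at $q=-2$, where $B(-\tfrac12,\tfrac12)=0$, so the finite-part constant vanishes. This is precisely the source of ``no order-one term,'' and I expect verifying this cancellation carefully (matching the regularization to the subtractions in $\kappa_X$ and $\mathcal M_X$) to be the main obstacle. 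I would do it with the analytic-continuation form of \eqref{eq:Abelmoments}: decompose $\cK_X=d_Xu^{-2}+\cK_X^{(1)}$, apply the Mellin-type expansion of $(\mathcal A_{\mathrm{disk}}f)(y)$ for large $y$ with residues at $q=-1/2$ (giving $\kappa_X$, defined by continuation) and $q=-3/2$ (giving $\mathcal M_X$), and use the vanishing of the beta factor at $q=-2$.

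\textbf{Explicit coefficients.} Finally, compute $\kappa_X$ and $\mathcal M_X$ for the three kernels of \eqref{eq:threeK}. For class $A$, $\int_0^\infty u^{-1/2}\bigl(1-e^{-u}-\1\bigr)\dd u$ reduces to $\Gamma(-\tfrac12)$, giving $\mathcal M_A=-\sqrt\pi$ and hence $\beta_A$. For $\AiD$ and $\AiiD$, write $\cK_X=\tfrac12[\mathcal R_X(is/2\pi)-1]$ and integrate the defining integrals of $E_1$ and $\operatorname{Shi}$ by exchanging the order of integration. This yields elementary integrals of the form $\int_0^1 (1\pm t)^{-3/2}$ type, which produce $\pi/\sqrt2$ and $\log(1+\sqrt2)$. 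These are routine but lengthy, and I would verify them numerically against \eqref{eq:betas}.
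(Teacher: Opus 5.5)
Your argument is correct in structure. It uses the same three ingredients as the paper. The leading Abel weight gives $\kappa_X$. The first-order correction $u/(8y)$, paired with the subtracted moment, gives $\mathcal M_X/(16\pi\sqrt y)$. The order-one contributions of the tail $d_Xu^{-2}$ vanish because $B(-\tfrac12,\tfrac12)=0$, equivalently $\int_0^1x^{-3/2}[(1-x)^{-1/2}-1]\dd x=2=\int_1^\infty x^{-3/2}\dd x$.

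What differs is the order of operations. The paper writes the deviation from $\kappa_X\sqrt y$ as $\frac{\sqrt y}{2\pi}\bigl\{\int_0^L\sqrt u\,\cK_X[(1-u/L)^{-1/2}-1]\dd u-\int_L^\infty\sqrt u\,\cK_X\dd u\bigr\}$ with $L=4y$, and subtracts $d_X/u^2$ \emph{before} expanding. By scaling, that tail contributes $d_XL^{-1/2}(2-2)=0$ exactly, for every $L$. Only $R_X=\cK_X-d_X/u^2$ is then expanded, via $L[(1-u/L)^{-1/2}-1]\to u/2$ and dominated convergence.

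You expand first, so the tail contribution is scattered over four constants. The cancellation you single out as the main obstacle is in fact immediate; no Hadamard regularization or Mellin detour is needed. With $u=4yt$ and in units of $d_X/(4\pi)$, your four constants are $-\int_{1/2}^\infty t^{-3/2}\dd t$, $\int_0^{1/2}t^{-3/2}\,\tfrac t2\,\dd t$, $\int_0^{1/2}t^{-3/2}r(t)\dd t$ and $\int_{1/2}^1t^{-3/2}(1-t)^{-1/2}\dd t$. Since $\tfrac t2+r(t)=(1-t)^{-1/2}-1$, they sum to $\int_0^1t^{-3/2}[(1-t)^{-1/2}-1]\dd t-\int_1^\infty t^{-3/2}\dd t=2-2=0$. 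Your route buys an explicit rate: with the Taylor remainder, every discarded piece is $O(y^{-1})$, whereas the paper's dominated-convergence step records only $o(y^{-1/2})$. The paper's route is shorter and makes the vanishing constant manifest.

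Several details need correcting.
\begin{itemize}
\item The $R_X$-part of the Taylor remainder is $O(y^{-1})$, not $O(y^{-3/2}\log y)$, since $u^{5/2}R_X(u)=O(u^{-1/2})$ at infinity. This is still $o(y^{-1/2})$.
\item In class $A$ the integral is $-\int_0^\infty u^{-1/2}e^{-u}\dd u=-\Gamma(\tfrac12)$, not $\Gamma(-\tfrac12)=-2\sqrt\pi$. Your stated value $\mathcal M_A=-\sqrt\pi$ is nevertheless right.
\item In the convention of \eqref{eq:Abelmoments}, the $\sqrt y$ and $y^{-1/2}$ terms come from the poles of $B(q+\tfrac32,\tfrac12)$ at $q=-\tfrac32$ and $q=-\tfrac52$, not at $q=-\tfrac12$ and $q=-\tfrac32$.
\item For $\AiD$ and $\AiiD$ you cannot exchange the order of integration directly at $s=5/2$. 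The integrals $\int_0^\infty u^{3/2}\cK_X\dd u$ and $\int_0^\infty u^{3/2}d_Xu^{-2}\dd u$ diverge separately. After the exchange, the divergence reappears at the endpoint $t\to1$ (resp.\ $w\to1$), where the exponential decay rate vanishes. The $(1\pm t)^{-3/2}$-type integrals you anticipate, such as $\int_0^1[(1-t)^{-3/2}-(1+t)^{-3/2}]\,t^{-1}\dd t$, are divergent there.
\item The fix is to perform the exchange in the strip $1<\Re s<2$, where $M_X(s)=\int_0^\infty u^{s-1}\cK_X\dd u$ converges, and then continue analytically to $s=5/2$. This is equivalent to the finite-part prescription in the paper's appendix. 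It is needed to establish the stated closed forms for $\beta_{\AiD}$ and $\beta_{\AiiD}$, rather than only numerical agreement with them.
\end{itemize}
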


\begin{proof}
Put $L=4y$.  After extracting the leading coefficient from
\eqref{eq:numbervariance}, the difference divided by $\sqrt y/(2\pi)$ is
\begin{equation}
 \int_0^L\sqrt u\,\cK_X(u)[(1-u/L)^{-1/2}-1]du
 -\int_L^\infty\sqrt u\,\cK_X(u)du.
 \label{eq:perimeterdifference}
\end{equation}
The contribution of $d_X/u^2$ is exactly zero, because
$\int_0^1x^{-3/2}[(1-x)^{-1/2}-1]dx=2=
\int_1^\infty x^{-3/2}dx$.
Replace the kernel in \eqref{eq:perimeterdifference} by
$R_X(u)=\cK_X(u)-d_X/u^2$.  On $[0,L/2]$,
$L[(1-u/L)^{-1/2}-1]\to u/2$ with a bound by a constant times $u$.
The dominating function $u^{3/2}|R_X(u)|$ is integrable.  On $[L/2,L]$ the
tail bound $R_X(u)=O(u^{-3})$ makes the contribution to $L$ times the integral
$O(L^{-1/2})$; the omitted tail beyond $L$ has the same bound.
Hence \eqref{eq:perimeterdifference} is
$\mathcal M_X/(2L)+o(L^{-1})$.  This proves
\eqref{eq:perimeterexpansion}, including the absence of a constant.
The evaluations of both moments are given in \cref{app:moments}.
\end{proof}

This proof also identifies the failure of a naive expansion under the
original Abel integral: its first unregularized moment diverges, and the
upper endpoint contributes at the same order as the subtracted tail.
For a complementary Mellin description, define
$M_X(s)=\int_0^\infty u^{s-1}\cK_X(u)\dd u$, initially for
$1<\Re s<2$.  Choose any vertical contour $\Re s=\sigma$ with
$1<\sigma<3/2$.  Mellin inversion of $\cK_X$ in the disk Abel transform
then gives
\begin{equation}
 \Sigma_X^2(y)=\frac1{4\pi}\frac1{2\pi i}
 \int_{\sigma-i\infty}^{\sigma+i\infty}
 M_X(s)B(\tfrac32-s,\tfrac12)(4y)^{2-s}\dd s.
 \label{eq:mellinabel}
\end{equation}
Conversely, taking the Mellin transform with respect to $y$ gives
\[
 M_X(s)=
 \frac{4\pi\,4^{s-2}}{B(\tfrac32-s,\tfrac12)}
 \int_0^\infty y^{s-3}\Sigma_X^2(y)\dd y,
 \qquad 1<\Re s<\frac32.
\]
Thus knowledge of $\Sigma_X^2(y)$ for all $y>0$ determines $M_X(s)$
throughout this strip.

The leading tail $\cK_X(u)\sim d_Xu^{-2}$ produces a simple pole of
$M_X(s)$ at $s=2$.  Continuing past this pole by subtracting that tail,
as in \eqref{eq:subtractedmoment}, gives
$M_X(5/2)=\mathcal M_X$.
The half-integer poles of the beta factor produce perimeter corrections;
its zeros at $s=2,3,\ldots$ cancel the simple poles from the algebraic kernel
tails.  Thus the absence of integer powers is common to the three kernels.
Endpoint subtraction and integration by parts in $\log u$ justify shifting
this contour across any fixed finite number of poles. The real-variable calculation above establishes these coefficients
directly, while the Mellin representation identifies their origin in the
pole structure of the transform.

\section{Direct matrix checks}
\label{sec:numerics}

We sampled the Gaussian matrices from the constructions in
\cref{eq:AIcov,eq:selfdual}, including their diagonal variances and the
self-dual projection.  For each class and $N=32,64,96$ we generated 3000
independent pairs $(X,Y)$.  At $z_0=0$ we used
$X_a=e^{-a/(2N)}X+\sqrt{1-e^{-a/N}}Y$ for $a=0.3,1$, and counted eigenvalues
in $|z|^2<y/N$, for $y=1,4$.  In the self-dual class each doublet contributed
one point.  The window area fixes the limiting mean count $y$; no finite-$N$
density fit or fitted parameter mass was used.  For numerical efficiency, we reuse the same realization of $Y$ at
different values of $a$.  For each fixed $a$ this gives the correct
two-matrix marginal distribution, while correlating the numerical
estimates obtained at different $a$; each column is therefore interpreted
as a separate two-time test.

Table~\ref{tab:counts} displays the results for $y=4$.  Error bars denote
one estimated Monte Carlo standard error, obtained from the sample
standard deviation of
$(\mathcal N_0-\overline{\mathcal N}_0)
(\mathcal N_a-\overline{\mathcal N}_a)$ divided by the square root of
the sample size.  Values at different $a$ obtained from the same matrix
pairs are correlated.

For $N=96$ in class $\AiD$, the preliminary batch of $3000$ pairs gave
at $a=1$ an estimate approximately three preliminary standard errors
above the kernel prediction.  This prompted a follow-up batch of $10000$
independent pairs, whose size was fixed in advance.  The displayed row
is computed from the pooled set of $13000$ raw observations; the two
batches are also reported separately in the accompanying numerical
data.  Since the follow-up calculation was prompted by the preliminary
deviation, its error bars are interpreted as descriptive Monte Carlo
uncertainties.  The table therefore provides a finite-$N$ consistency check of the kernel
predictions.
\begin{table}[htbp]
\centering
\small
\begin{tabular}{c r c c c}
\toprule
class & $N$ & $K_X(0,4)$ & $K_X(0.3,4)$ & $K_X(1,4)$\\
\midrule
$A$ & 32 & $1.0856\pm0.0273$ & $0.6765\pm0.0237$ & $0.4492\pm0.0219$\\
$A$ & 64 & $1.1313\pm0.0287$ & $0.6742\pm0.0242$ & $0.4452\pm0.0218$\\
$A$ & 96 & $1.0890\pm0.0281$ & $0.6362\pm0.0238$ & $0.4301\pm0.0231$\\
$A$ & $\infty$ & 1.1103 & 0.6615 & 0.4274\\
\midrule
$\AiD$ & 32 & $1.3673\pm0.0338$ & $0.8995\pm0.0285$ & $0.6332\pm0.0267$\\
$\AiD$ & 64 & $1.2906\pm0.0338$ & $0.8625\pm0.0288$ & $0.6190\pm0.0271$\\
$\AiD$ & 96 & $1.3447\pm0.0168$ & $0.9028\pm0.0144$ & $0.6426\pm0.0132$\\
$\AiD$ & $\infty$ & 1.3496 & 0.8893 & 0.6287\\
\midrule
$\AiiD$ & 32 & $0.9759\pm0.0247$ & $0.5563\pm0.0205$ & $0.3029\pm0.0190$\\
$\AiiD$ & 64 & $0.9606\pm0.0255$ & $0.5450\pm0.0209$ & $0.3062\pm0.0188$\\
$\AiiD$ & 96 & $0.9645\pm0.0236$ & $0.5080\pm0.0193$ & $0.2840\pm0.0183$\\
$\AiiD$ & $\infty$ & 0.9536 & 0.5119 & 0.2951\\
\bottomrule
\end{tabular}
\caption{Disk-count covariance from direct matrix sampling.  The
$N=\infty$ rows evaluate the kernel prediction, not an extrapolation of the
data.  The $N=96$, $\AiD$ row uses 13000 samples; the other finite-$N$ rows
use 3000.}
\label{tab:counts}
\end{table}

These data resolve the different class-dependent covariances and are
compatible with the proposed parameter normalization.  Sampling error and
finite-size effects remain visible.  The $y=1$ results and the
number-difference variances are included in the accompanying data as
additional checks of the same calculations.

For the Kramers law we used all distinct eigenvalues with $|z|<0.6$ in the
same self-dual matrices.  Numerically paired doublets were orthonormalized by a two-column QR
decomposition, $R=QT$, so that \eqref{eq:OK} becomes
$\cO_K=|q_1^T\mathbb Jq_2|^{-2}$, which avoids a basis-dependent vector
normalization, and we evaluated $m=\cO_K/[2N(1-|z|^2)]$ at each retained
point.
The inverse-gamma prediction gives
\[
 \E(m^{-1})=2,\qquad \E(m^{-2})=6,\qquad
 \Pr(m\le x)=e^{-1/x}(1+x^{-1}).
\]

Table~\ref{tab:Koverlaps} reports the first two inverse moments of the
rescaled Kramers-plane overlap $m$, together with the cumulative
probability $\Pr(m\leq1)$. To account for
correlations among eigenvalues of the same matrix, the error bars are
computed across the $J=3000$ independent matrix realizations, rather than
across the individual doublets.  Let $\mathcal R_j$ be the set of retained
Kramers doublets in the $j$th matrix, and define
$A_j=\sum_{i\in\mathcal R_j}h(m_i)$ and
$B_j=|\mathcal R_j|$.  Then
$\widehat\mu=\sum_jA_j/\sum_jB_j$, and its estimated standard error is
$\operatorname{sd}_j(A_j-\widehat\mu B_j)/(\overline B\sqrt J)$, where
$\overline B=J^{-1}\sum_jB_j$.

\begin{table}[htbp]
\centering
\small
\begin{tabular}{r r c c c}
\toprule
$N$ & retained doublets & $\E(m^{-1})$ & $\E(m^{-2})$
& $\Pr(m\leq1)$\\
\midrule
32 & 35474 & $2.0124\pm0.0104$ & $5.8780\pm0.0579$
& $0.7493\pm0.0032$\\
64 & 70061 & $2.0011\pm0.0080$ & $5.9160\pm0.0457$
& $0.7410\pm0.0023$\\
96 & 104590 & $2.0104\pm0.0067$ & $6.0116\pm0.0389$
& $0.7416\pm0.0020$\\
\midrule
$\infty$ & --- & $2$ & $6$ & $2/e=0.7358\ldots$\\
\bottomrule
\end{tabular}
\caption{Statistics of the rescaled Kramers-plane overlap $m$, obtained
from $3000$ independent matrices at each size.  The final row gives the
values implied by the conjectured limiting density
\eqref{eq:Kprediction}.}
\label{tab:Koverlaps}
\end{table}

The inverse moments are compatible with the predicted limit.  The
distribution-function column still shows finite-size differences: at
$N=96$ the measured $\Pr(m\le1/2)$ is $0.4119\pm0.0022$, compared with
$3e^{-2}=0.4060\ldots$.  The residual differences indicate visible finite-$N$ corrections.  The
data support the conjectured limiting distribution, while its first-moment
asymptotic additionally requires control of the rare large-overlap tail.

\section{Eigenvector overlaps and mesoscopic spectral dynamics}
\label{sec:overlaps}

\subsection{The exact \texorpdfstring{class-$A$}{class-A} overlap identity
and its conjectured extension}

In class $A$, set $m_i=O_{ii}/[N(1-|z_0|^2)]$.  Consider two distinct
eigenvalues at $z_1,z_2\to z_0$, with
$N|z_1-z_2|^2=u>0$, and let $m_1,m_2$ be their rescaled overlaps.
The class-$A$ pair intensity, conditional overlap moment, and
overlap-weighted pair intensity are
\begin{align}
 g_A(u)
 &:=\pi^2R_2^A(u)=1-e^{-u},
 \notag\\
 M_A(u)
 &:=\lim_{N\to\infty}\E(m_1m_2\mid z_1,z_2)
 =\frac{1+u^2-e^{-u}}{u^2(1-e^{-u})},
 \label{eq:BDconditional}\\
 W_A(u)
 &:=\pi^2R_2^A(u)M_A(u)
 =g_A(u)M_A(u)
 =1+\frac{1-e^{-u}}{u^2}
 =1+\cK_A(u).
 \label{eq:overlapA}
\end{align}
The formula for $M_A$ is the microscopic bulk limit of
\cite[Theorem~1.4]{BourgadeDubach2020}.  Thus, in class $A$, the memory
kernel $\cK_A=W_A-1$ is exactly the connected part of the
overlap-weighted pair intensity.

\begin{conjecture}[Overlap-weighted pair identity]
For $X\in\{\AiD,\AiiD\}$, associate with each distinct eigenvalue the
dimensionless overlap
$m_{\AiD,i}=2O_{ii}/(N\chi_0)$ or
$m_{\AiiD,i}=\cO_{K,i}/(2N\chi_0)$, respectively.  In class $\AiiD$,
each Kramers doublet is counted once.  In the same microscopic bulk limit,
define
\begin{equation}
 \begin{aligned}
 W_X(u)
 &:=
 \pi^2R_2^X(u)
 \lim_{N\to\infty}
 \E(m_{X,1}m_{X,2}\mid z_1,z_2),\\
 W_X(u)
 &=1+\cK_X(u),
 \qquad X\in\{\AiD,\AiiD\}.
 \end{aligned}
 \label{eq:overlapconjecture}
\end{equation}
\end{conjecture}

The characteristic-polynomial generating functions in
\cref{prop:HS} contain no eigenvector-overlap factors and therefore do
not establish \eqref{eq:overlapconjecture}.  A natural route to a direct
derivation is the Kac--Rice construction of joint
eigenvalue--eigenvector densities introduced in
\cite{FyodorovKacRice2025} and adapted to two correlated class-$A$
matrices in \cite{CupponeFyodorov2026}.  This construction keeps the
normalized eigenvectors explicit at finite $N$, allowing their overlap
factors to be included before the large-$N$ saddle-point analysis.  Its
extension to class $\AiD$ must incorporate the complex-symmetric
eigenvector constraint, while class $\AiiD$ requires treating the full
two-dimensional Kramers eigenspace and its basis-independent overlap
$\cO_K$.

\subsection{What is already rigorous at mesoscopic scales}

For non-Hermitian Ornstein--Uhlenbeck evolution in symmetry class $A$,
Bourgade \emph{et al.} \cite{BourgadeCipolloniHuang2024} prove a space--time central limit theorem
for smooth linear statistics, including nonequilibrium initial conditions
with independent identically distributed entries.  Their theorem does not
cover the transposition-symmetric classes $\AiD$ and $\AiiD$.  Their overlap
correlation consequence is an equilibrium statement.  At mesoscopic
separations it is summarized by
\begin{equation}
 \Cov(O_{ii}(t_1),O_{jj}(t_2))
 \sim\frac{\chi_0^2}
 {[\chi_0|t_1-t_2|+|\lambda_i(t_1)-\lambda_j(t_2)|^2]^2}.
 \label{eq:BCHschematic}
\end{equation}
This notation is distributional: their Corollary~2.11 pairs centered
empirical spectral measures weighted by diagonal eigenvector overlaps with
smooth spatial test functions and integrates over separated time intervals.
It is not a pointwise conditional assertion for
a prescribed pair of microscopic eigenvalues.  The admissible mesoscopic scale is $N^{-\eta}$ with $0<\eta<1/2$, the
corresponding times being of order $N^{-2\eta}$.

In the overlap region between our microscopic variables and those scales,
$u+a=N[|z-w|^2+\chi_0|t_1-t_2|]$.  The class-$A$ behavior
$\cK_A(u+a)\sim(u+a)^{-2}$ has the same quartic decay.  It supplies a
Gaussian microscopic candidate compatible with that established
mesoscopic law. For the transposition-symmetric classes, the replica calculation predicts
the large-separation tail
$\cK_X(u+a)\sim(2/\beta)(u+a)^{-2}$, with $\beta=1$ for $\AiD$ and
$\beta=4$ for $\AiiD$.  Interpreting the prefactor $2/\beta$ as the
coefficient of an overlap-weighted two-eigenvalue correlation additionally
requires \eqref{eq:overlapconjecture} or its two-time analogue.

These two levels of description are complementary.  In class $A$, the
rigorous mesoscopic limits for smooth linear statistics provide multitime
covariance information, but they do not resolve the joint microscopic
behavior of individual eigenvalues and their eigenvector overlaps.  The
replica calculation instead gives an explicit microscopic spectral
response for all three classes.  In the transposition-symmetric classes,
identifying this response with an overlap-weighted two-eigenvalue
correlation requires a finite-$N$ calculation containing explicit
eigenvector-overlap factors.  At mesoscopic scales, temporal memory in the coupled spectral and
eigenvector-overlap fluctuations is captured by the non-Markovian limiting
field identified in \cite{BourgadeCipolloniHuang2024}.

\section{Discussion}

The calculation relates the entrywise correlation of the Gaussian matrix
pair $(X_1,X_\alpha)$ to its microscopic two-time spectral response. 
 At finite $N$, this relation is encoded in the exact auxiliary-field
integral representations of \cref{prop:HS}. In the limit $N\to\infty$ with the positive integer $n$ fixed, the
covariance deformation and the microscopic spectral separation contribute
additively to the coefficient of the same invariant,
$\Tr(U\Lambda_XU^\dagger\Lambda_X)$, in the effective compact-group action.
The resulting integral therefore depends only on the single combination
$u+a$, with $\Lambda_X$ specified explicitly for each symmetry class.
Adopting the replica continuation used in the static
Hermitian/non-Hermitian duality then yields the conjectured two-time
density-correlation kernels.  Exact integration over the disk geometry
gives the cross-time eigenvalue-count covariances, equal-time number
variances, and large-window rigidity coefficients.

At short times, microscopic spectral diffusion is governed by the
amplification of matrix noise by eigenvector nonorthogonality.  Its
normalization is fixed independently by the known diagonal-overlap laws
in classes $A$ and $\AiD$, and in class $\AiiD$ by the exact velocity
formula involving the basis-independent Kramers-plane overlap.  
As shown in \eqref{eq:infrared}, at large parabolic separation the three
kernels satisfy $\cK_X(s)\sim2/(\beta_Xs^2)$, where $\beta_X=2,1,4$ is
the Dyson index of the corresponding Hermitian class.  Thus their leading
long-range coefficient retains the simple threefold dependence on
$\beta_X$, although the exact $\AiD$ and $\AiiD$ joint eigenvalue
densities are not pairwise Coulomb gases but contain genuine
many-eigenvalue interactions.

The main unresolved mathematical step for the two-time spectral limit is
control of the replica continuation near $n=0$, including the required
source derivatives.  Natural next steps are an independent derivation of the conjectured
Kramers-plane overlap law in class $\AiiD$, finite-$N$ derivations of the
overlap-weighted two-eigenvalue identities, universality beyond the
Gaussian ensembles, and the edge regime where the radial Hessian becomes
degenerate.  The explicit kernels, normalization identities, and matrix
simulations obtained here provide quantitative benchmarks for these
extensions, while the velocity calculation identifies the underlying
physical mechanism of overlap-amplified matrix noise.

The real Ginibre ensemble provides a different but closely related extension.
At a complex bulk point whose distance from the real axis remains fixed as
$N\to\infty$, the additional auxiliary-field modes induced by the reality
condition are expected to remain massive, leaving precisely the class-$A$
microscopic parametric kernel derived here.  When
$\operatorname{Im}z_0=O(N^{-1/2})$, however, these modes become soft and
should generate a new dynamical crossover involving the depleted complex
spectrum, the real eigenvalues, and the creation or annihilation of
complex-conjugate pairs.  Developing this real-Ginibre crossover requires a
separate analysis.

\section*{Acknowledgements}

The research reported in this paper was supported by UKRI grant UKRI1015,
``Non-Hermitian Random Matrices: Theory and Applications.''  

The author acknowledges assistance from ChatGPT (OpenAI) in improving the
organization and clarity of the presentation, performing consistency checks,
and designing numerical checks.  Claude Opus 5 (Anthropic) was used for numerical computations
and for critical reading.  All scientific arguments, results, and conclusions
remain the responsibility of the author.

\appendix

\section{Evaluation of the kernel moments}
\label{app:moments}

\subsection{Leading perimeter coefficients}

In class $A$,
\[
 \int_0^\infty u^{-3/2}(1-e^{-u})du=2\sqrt\pi,
\]
which gives $\kappa_A=1/\sqrt\pi$.  For the complex-symmetric class,
differentiating \eqref{eq:Adef} first gives the useful closed expression
\begin{equation}
 \mathfrak A(s)=\sinh\frac s2[E_1(s/2)-e^{-s/2}]
 -\frac s2\cosh\frac s2 E_1(s/2).
 \label{eq:Aclosed}
\end{equation}
Equivalently,
\[
 \cK_{\AiD}(s)=\cK_A(s)+
 \frac{[s\cosh(s/2)-2\sinh(s/2)]E_1(s/2)}{s^2}.
\]
The additional term is nonnegative.  Insert
$E_1(u/2)=\int_1^\infty e^{-wu/2}dw/w$ and integrate over $u$.
Writing $\ell=\log(1+\sqrt2)$ gives
\begin{align}
 \kappa_{\AiD}&=\frac1{\sqrt\pi}-\frac{J}{\pi},\nonumber\\
 J&=\sqrt{\frac\pi2}\int_1^\infty\frac{dw}{w}
 \left[\sqrt{w+1}-\sqrt{w-1}
 -\frac12\left(\frac1{\sqrt{w-1}}+\frac1{\sqrt{w+1}}\right)\right].
 \label{eq:kappaAIintegral}
\end{align}
The bracket is $O(w^{-5/2})$.  With $y_\pm=\sqrt{w\pm1}$, an
antiderivative of the bracket divided by $w$ is
\[
 2(y_+-y_-)+\frac12\log\frac{y_+-1}{y_++1}+\arctan y_-.
\]
Its values at $w=1$ and $w=\infty$ are $2\sqrt2-\ell$ and $\pi/2$.
Consequently $J=\sqrt{\pi/2}(\pi/2-2\sqrt2+\ell)$, which proves the first
line of \eqref{eq:kappas}.

For $\AiiD$, write $\operatorname{Shi}(u)=\int_0^1\sinh(ut)dt/t$.
The three contributions to
\[
 I=\int_0^\infty u^{-3/2}e^{-u}
 [(1+u)\operatorname{Shi}(u)+\sinh u]du
\]
are
\begin{align*}
 I_1&=\sqrt\pi\int_0^1\frac{\sqrt{1+t}-\sqrt{1-t}}t\,dt
 =\sqrt\pi(2\sqrt2-2\ell),\\
 I_2&=\frac{\sqrt\pi}{2}\int_0^1
 \frac{(1-t)^{-1/2}-(1+t)^{-1/2}}t\,dt=\sqrt\pi\ell,\\
 I_3&=\frac12\int_0^\infty u^{-3/2}(1-e^{-2u})du=\sqrt{2\pi}.
\end{align*}
The parameter integrals follow by $w=\sqrt{1\pm t}$; their logarithmic
endpoint terms cancel between the two branches.  Thus
$I=\sqrt\pi(3\sqrt2-\ell)$ and
$\kappa_{\AiiD}=I/(4\pi)$.  These calculations establish both the values and
the ordering $\kappa_{\AiD}>\kappa_A>\kappa_{\AiiD}$.

\subsection{Subtracted moments and the first perimeter correction}

The integral in \eqref{eq:subtractedmoment} is the analytic continuation of
$M_X(s)$ to $s=5/2$.  For example, splitting at $u=1$ gives
\[
 M_X(s)=\int_0^1u^{s-1}\cK_X(u)du
 +\int_1^\infty u^{s-1}[\cK_X(u)-d_X/u^2]du+\frac{d_X}{2-s},
\]
which immediately agrees with \eqref{eq:subtractedmoment} at $s=5/2$.
In class $A$, $M_A(s)=-\Gamma(s-2)$, so
$\mathcal M_A=-\sqrt\pi$.

For $\AiD$, use \eqref{eq:Aclosed} and the same $w$ representation of $E_1$.
At $s=5/2$ the continued expression is
\begin{equation}
 \mathcal M_{\AiD}=\sqrt\pi[-1+\sqrt2 J_*],
 \label{eq:MAIcontinued}
\end{equation}
where
\[
 J_*=\FP\int_1^\infty\frac{dw}{w}
 \left[\frac{(w-1)^{-3/2}+(w+1)^{-3/2}}2
 -(w-1)^{-1/2}+(w+1)^{-1/2}\right].
\]
Here $\FP$ denotes the finite part obtained by removing the explicit
inverse-square-root divergence at $w=1$; the preceding convergent
$u$-integral fixes this prescription uniquely.  The substitutions
$t=\sqrt{w-1}$ and $q=\sqrt{w+1}$ give the primitives
\[
 -\frac1t-3\arctan t,\qquad
 \frac1q+\frac32\log\frac{q-1}{q+1},
\]
respectively.  Their finite endpoint values yield
$J_*=-3\pi/2-1/\sqrt2+3\ell$.  Therefore
\[
 \mathcal M_{\AiD}
 =-2\sqrt\pi\left[1+\frac{3\pi}{2\sqrt2}
 -\frac{3\ell}{\sqrt2}\right].
\]

For $\AiiD$ the $\operatorname{Shi}$ representation gives, initially in the
convergence strip,
\begin{align}
 M_{\AiiD}(s)={}&\frac{\Gamma(s-2)}4
 \int_0^1\frac{(1-t)^{2-s}-(1+t)^{2-s}}t\,dt\nonumber\\
 &+\frac{\Gamma(s-1)}4
 \int_0^1\frac{(1-t)^{1-s}-(1+t)^{1-s}}t\,dt
 -\frac{2^{2-s}}4\Gamma(s-2).
 \label{eq:MAIIcontinued}
\end{align}
At $s=5/2$, the first parameter integral is $2\ell$.  The finite part of
the second is $2\ell-\sqrt2$: subtract $2/\sqrt\varepsilon$ after cutting
the upper endpoint at $t=1-\varepsilon$, and use
$2/w+\log|(w-1)/(w+1)|$ with $w=\sqrt{1\pm t}$.
It follows that
\[
 \mathcal M_{\AiiD}=\frac{\sqrt\pi}{4}(3\ell-\sqrt2).
\]
Dividing the three subtracted moments by $16\pi$ proves
\eqref{eq:betas}.  Numerical values are
$\beta_A=-0.03526184897\ldots$,
$\beta_{\AiD}=-0.17366365293\ldots$, and
$\beta_{\AiiD}=0.01084220047\ldots$.
Direct integration of \eqref{eq:subtractedmoment}, using $u=t^2$ and the
algebraic tails beyond a large cutoff, reproduces these values.

\end{document}